\algnewcommand\algorithmicinput{\textbf{def}}
\algnewcommand\def{\item[\algorithmicinput]}
    \def\BState{\State\hskip-\ALG@thistlm}
\def\blx@bblfile@biber{%
  \blx@secinit
  \begingroup
  \blx@bblstart

%
%
\begingroup
\makeatletter
\@ifundefined{ver@biblatex.sty}
  {\@latex@error
     {Missing 'biblatex' package}
     {The bibliography requires the 'biblatex' package.}
      \aftergroup\endinput}
  {}
\endgroup

\refsection{0}
  \datalist[entry]{none/global//global/global}
    \entry{farach1995robust}{article}{}
      \name{author}{3}{}{%
        {{hash=669c37da94b40825c1dbb64ced8543b4}{%
           family={Farach},
           familyi={F\bibinitperiod},
           given={Martin},
           giveni={M\bibinitperiod}}}%
        {{hash=2019ea0bc57805287a6fc1a6117f4148}{%
           family={Kannan},
           familyi={K\bibinitperiod},
           given={Sampath},
           giveni={S\bibinitperiod}}}%
        {{hash=c92fc9c5bf250ca59a45222ed78f3fe2}{%
           family={Warnow},
           familyi={W\bibinitperiod},
           given={Tandy},
           giveni={T\bibinitperiod}}}%
      }
      \list{publisher}{1}{%
        {Springer}%
      }
      \strng{namehash}{57f59c004970c96e58e07ddc53bf00cf}
      \strng{fullhash}{57f59c004970c96e58e07ddc53bf00cf}
      \strng{bibnamehash}{57f59c004970c96e58e07ddc53bf00cf}
      \strng{authorbibnamehash}{57f59c004970c96e58e07ddc53bf00cf}
      \strng{authornamehash}{57f59c004970c96e58e07ddc53bf00cf}
      \strng{authorfullhash}{57f59c004970c96e58e07ddc53bf00cf}
      \field{sortinit}{1}
      \field{sortinithash}{4f6aaa89bab872aa0999fec09ff8e98a}
      \field{labelnamesource}{author}
      \field{labeltitlesource}{title}
      \field{journaltitle}{Algorithmica}
      \field{number}{1}
      \field{title}{A robust model for finding optimal evolutionary trees}
      \field{volume}{13}
      \field{year}{1995}
      \field{pages}{155\bibrangedash 179}
      \range{pages}{25}
    \endentry
    \entry{ailon2005fitting}{inproceedings}{}
      \name{author}{2}{}{%
        {{hash=bd6fd6543c06c914ab26404270670ced}{%
           family={Ailon},
           familyi={A\bibinitperiod},
           given={Nir},
           giveni={N\bibinitperiod}}}%
        {{hash=e1b3700a6d5248ffa34baab20bc00fa5}{%
           family={Charikar},
           familyi={C\bibinitperiod},
           given={Moses},
           giveni={M\bibinitperiod}}}%
      }
      \list{organization}{1}{%
        {IEEE}%
      }
      \strng{namehash}{62c56294fa8ad9406456e7b2ca31ff79}
      \strng{fullhash}{62c56294fa8ad9406456e7b2ca31ff79}
      \strng{bibnamehash}{62c56294fa8ad9406456e7b2ca31ff79}
      \strng{authorbibnamehash}{62c56294fa8ad9406456e7b2ca31ff79}
      \strng{authornamehash}{62c56294fa8ad9406456e7b2ca31ff79}
      \strng{authorfullhash}{62c56294fa8ad9406456e7b2ca31ff79}
      \field{sortinit}{1}
      \field{sortinithash}{4f6aaa89bab872aa0999fec09ff8e98a}
      \field{labelnamesource}{author}
      \field{labeltitlesource}{title}
      \field{booktitle}{46th Annual IEEE Symposium on Foundations of Computer Science (FOCS'05)}
      \field{title}{Fitting tree metrics: Hierarchical clustering and phylogeny}
      \field{year}{2005}
      \field{pages}{73\bibrangedash 82}
      \range{pages}{10}
    \endentry
    \entry{cavalli1967phylogenetic}{article}{}
      \name{author}{2}{}{%
        {{hash=a490f2231bd5c6f1532fe1b64391f156}{%
           family={Cavalli-Sforza},
           familyi={C\bibinithyphendelim S\bibinitperiod},
           given={Luigi\bibnamedelima L},
           giveni={L\bibinitperiod\bibinitdelim L\bibinitperiod}}}%
        {{hash=3377a8bda40c2f8f48606828c7dfb213}{%
           family={Edwards},
           familyi={E\bibinitperiod},
           given={Anthony\bibnamedelima WF},
           giveni={A\bibinitperiod\bibinitdelim W\bibinitperiod}}}%
      }
      \list{publisher}{1}{%
        {Elsevier}%
      }
      \strng{namehash}{c5b80b9b2a03f4c9fc5618742b91e8f4}
      \strng{fullhash}{c5b80b9b2a03f4c9fc5618742b91e8f4}
      \strng{bibnamehash}{c5b80b9b2a03f4c9fc5618742b91e8f4}
      \strng{authorbibnamehash}{c5b80b9b2a03f4c9fc5618742b91e8f4}
      \strng{authornamehash}{c5b80b9b2a03f4c9fc5618742b91e8f4}
      \strng{authorfullhash}{c5b80b9b2a03f4c9fc5618742b91e8f4}
      \field{sortinit}{1}
      \field{sortinithash}{4f6aaa89bab872aa0999fec09ff8e98a}
      \field{labelnamesource}{author}
      \field{labeltitlesource}{title}
      \field{journaltitle}{American journal of human genetics}
      \field{number}{3 Pt 1}
      \field{title}{Phylogenetic analysis. Models and estimation procedures}
      \field{volume}{19}
      \field{year}{1967}
      \field{pages}{233}
      \range{pages}{1}
    \endentry
    \entry{cohen2022fitting}{inproceedings}{}
      \name{author}{5}{}{%
        {{hash=981ea6e679476bf8ab3a03c26a11268f}{%
           family={Cohen-Addad},
           familyi={C\bibinithyphendelim A\bibinitperiod},
           given={Vincent},
           giveni={V\bibinitperiod}}}%
        {{hash=5579ec1a457eb379ba0ba9cc63107020}{%
           family={Das},
           familyi={D\bibinitperiod},
           given={Debarati},
           giveni={D\bibinitperiod}}}%
        {{hash=01d9882de610be1e5b80af9da83e81e0}{%
           family={Kipouridis},
           familyi={K\bibinitperiod},
           given={Evangelos},
           giveni={E\bibinitperiod}}}%
        {{hash=f1241fb5ab742136dae871adf0f70318}{%
           family={Parotsidis},
           familyi={P\bibinitperiod},
           given={Nikos},
           giveni={N\bibinitperiod}}}%
        {{hash=da8de8904d487b343a6b7f3a39674fb6}{%
           family={Thorup},
           familyi={T\bibinitperiod},
           given={Mikkel},
           giveni={M\bibinitperiod}}}%
      }
      \list{organization}{1}{%
        {IEEE}%
      }
      \strng{namehash}{70243d32611be34569cb5d290bd8281e}
      \strng{fullhash}{5705622ee493975257516813ef92504a}
      \strng{bibnamehash}{5705622ee493975257516813ef92504a}
      \strng{authorbibnamehash}{5705622ee493975257516813ef92504a}
      \strng{authornamehash}{70243d32611be34569cb5d290bd8281e}
      \strng{authorfullhash}{5705622ee493975257516813ef92504a}
      \field{sortinit}{1}
      \field{sortinithash}{4f6aaa89bab872aa0999fec09ff8e98a}
      \field{labelnamesource}{author}
      \field{labeltitlesource}{title}
      \field{booktitle}{2021 IEEE 62nd Annual Symposium on Foundations of Computer Science (FOCS)}
      \field{title}{Fitting distances by tree metrics minimizing the total error within a constant factor}
      \field{year}{2022}
      \field{pages}{468\bibrangedash 479}
      \range{pages}{12}
    \endentry
    \entry{gromov1987hyperbolic}{incollection}{}
      \name{author}{1}{}{%
        {{hash=29614b28c97cd314c5624d6dde3ea402}{%
           family={Gromov},
           familyi={G\bibinitperiod},
           given={Mikhael},
           giveni={M\bibinitperiod}}}%
      }
      \list{publisher}{1}{%
        {Springer}%
      }
      \strng{namehash}{29614b28c97cd314c5624d6dde3ea402}
      \strng{fullhash}{29614b28c97cd314c5624d6dde3ea402}
      \strng{bibnamehash}{29614b28c97cd314c5624d6dde3ea402}
      \strng{authorbibnamehash}{29614b28c97cd314c5624d6dde3ea402}
      \strng{authornamehash}{29614b28c97cd314c5624d6dde3ea402}
      \strng{authorfullhash}{29614b28c97cd314c5624d6dde3ea402}
      \field{sortinit}{1}
      \field{sortinithash}{4f6aaa89bab872aa0999fec09ff8e98a}
      \field{labelnamesource}{author}
      \field{labeltitlesource}{title}
      \field{booktitle}{Essays in group theory}
      \field{title}{Hyperbolic groups}
      \field{year}{1987}
      \field{pages}{75\bibrangedash 263}
      \range{pages}{189}
    \endentry
    \entry{chami2019hyperbolic}{article}{}
      \name{author}{4}{}{%
        {{hash=d1ca864b65f9e9500c7691ec5cf4ad5b}{%
           family={Chami},
           familyi={C\bibinitperiod},
           given={Ines},
           giveni={I\bibinitperiod}}}%
        {{hash=5f21207e4436b16d3a0577038244f9b9}{%
           family={Ying},
           familyi={Y\bibinitperiod},
           given={Zhitao},
           giveni={Z\bibinitperiod}}}%
        {{hash=eb1f5a2ba08ea12db36c871fe601f5f1}{%
           family={Ré},
           familyi={R\bibinitperiod},
           given={Christopher},
           giveni={C\bibinitperiod}}}%
        {{hash=900d107125ff0ca84698cb909e4f6c51}{%
           family={Leskovec},
           familyi={L\bibinitperiod},
           given={Jure},
           giveni={J\bibinitperiod}}}%
      }
      \strng{namehash}{cca3539f173966432d698606c7af7588}
      \strng{fullhash}{b03d273a1236658dadbaa2d13a66ac76}
      \strng{bibnamehash}{b03d273a1236658dadbaa2d13a66ac76}
      \strng{authorbibnamehash}{b03d273a1236658dadbaa2d13a66ac76}
      \strng{authornamehash}{cca3539f173966432d698606c7af7588}
      \strng{authorfullhash}{b03d273a1236658dadbaa2d13a66ac76}
      \field{sortinit}{2}
      \field{sortinithash}{8b555b3791beccb63322c22f3320aa9a}
      \field{labelnamesource}{author}
      \field{labeltitlesource}{title}
      \field{journaltitle}{Advances in neural information processing systems}
      \field{title}{Hyperbolic graph convolutional neural networks}
      \field{volume}{32}
      \field{year}{2019}
    \endentry
    \entry{agarwala1998approximability}{article}{}
      \name{author}{5}{}{%
        {{hash=42ad90bcbcd12f05e210f866e0aa7179}{%
           family={Agarwala},
           familyi={A\bibinitperiod},
           given={Richa},
           giveni={R\bibinitperiod}}}%
        {{hash=2a9a73d9d465079f806c59f4e41daf13}{%
           family={Bafna},
           familyi={B\bibinitperiod},
           given={Vineet},
           giveni={V\bibinitperiod}}}%
        {{hash=669c37da94b40825c1dbb64ced8543b4}{%
           family={Farach},
           familyi={F\bibinitperiod},
           given={Martin},
           giveni={M\bibinitperiod}}}%
        {{hash=024ca854fb686f431837a2bdd05105cf}{%
           family={Paterson},
           familyi={P\bibinitperiod},
           given={Mike},
           giveni={M\bibinitperiod}}}%
        {{hash=da8de8904d487b343a6b7f3a39674fb6}{%
           family={Thorup},
           familyi={T\bibinitperiod},
           given={Mikkel},
           giveni={M\bibinitperiod}}}%
      }
      \list{publisher}{1}{%
        {SIAM}%
      }
      \strng{namehash}{4fe2541a5c31199525032e393ede5443}
      \strng{fullhash}{29c277adb33c6303e9da7a4afc20c870}
      \strng{bibnamehash}{29c277adb33c6303e9da7a4afc20c870}
      \strng{authorbibnamehash}{29c277adb33c6303e9da7a4afc20c870}
      \strng{authornamehash}{4fe2541a5c31199525032e393ede5443}
      \strng{authorfullhash}{29c277adb33c6303e9da7a4afc20c870}
      \field{sortinit}{3}
      \field{sortinithash}{ad6fe7482ffbd7b9f99c9e8b5dccd3d7}
      \field{labelnamesource}{author}
      \field{labeltitlesource}{title}
      \field{journaltitle}{SIAM Journal on Computing}
      \field{number}{3}
      \field{title}{On the approximability of numerical taxonomy (fitting distances by tree metrics)}
      \field{volume}{28}
      \field{year}{1998}
      \field{pages}{1073\bibrangedash 1085}
      \range{pages}{13}
    \endentry
    \entry{saitou1987neighbor}{article}{}
      \name{author}{2}{}{%
        {{hash=d209c6887fb04e6d5dd3db63253f1b7f}{%
           family={Saitou},
           familyi={S\bibinitperiod},
           given={Naruya},
           giveni={N\bibinitperiod}}}%
        {{hash=f7e54938018be38227393edcb9566fca}{%
           family={Nei},
           familyi={N\bibinitperiod},
           given={Masatoshi},
           giveni={M\bibinitperiod}}}%
      }
      \strng{namehash}{b66d3cee519c6d1848824865d97a985a}
      \strng{fullhash}{b66d3cee519c6d1848824865d97a985a}
      \strng{bibnamehash}{b66d3cee519c6d1848824865d97a985a}
      \strng{authorbibnamehash}{b66d3cee519c6d1848824865d97a985a}
      \strng{authornamehash}{b66d3cee519c6d1848824865d97a985a}
      \strng{authorfullhash}{b66d3cee519c6d1848824865d97a985a}
      \field{sortinit}{7}
      \field{sortinithash}{108d0be1b1bee9773a1173443802c0a3}
      \field{labelnamesource}{author}
      \field{labeltitlesource}{title}
      \field{journaltitle}{Molecular biology and evolution}
      \field{number}{4}
      \field{title}{The neighbor-joining method: a new method for reconstructing phylogenetic trees.}
      \field{volume}{4}
      \field{year}{1987}
      \field{pages}{406\bibrangedash 425}
      \range{pages}{20}
    \endentry
    \entry{sonthalia2020tree}{article}{}
      \name{author}{2}{}{%
        {{hash=f16ff8aa2f3e90620399a151c8c165d9}{%
           family={Sonthalia},
           familyi={S\bibinitperiod},
           given={Rishi},
           giveni={R\bibinitperiod}}}%
        {{hash=9a9b72037126ed3fd8eca3a91d5a97c6}{%
           family={Gilbert},
           familyi={G\bibinitperiod},
           given={Anna\bibnamedelima C},
           giveni={A\bibinitperiod\bibinitdelim C\bibinitperiod}}}%
      }
      \strng{namehash}{6f8b8ff06d2c9ca218745babfddb2516}
      \strng{fullhash}{6f8b8ff06d2c9ca218745babfddb2516}
      \strng{bibnamehash}{6f8b8ff06d2c9ca218745babfddb2516}
      \strng{authorbibnamehash}{6f8b8ff06d2c9ca218745babfddb2516}
      \strng{authornamehash}{6f8b8ff06d2c9ca218745babfddb2516}
      \strng{authorfullhash}{6f8b8ff06d2c9ca218745babfddb2516}
      \field{sortinit}{8}
      \field{sortinithash}{a231b008ebf0ecbe0b4d96dcc159445f}
      \field{labelnamesource}{author}
      \field{labeltitlesource}{title}
      \field{journaltitle}{arXiv preprint arXiv:2005.03847}
      \field{title}{Tree! i am no tree! i am a low dimensional hyperbolic embedding}
      \field{year}{2020}
    \endentry
    \entry{chatterjee2021average}{article}{}
      \name{author}{2}{}{%
        {{hash=225d0c64bc582e380c1334293c4db1a3}{%
           family={Chatterjee},
           familyi={C\bibinitperiod},
           given={Sourav},
           giveni={S\bibinitperiod}}}%
        {{hash=d826b3411bbd0ff796311a90f5d0b6f3}{%
           family={Sloman},
           familyi={S\bibinitperiod},
           given={Leila},
           giveni={L\bibinitperiod}}}%
      }
      \list{publisher}{1}{%
        {Elsevier}%
      }
      \strng{namehash}{18dc2798761a30bd16944ddb621bf28f}
      \strng{fullhash}{18dc2798761a30bd16944ddb621bf28f}
      \strng{bibnamehash}{18dc2798761a30bd16944ddb621bf28f}
      \strng{authorbibnamehash}{18dc2798761a30bd16944ddb621bf28f}
      \strng{authornamehash}{18dc2798761a30bd16944ddb621bf28f}
      \strng{authorfullhash}{18dc2798761a30bd16944ddb621bf28f}
      \field{sortinit}{1}
      \field{sortinithash}{4f6aaa89bab872aa0999fec09ff8e98a}
      \field{labelnamesource}{author}
      \field{labeltitlesource}{title}
      \field{journaltitle}{Advances in Mathematics}
      \field{title}{Average Gromov hyperbolicity and the Parisi ansatz}
      \field{volume}{376}
      \field{year}{2021}
      \field{pages}{107417}
      \range{pages}{1}
    \endentry
    \entry{ghys1990espaces}{incollection}{}
      \name{author}{2}{}{%
        {{hash=6e98095865c012caf410f15f0e506f38}{%
           family={Ghys},
           familyi={G\bibinitperiod},
           given={Étienne},
           giveni={É\bibinitperiod}}}%
        {{hash=10674892728f6d230864dd9c7539c248}{%
           family={De\bibnamedelima La\bibnamedelima Harpe},
           familyi={D\bibinitperiod\bibinitdelim L\bibinitperiod\bibinitdelim H\bibinitperiod},
           given={Pierre},
           giveni={P\bibinitperiod}}}%
      }
      \list{publisher}{1}{%
        {Springer}%
      }
      \strng{namehash}{355c5531d1c5c21d2ab8c0f88c5b8523}
      \strng{fullhash}{355c5531d1c5c21d2ab8c0f88c5b8523}
      \strng{bibnamehash}{355c5531d1c5c21d2ab8c0f88c5b8523}
      \strng{authorbibnamehash}{355c5531d1c5c21d2ab8c0f88c5b8523}
      \strng{authornamehash}{355c5531d1c5c21d2ab8c0f88c5b8523}
      \strng{authorfullhash}{355c5531d1c5c21d2ab8c0f88c5b8523}
      \field{sortinit}{1}
      \field{sortinithash}{4f6aaa89bab872aa0999fec09ff8e98a}
      \field{labelnamesource}{author}
      \field{labeltitlesource}{title}
      \field{booktitle}{Sur les groupes hyperboliques d’après Mikhael Gromov}
      \field{title}{Espaces métriques hyperboliques}
      \field{year}{1990}
      \field{pages}{27\bibrangedash 45}
      \range{pages}{19}
    \endentry
    \entry{harb2005approximating}{inproceedings}{}
      \name{author}{3}{}{%
        {{hash=66ce8a4d89a3e30c6ed507af396c5375}{%
           family={Harb},
           familyi={H\bibinitperiod},
           given={Boulos},
           giveni={B\bibinitperiod}}}%
        {{hash=2019ea0bc57805287a6fc1a6117f4148}{%
           family={Kannan},
           familyi={K\bibinitperiod},
           given={Sampath},
           giveni={S\bibinitperiod}}}%
        {{hash=7d529d3f9f778849a51ea6004c039ea8}{%
           family={McGregor},
           familyi={M\bibinitperiod},
           given={Andrew},
           giveni={A\bibinitperiod}}}%
      }
      \list{organization}{1}{%
        {Springer}%
      }
      \strng{namehash}{988ff17911d75f56094b805e3eb6f728}
      \strng{fullhash}{988ff17911d75f56094b805e3eb6f728}
      \strng{bibnamehash}{988ff17911d75f56094b805e3eb6f728}
      \strng{authorbibnamehash}{988ff17911d75f56094b805e3eb6f728}
      \strng{authornamehash}{988ff17911d75f56094b805e3eb6f728}
      \strng{authorfullhash}{988ff17911d75f56094b805e3eb6f728}
      \field{sortinit}{1}
      \field{sortinithash}{4f6aaa89bab872aa0999fec09ff8e98a}
      \field{labelnamesource}{author}
      \field{labeltitlesource}{title}
      \field{booktitle}{Approximation, Randomization and Combinatorial Optimization. Algorithms and Techniques: 8th International Workshop on Approximation Algorithms for Combinatorial Optimization Problems, APPROX 2005 and 9th International Workshop on Randomization and Computation, RANDOM 2005, Berkeley, CA, USA, August 22-24, 2005. Proceedings}
      \field{title}{Approximating the best-fit tree under l p norms}
      \field{year}{2005}
      \field{pages}{123\bibrangedash 133}
      \range{pages}{11}
    \endentry
    \entry{chowdhury2016improved}{article}{}
      \name{author}{3}{}{%
        {{hash=aa0d4fd35b514599b1a6f137048c30ba}{%
           family={Chowdhury},
           familyi={C\bibinitperiod},
           given={Samir},
           giveni={S\bibinitperiod}}}%
        {{hash=2fd11e676a24c6aabdc0a0823f559d23}{%
           family={Mémoli},
           familyi={M\bibinitperiod},
           given={Facundo},
           giveni={F\bibinitperiod}}}%
        {{hash=004cc00bfa4c5510f22a7bb501182b9b}{%
           family={Smith},
           familyi={S\bibinitperiod},
           given={Zane\bibnamedelima T},
           giveni={Z\bibinitperiod\bibinitdelim T\bibinitperiod}}}%
      }
      \strng{namehash}{722d38a3c249799b0ae0ca7444f25243}
      \strng{fullhash}{722d38a3c249799b0ae0ca7444f25243}
      \strng{bibnamehash}{722d38a3c249799b0ae0ca7444f25243}
      \strng{authorbibnamehash}{722d38a3c249799b0ae0ca7444f25243}
      \strng{authornamehash}{722d38a3c249799b0ae0ca7444f25243}
      \strng{authorfullhash}{722d38a3c249799b0ae0ca7444f25243}
      \field{sortinit}{1}
      \field{sortinithash}{4f6aaa89bab872aa0999fec09ff8e98a}
      \field{labelnamesource}{author}
      \field{labeltitlesource}{title}
      \field{journaltitle}{Advances in Neural Information Processing Systems}
      \field{title}{Improved error bounds for tree representations of metric spaces}
      \field{volume}{29}
      \field{year}{2016}
    \endentry
    \entry{nr2015}{inproceedings}{}
      \name{author}{2}{}{%
        {{hash=b532fc783eaa51cbc0ac370944eb54bb}{%
           family={Rossi},
           familyi={R\bibinitperiod},
           given={Ryan\bibnamedelima A.},
           giveni={R\bibinitperiod\bibinitdelim A\bibinitperiod}}}%
        {{hash=b29b659629068c77fb375c338f11a2b4}{%
           family={Ahmed},
           familyi={A\bibinitperiod},
           given={Nesreen\bibnamedelima K.},
           giveni={N\bibinitperiod\bibinitdelim K\bibinitperiod}}}%
      }
      \strng{namehash}{f2365b264ba81d1f80abdc24a6f742e1}
      \strng{fullhash}{f2365b264ba81d1f80abdc24a6f742e1}
      \strng{bibnamehash}{f2365b264ba81d1f80abdc24a6f742e1}
      \strng{authorbibnamehash}{f2365b264ba81d1f80abdc24a6f742e1}
      \strng{authornamehash}{f2365b264ba81d1f80abdc24a6f742e1}
      \strng{authorfullhash}{f2365b264ba81d1f80abdc24a6f742e1}
      \field{sortinit}{2}
      \field{sortinithash}{8b555b3791beccb63322c22f3320aa9a}
      \field{labelnamesource}{author}
      \field{labeltitlesource}{title}
      \field{booktitle}{AAAI}
      \field{title}{The Network Data Repository with Interactive Graph Analytics and Visualization}
      \field{year}{2015}
      \verb{urlraw}
      \verb https://networkrepository.com
      \endverb
      \verb{url}
      \verb https://networkrepository.com
      \endverb
    \endentry
    \entry{sen2008collective}{article}{}
      \name{author}{6}{}{%
        {{hash=b462b0dcebeb73e78540b0cc258613b9}{%
           family={Sen},
           familyi={S\bibinitperiod},
           given={Prithviraj},
           giveni={P\bibinitperiod}}}%
        {{hash=0a40abf756fcfe43cfcbbf85e2504a9b}{%
           family={Namata},
           familyi={N\bibinitperiod},
           given={Galileo},
           giveni={G\bibinitperiod}}}%
        {{hash=1361216ac451c77a7896730385bf27dd}{%
           family={Bilgic},
           familyi={B\bibinitperiod},
           given={Mustafa},
           giveni={M\bibinitperiod}}}%
        {{hash=88e86bf57e4d0841ba9010d52469237c}{%
           family={Getoor},
           familyi={G\bibinitperiod},
           given={Lise},
           giveni={L\bibinitperiod}}}%
        {{hash=0d29c0b0b805e82df53db0a742440359}{%
           family={Galligher},
           familyi={G\bibinitperiod},
           given={Brian},
           giveni={B\bibinitperiod}}}%
        {{hash=70d0c119ded502b95ee9a5df9d705934}{%
           family={Eliassi-Rad},
           familyi={E\bibinithyphendelim R\bibinitperiod},
           given={Tina},
           giveni={T\bibinitperiod}}}%
      }
      \strng{namehash}{4ddaecbce1c1da3c7293ab1a8d662285}
      \strng{fullhash}{6f79e74d61970d5b9a9ef095dc53ac0e}
      \strng{bibnamehash}{6f79e74d61970d5b9a9ef095dc53ac0e}
      \strng{authorbibnamehash}{6f79e74d61970d5b9a9ef095dc53ac0e}
      \strng{authornamehash}{4ddaecbce1c1da3c7293ab1a8d662285}
      \strng{authorfullhash}{6f79e74d61970d5b9a9ef095dc53ac0e}
      \field{sortinit}{2}
      \field{sortinithash}{8b555b3791beccb63322c22f3320aa9a}
      \field{labelnamesource}{author}
      \field{labeltitlesource}{title}
      \field{journaltitle}{AI magazine}
      \field{number}{3}
      \field{title}{Collective classification in network data}
      \field{volume}{29}
      \field{year}{2008}
      \field{pages}{93\bibrangedash 93}
      \range{pages}{1}
    \endentry
    \entry{elkin2008lower}{article}{}
      \name{author}{4}{}{%
        {{hash=5879032f87d6fc2528a738ea03797cb0}{%
           family={Elkin},
           familyi={E\bibinitperiod},
           given={Michael},
           giveni={M\bibinitperiod}}}%
        {{hash=dbfccddc5ab6b666a97182b141da11f5}{%
           family={Emek},
           familyi={E\bibinitperiod},
           given={Yuval},
           giveni={Y\bibinitperiod}}}%
        {{hash=b1c0dd1247f6bbeb095d045cd1ba62ea}{%
           family={Spielman},
           familyi={S\bibinitperiod},
           given={Daniel\bibnamedelima A},
           giveni={D\bibinitperiod\bibinitdelim A\bibinitperiod}}}%
        {{hash=eab5e9b1d76716e05dcaec0cb60b7d57}{%
           family={Teng},
           familyi={T\bibinitperiod},
           given={Shang-Hua},
           giveni={S\bibinithyphendelim H\bibinitperiod}}}%
      }
      \list{publisher}{1}{%
        {SIAM}%
      }
      \strng{namehash}{dca6640b51a9dd440cebc672604b34c3}
      \strng{fullhash}{324542c78cd4a39531b6192218f776f1}
      \strng{bibnamehash}{324542c78cd4a39531b6192218f776f1}
      \strng{authorbibnamehash}{324542c78cd4a39531b6192218f776f1}
      \strng{authornamehash}{dca6640b51a9dd440cebc672604b34c3}
      \strng{authorfullhash}{324542c78cd4a39531b6192218f776f1}
      \field{sortinit}{4}
      \field{sortinithash}{9381316451d1b9788675a07e972a12a7}
      \field{labelnamesource}{author}
      \field{labeltitlesource}{title}
      \field{journaltitle}{SIAM Journal on Computing}
      \field{number}{2}
      \field{title}{Lower-stretch spanning trees}
      \field{volume}{38}
      \field{year}{2008}
      \true{nocite}
      \field{pages}{608\bibrangedash 628}
      \range{pages}{21}
    \endentry
    \entry{ailon2008aggregating}{article}{}
      \name{author}{3}{}{%
        {{hash=bd6fd6543c06c914ab26404270670ced}{%
           family={Ailon},
           familyi={A\bibinitperiod},
           given={Nir},
           giveni={N\bibinitperiod}}}%
        {{hash=e1b3700a6d5248ffa34baab20bc00fa5}{%
           family={Charikar},
           familyi={C\bibinitperiod},
           given={Moses},
           giveni={M\bibinitperiod}}}%
        {{hash=5daf0a323f282b7cf59db8dad443e58d}{%
           family={Newman},
           familyi={N\bibinitperiod},
           given={Alantha},
           giveni={A\bibinitperiod}}}%
      }
      \list{publisher}{1}{%
        {ACM New York, NY, USA}%
      }
      \strng{namehash}{41f8fb9787813e81da58218e7eca38b6}
      \strng{fullhash}{41f8fb9787813e81da58218e7eca38b6}
      \strng{bibnamehash}{41f8fb9787813e81da58218e7eca38b6}
      \strng{authorbibnamehash}{41f8fb9787813e81da58218e7eca38b6}
      \strng{authornamehash}{41f8fb9787813e81da58218e7eca38b6}
      \strng{authorfullhash}{41f8fb9787813e81da58218e7eca38b6}
      \field{sortinit}{4}
      \field{sortinithash}{9381316451d1b9788675a07e972a12a7}
      \field{labelnamesource}{author}
      \field{labeltitlesource}{title}
      \field{journaltitle}{Journal of the ACM (JACM)}
      \field{number}{5}
      \field{title}{Aggregating inconsistent information: ranking and clustering}
      \field{volume}{55}
      \field{year}{2008}
      \true{nocite}
      \field{pages}{1\bibrangedash 27}
      \range{pages}{27}
    \endentry
    \entry{bowditch2006course}{article}{}
      \name{author}{1}{}{%
        {{hash=ad6acef0144621181b76ae3c889bd7c7}{%
           family={Bowditch},
           familyi={B\bibinitperiod},
           given={Brian\bibnamedelima H},
           giveni={B\bibinitperiod\bibinitdelim H\bibinitperiod}}}%
      }
      \list{publisher}{1}{%
        {Citeseer}%
      }
      \strng{namehash}{ad6acef0144621181b76ae3c889bd7c7}
      \strng{fullhash}{ad6acef0144621181b76ae3c889bd7c7}
      \strng{bibnamehash}{ad6acef0144621181b76ae3c889bd7c7}
      \strng{authorbibnamehash}{ad6acef0144621181b76ae3c889bd7c7}
      \strng{authornamehash}{ad6acef0144621181b76ae3c889bd7c7}
      \strng{authorfullhash}{ad6acef0144621181b76ae3c889bd7c7}
      \field{sortinit}{4}
      \field{sortinithash}{9381316451d1b9788675a07e972a12a7}
      \field{labelnamesource}{author}
      \field{labeltitlesource}{title}
      \field{title}{A course on geometric group theory.}
      \field{year}{2006}
    \endentry
    \entry{chepoi2008diameters}{inproceedings}{}
      \name{author}{5}{}{%
        {{hash=6f06806513b179a9de9d9dfcb3e036d0}{%
           family={Chepoi},
           familyi={C\bibinitperiod},
           given={Victor},
           giveni={V\bibinitperiod}}}%
        {{hash=9fb6d52caf66c39d149c8438322c0005}{%
           family={Dragan},
           familyi={D\bibinitperiod},
           given={Feodor},
           giveni={F\bibinitperiod}}}%
        {{hash=5f9c0f19de4693a7e095393ca849b353}{%
           family={Estellon},
           familyi={E\bibinitperiod},
           given={Bertrand},
           giveni={B\bibinitperiod}}}%
        {{hash=6066eff0cae33630382b42ae0aa54ca0}{%
           family={Habib},
           familyi={H\bibinitperiod},
           given={Michel},
           giveni={M\bibinitperiod}}}%
        {{hash=1975fdfde7652aab0e2e17e5613d2336}{%
           family={Vaxès},
           familyi={V\bibinitperiod},
           given={Yann},
           giveni={Y\bibinitperiod}}}%
      }
      \strng{namehash}{5c3cb4d615d56db94877a61faf411cd5}
      \strng{fullhash}{142e65e4521dde160ed4c08c47e82d0d}
      \strng{bibnamehash}{142e65e4521dde160ed4c08c47e82d0d}
      \strng{authorbibnamehash}{142e65e4521dde160ed4c08c47e82d0d}
      \strng{authornamehash}{5c3cb4d615d56db94877a61faf411cd5}
      \strng{authorfullhash}{142e65e4521dde160ed4c08c47e82d0d}
      \field{sortinit}{5}
      \field{sortinithash}{20e9b4b0b173788c5dace24730f47d8c}
      \field{labelnamesource}{author}
      \field{labeltitlesource}{title}
      \field{booktitle}{Proceedings of the twenty-fourth annual symposium on Computational geometry}
      \field{title}{Diameters, centers, and approximating trees of delta-hyperbolicgeodesic spaces and graphs}
      \field{year}{2008}
      \true{nocite}
      \field{pages}{59\bibrangedash 68}
      \range{pages}{10}
    \endentry
    \entry{chen2013hyperbolicity}{article}{}
      \name{author}{4}{}{%
        {{hash=06debc36f6b61a405e8f0da1c7ccd6ca}{%
           family={Chen},
           familyi={C\bibinitperiod},
           given={Wei},
           giveni={W\bibinitperiod}}}%
        {{hash=3915f891045404e69eefea1c0347eadd}{%
           family={Fang},
           familyi={F\bibinitperiod},
           given={Wenjie},
           giveni={W\bibinitperiod}}}%
        {{hash=6b3b66628d7f8590ce7a536b035cb636}{%
           family={Hu},
           familyi={H\bibinitperiod},
           given={Guangda},
           giveni={G\bibinitperiod}}}%
        {{hash=6428c5852da0dfbdd50e39418c9571bf}{%
           family={Mahoney},
           familyi={M\bibinitperiod},
           given={Michael\bibnamedelima W},
           giveni={M\bibinitperiod\bibinitdelim W\bibinitperiod}}}%
      }
      \list{publisher}{1}{%
        {Taylor \& Francis}%
      }
      \strng{namehash}{6ca7bf2df2cc07d5a90d5ffb5a57648e}
      \strng{fullhash}{e812a3ece2c4c5d4fc04eb372c6bcf70}
      \strng{bibnamehash}{e812a3ece2c4c5d4fc04eb372c6bcf70}
      \strng{authorbibnamehash}{e812a3ece2c4c5d4fc04eb372c6bcf70}
      \strng{authornamehash}{6ca7bf2df2cc07d5a90d5ffb5a57648e}
      \strng{authorfullhash}{e812a3ece2c4c5d4fc04eb372c6bcf70}
      \field{sortinit}{5}
      \field{sortinithash}{20e9b4b0b173788c5dace24730f47d8c}
      \field{labelnamesource}{author}
      \field{labeltitlesource}{title}
      \field{journaltitle}{Internet Mathematics}
      \field{number}{4}
      \field{title}{On the hyperbolicity of small-world and treelike random graphs}
      \field{volume}{9}
      \field{year}{2013}
      \true{nocite}
      \field{pages}{434\bibrangedash 491}
      \range{pages}{58}
    \endentry
    \entry{coudert2022computing}{inproceedings}{}
      \name{author}{3}{}{%
        {{hash=aa2bc0596930a8610dde1c0d31b7af81}{%
           family={Coudert},
           familyi={C\bibinitperiod},
           given={David},
           giveni={D\bibinitperiod}}}%
        {{hash=c1f680eb1e95b07fb89620bb30342818}{%
           family={Nusser},
           familyi={N\bibinitperiod},
           given={André},
           giveni={A\bibinitperiod}}}%
        {{hash=7b155e3e28d2d0fcea063e5b3c45bcaa}{%
           family={Viennot},
           familyi={V\bibinitperiod},
           given={Laurent},
           giveni={L\bibinitperiod}}}%
      }
      \list{organization}{1}{%
        {SIAM}%
      }
      \strng{namehash}{eb982ae38594e22c77c5e0a9d15d0c52}
      \strng{fullhash}{eb982ae38594e22c77c5e0a9d15d0c52}
      \strng{bibnamehash}{eb982ae38594e22c77c5e0a9d15d0c52}
      \strng{authorbibnamehash}{eb982ae38594e22c77c5e0a9d15d0c52}
      \strng{authornamehash}{eb982ae38594e22c77c5e0a9d15d0c52}
      \strng{authorfullhash}{eb982ae38594e22c77c5e0a9d15d0c52}
      \field{sortinit}{5}
      \field{sortinithash}{20e9b4b0b173788c5dace24730f47d8c}
      \field{labelnamesource}{author}
      \field{labeltitlesource}{title}
      \field{booktitle}{2022 Proceedings of the Symposium on Algorithm Engineering and Experiments (ALENEX)}
      \field{title}{Computing Graph Hyperbolicity Using Dominating Sets*}
      \field{year}{2022}
      \true{nocite}
      \field{pages}{78\bibrangedash 90}
      \range{pages}{13}
    \endentry
    \entry{sarkar2011low}{inproceedings}{}
      \name{author}{1}{}{%
        {{hash=c262497f4c911991f13a1cb357fe73ca}{%
           family={Sarkar},
           familyi={S\bibinitperiod},
           given={Rik},
           giveni={R\bibinitperiod}}}%
      }
      \list{organization}{1}{%
        {Springer}%
      }
      \strng{namehash}{c262497f4c911991f13a1cb357fe73ca}
      \strng{fullhash}{c262497f4c911991f13a1cb357fe73ca}
      \strng{bibnamehash}{c262497f4c911991f13a1cb357fe73ca}
      \strng{authorbibnamehash}{c262497f4c911991f13a1cb357fe73ca}
      \strng{authornamehash}{c262497f4c911991f13a1cb357fe73ca}
      \strng{authorfullhash}{c262497f4c911991f13a1cb357fe73ca}
      \field{sortinit}{5}
      \field{sortinithash}{20e9b4b0b173788c5dace24730f47d8c}
      \field{labelnamesource}{author}
      \field{labeltitlesource}{title}
      \field{booktitle}{International Symposium on Graph Drawing}
      \field{title}{Low distortion delaunay embedding of trees in hyperbolic plane}
      \field{year}{2011}
      \true{nocite}
      \field{pages}{355\bibrangedash 366}
      \range{pages}{12}
    \endentry
    \entry{sala2018representation}{inproceedings}{}
      \name{author}{4}{}{%
        {{hash=8019384e6bf3fbc3cd43a650ffbb371c}{%
           family={Sala},
           familyi={S\bibinitperiod},
           given={Frederic},
           giveni={F\bibinitperiod}}}%
        {{hash=9a6fe58c2939d6f32daca4ef7f0a3744}{%
           family={De\bibnamedelima Sa},
           familyi={D\bibinitperiod\bibinitdelim S\bibinitperiod},
           given={Chris},
           giveni={C\bibinitperiod}}}%
        {{hash=aa9ec56190423bb01ead50d96b94ba93}{%
           family={Gu},
           familyi={G\bibinitperiod},
           given={Albert},
           giveni={A\bibinitperiod}}}%
        {{hash=eb1f5a2ba08ea12db36c871fe601f5f1}{%
           family={Ré},
           familyi={R\bibinitperiod},
           given={Christopher},
           giveni={C\bibinitperiod}}}%
      }
      \list{organization}{1}{%
        {PMLR}%
      }
      \strng{namehash}{6e167b192e5411eeef52f1d9ce95b648}
      \strng{fullhash}{ec71b01db896e30c8313bc27e48a384b}
      \strng{bibnamehash}{ec71b01db896e30c8313bc27e48a384b}
      \strng{authorbibnamehash}{ec71b01db896e30c8313bc27e48a384b}
      \strng{authornamehash}{6e167b192e5411eeef52f1d9ce95b648}
      \strng{authorfullhash}{ec71b01db896e30c8313bc27e48a384b}
      \field{sortinit}{5}
      \field{sortinithash}{20e9b4b0b173788c5dace24730f47d8c}
      \field{labelnamesource}{author}
      \field{labeltitlesource}{title}
      \field{booktitle}{International conference on machine learning}
      \field{title}{Representation tradeoffs for hyperbolic embeddings}
      \field{year}{2018}
      \true{nocite}
      \field{pages}{4460\bibrangedash 4469}
      \range{pages}{10}
    \endentry
    \entry{chami2020trees}{article}{}
      \name{author}{4}{}{%
        {{hash=d1ca864b65f9e9500c7691ec5cf4ad5b}{%
           family={Chami},
           familyi={C\bibinitperiod},
           given={Ines},
           giveni={I\bibinitperiod}}}%
        {{hash=aa9ec56190423bb01ead50d96b94ba93}{%
           family={Gu},
           familyi={G\bibinitperiod},
           given={Albert},
           giveni={A\bibinitperiod}}}%
        {{hash=04622d86a5e3c06f34ffc92e4d70233a}{%
           family={Chatziafratis},
           familyi={C\bibinitperiod},
           given={Vaggos},
           giveni={V\bibinitperiod}}}%
        {{hash=eb1f5a2ba08ea12db36c871fe601f5f1}{%
           family={Ré},
           familyi={R\bibinitperiod},
           given={Christopher},
           giveni={C\bibinitperiod}}}%
      }
      \strng{namehash}{f0e8224d6612bc398e2fe708a9f13dfd}
      \strng{fullhash}{38b41b59bf32b4cdf82c5cf873c27e42}
      \strng{bibnamehash}{38b41b59bf32b4cdf82c5cf873c27e42}
      \strng{authorbibnamehash}{38b41b59bf32b4cdf82c5cf873c27e42}
      \strng{authornamehash}{f0e8224d6612bc398e2fe708a9f13dfd}
      \strng{authorfullhash}{38b41b59bf32b4cdf82c5cf873c27e42}
      \field{sortinit}{5}
      \field{sortinithash}{20e9b4b0b173788c5dace24730f47d8c}
      \field{labelnamesource}{author}
      \field{labeltitlesource}{title}
      \field{journaltitle}{Advances in Neural Information Processing Systems}
      \field{title}{From trees to continuous embeddings and back: Hyperbolic hierarchical clustering}
      \field{volume}{33}
      \field{year}{2020}
      \true{nocite}
      \field{pages}{15065\bibrangedash 15076}
      \range{pages}{12}
    \endentry
    \entry{monath2019gradient}{inproceedings}{}
      \name{author}{5}{}{%
        {{hash=871c0ae25626e3107949e96aebdec386}{%
           family={Monath},
           familyi={M\bibinitperiod},
           given={Nicholas},
           giveni={N\bibinitperiod}}}%
        {{hash=c4c46d2d2e84c4fd611f22d52ad6b782}{%
           family={Zaheer},
           familyi={Z\bibinitperiod},
           given={Manzil},
           giveni={M\bibinitperiod}}}%
        {{hash=06a4c70011243973cde45053a24b426b}{%
           family={Silva},
           familyi={S\bibinitperiod},
           given={Daniel},
           giveni={D\bibinitperiod}}}%
        {{hash=17d73a3a5be48993791cbe4db8855331}{%
           family={McCallum},
           familyi={M\bibinitperiod},
           given={Andrew},
           giveni={A\bibinitperiod}}}%
        {{hash=63ce7b48a93e8ce6d92c61fe6cd7b71a}{%
           family={Ahmed},
           familyi={A\bibinitperiod},
           given={Amr},
           giveni={A\bibinitperiod}}}%
      }
      \strng{namehash}{83d6957b88524efbbce5c4dbab5a23a3}
      \strng{fullhash}{6f092150880cebd721940faadeba0a19}
      \strng{bibnamehash}{6f092150880cebd721940faadeba0a19}
      \strng{authorbibnamehash}{6f092150880cebd721940faadeba0a19}
      \strng{authornamehash}{83d6957b88524efbbce5c4dbab5a23a3}
      \strng{authorfullhash}{6f092150880cebd721940faadeba0a19}
      \field{sortinit}{5}
      \field{sortinithash}{20e9b4b0b173788c5dace24730f47d8c}
      \field{labelnamesource}{author}
      \field{labeltitlesource}{title}
      \field{booktitle}{Proceedings of the 25th ACM SIGKDD International Conference on Knowledge Discovery \& Data Mining}
      \field{title}{Gradient-based hierarchical clustering using continuous representations of trees in hyperbolic space}
      \field{year}{2019}
      \true{nocite}
      \field{pages}{714\bibrangedash 722}
      \range{pages}{9}
    \endentry
    \entry{nickel2017poincare}{article}{}
      \name{author}{2}{}{%
        {{hash=b9c4a0982e0955e1a05c82922626e2a6}{%
           family={Nickel},
           familyi={N\bibinitperiod},
           given={Maximillian},
           giveni={M\bibinitperiod}}}%
        {{hash=ca01c762f4a52dd8a8548508f8d727c9}{%
           family={Kiela},
           familyi={K\bibinitperiod},
           given={Douwe},
           giveni={D\bibinitperiod}}}%
      }
      \strng{namehash}{51636ce85701f8c8a73031ee9bebbfbc}
      \strng{fullhash}{51636ce85701f8c8a73031ee9bebbfbc}
      \strng{bibnamehash}{51636ce85701f8c8a73031ee9bebbfbc}
      \strng{authorbibnamehash}{51636ce85701f8c8a73031ee9bebbfbc}
      \strng{authornamehash}{51636ce85701f8c8a73031ee9bebbfbc}
      \strng{authorfullhash}{51636ce85701f8c8a73031ee9bebbfbc}
      \field{extraname}{1}
      \field{sortinit}{5}
      \field{sortinithash}{20e9b4b0b173788c5dace24730f47d8c}
      \field{labelnamesource}{author}
      \field{labeltitlesource}{title}
      \field{journaltitle}{Advances in neural information processing systems}
      \field{title}{Poincaré embeddings for learning hierarchical representations}
      \field{volume}{30}
      \field{year}{2017}
      \true{nocite}
    \endentry
    \entry{linial1995geometry}{article}{}
      \name{author}{3}{}{%
        {{hash=3799f17b3ff4ed819996ef9220d1b820}{%
           family={Linial},
           familyi={L\bibinitperiod},
           given={Nathan},
           giveni={N\bibinitperiod}}}%
        {{hash=9db7441247b9b795ec3f38a5e5637ba0}{%
           family={London},
           familyi={L\bibinitperiod},
           given={Eran},
           giveni={E\bibinitperiod}}}%
        {{hash=be69a75eb5c416e6787856ef047fbe80}{%
           family={Rabinovich},
           familyi={R\bibinitperiod},
           given={Yuri},
           giveni={Y\bibinitperiod}}}%
      }
      \list{publisher}{1}{%
        {Springer}%
      }
      \strng{namehash}{6cfb2ae727f4edca81bde05eb0769371}
      \strng{fullhash}{6cfb2ae727f4edca81bde05eb0769371}
      \strng{bibnamehash}{6cfb2ae727f4edca81bde05eb0769371}
      \strng{authorbibnamehash}{6cfb2ae727f4edca81bde05eb0769371}
      \strng{authornamehash}{6cfb2ae727f4edca81bde05eb0769371}
      \strng{authorfullhash}{6cfb2ae727f4edca81bde05eb0769371}
      \field{sortinit}{6}
      \field{sortinithash}{b33bc299efb3c36abec520a4c896a66d}
      \field{labelnamesource}{author}
      \field{labeltitlesource}{title}
      \field{journaltitle}{Combinatorica}
      \field{number}{2}
      \field{title}{The geometry of graphs and some of its algorithmic applications}
      \field{volume}{15}
      \field{year}{1995}
      \true{nocite}
      \field{pages}{215\bibrangedash 245}
      \range{pages}{31}
    \endentry
    \entry{nickel2018learning}{inproceedings}{}
      \name{author}{2}{}{%
        {{hash=b9c4a0982e0955e1a05c82922626e2a6}{%
           family={Nickel},
           familyi={N\bibinitperiod},
           given={Maximillian},
           giveni={M\bibinitperiod}}}%
        {{hash=ca01c762f4a52dd8a8548508f8d727c9}{%
           family={Kiela},
           familyi={K\bibinitperiod},
           given={Douwe},
           giveni={D\bibinitperiod}}}%
      }
      \list{organization}{1}{%
        {PMLR}%
      }
      \strng{namehash}{51636ce85701f8c8a73031ee9bebbfbc}
      \strng{fullhash}{51636ce85701f8c8a73031ee9bebbfbc}
      \strng{bibnamehash}{51636ce85701f8c8a73031ee9bebbfbc}
      \strng{authorbibnamehash}{51636ce85701f8c8a73031ee9bebbfbc}
      \strng{authornamehash}{51636ce85701f8c8a73031ee9bebbfbc}
      \strng{authorfullhash}{51636ce85701f8c8a73031ee9bebbfbc}
      \field{extraname}{2}
      \field{sortinit}{6}
      \field{sortinithash}{b33bc299efb3c36abec520a4c896a66d}
      \field{labelnamesource}{author}
      \field{labeltitlesource}{title}
      \field{booktitle}{International Conference on Machine Learning}
      \field{title}{Learning continuous hierarchies in the lorentz model of hyperbolic geometry}
      \field{year}{2018}
      \true{nocite}
      \field{pages}{3779\bibrangedash 3788}
      \range{pages}{10}
    \endentry
    \entry{fournier2015computing}{article}{}
      \name{author}{3}{}{%
        {{hash=ace2b4ffd459b1fdfee2c66da5624d71}{%
           family={Fournier},
           familyi={F\bibinitperiod},
           given={Hervé},
           giveni={H\bibinitperiod}}}%
        {{hash=8a422a9be91cfa3df1c10868a2f71918}{%
           family={Ismail},
           familyi={I\bibinitperiod},
           given={Anas},
           giveni={A\bibinitperiod}}}%
        {{hash=4a904b5ce113b728f518d5329fd84a62}{%
           family={Vigneron},
           familyi={V\bibinitperiod},
           given={Antoine},
           giveni={A\bibinitperiod}}}%
      }
      \list{publisher}{1}{%
        {Elsevier}%
      }
      \strng{namehash}{f08971738226a9b769961f11e5ea99a0}
      \strng{fullhash}{f08971738226a9b769961f11e5ea99a0}
      \strng{bibnamehash}{f08971738226a9b769961f11e5ea99a0}
      \strng{authorbibnamehash}{f08971738226a9b769961f11e5ea99a0}
      \strng{authornamehash}{f08971738226a9b769961f11e5ea99a0}
      \strng{authorfullhash}{f08971738226a9b769961f11e5ea99a0}
      \field{sortinit}{6}
      \field{sortinithash}{b33bc299efb3c36abec520a4c896a66d}
      \field{labelnamesource}{author}
      \field{labeltitlesource}{title}
      \field{journaltitle}{Information Processing Letters}
      \field{number}{6-8}
      \field{title}{Computing the Gromov hyperbolicity of a discrete metric space}
      \field{volume}{115}
      \field{year}{2015}
      \true{nocite}
      \field{pages}{576\bibrangedash 579}
      \range{pages}{4}
    \endentry
    \entry{agarwal2018computing}{article}{}
      \name{author}{5}{}{%
        {{hash=689f1a8618a49eec825831f3f2f7caca}{%
           family={Agarwal},
           familyi={A\bibinitperiod},
           given={Pankaj\bibnamedelima K},
           giveni={P\bibinitperiod\bibinitdelim K\bibinitperiod}}}%
        {{hash=448803cd964681d920f1a7e59b6c7a90}{%
           family={Fox},
           familyi={F\bibinitperiod},
           given={Kyle},
           giveni={K\bibinitperiod}}}%
        {{hash=90a72f60009c238af02516956ec9bae1}{%
           family={Nath},
           familyi={N\bibinitperiod},
           given={Abhinandan},
           giveni={A\bibinitperiod}}}%
        {{hash=d1c90ebf509da5da2f408206aab64eb0}{%
           family={Sidiropoulos},
           familyi={S\bibinitperiod},
           given={Anastasios},
           giveni={A\bibinitperiod}}}%
        {{hash=fed5f88670059b51be0c9cbe3b404b5e}{%
           family={Wang},
           familyi={W\bibinitperiod},
           given={Yusu},
           giveni={Y\bibinitperiod}}}%
      }
      \list{publisher}{1}{%
        {ACM New York, NY, USA}%
      }
      \strng{namehash}{62c26d0802491462220d58e6f2782b9a}
      \strng{fullhash}{74325f870d3aa260e93c45be02ffd9b4}
      \strng{bibnamehash}{74325f870d3aa260e93c45be02ffd9b4}
      \strng{authorbibnamehash}{74325f870d3aa260e93c45be02ffd9b4}
      \strng{authornamehash}{62c26d0802491462220d58e6f2782b9a}
      \strng{authorfullhash}{74325f870d3aa260e93c45be02ffd9b4}
      \field{sortinit}{6}
      \field{sortinithash}{b33bc299efb3c36abec520a4c896a66d}
      \field{labelnamesource}{author}
      \field{labeltitlesource}{title}
      \field{journaltitle}{ACM Transactions on Algorithms (TALG)}
      \field{number}{2}
      \field{title}{Computing the Gromov-Hausdorff distance for metric trees}
      \field{volume}{14}
      \field{year}{2018}
      \true{nocite}
      \field{pages}{1\bibrangedash 20}
      \range{pages}{20}
    \endentry
    \entry{nica2016strong}{article}{}
      \name{author}{2}{}{%
        {{hash=ab90cf25c878a778307d32e1156671c4}{%
           family={Nica},
           familyi={N\bibinitperiod},
           given={Bogdan},
           giveni={B\bibinitperiod}}}%
        {{hash=1a70238b1cbed8056ae40a676c9044ca}{%
           family={Špakula},
           familyi={Š\bibinitperiod},
           given={Ján},
           giveni={J\bibinitperiod}}}%
      }
      \strng{namehash}{a4d07083d0bb79c33651ea035ba1984a}
      \strng{fullhash}{a4d07083d0bb79c33651ea035ba1984a}
      \strng{bibnamehash}{a4d07083d0bb79c33651ea035ba1984a}
      \strng{authorbibnamehash}{a4d07083d0bb79c33651ea035ba1984a}
      \strng{authornamehash}{a4d07083d0bb79c33651ea035ba1984a}
      \strng{authorfullhash}{a4d07083d0bb79c33651ea035ba1984a}
      \field{sortinit}{6}
      \field{sortinithash}{b33bc299efb3c36abec520a4c896a66d}
      \field{labelnamesource}{author}
      \field{labeltitlesource}{title}
      \field{journaltitle}{Groups, Geometry, and Dynamics}
      \field{number}{3}
      \field{title}{Strong hyperbolicity}
      \field{volume}{10}
      \field{year}{2016}
      \true{nocite}
      \field{pages}{951\bibrangedash 964}
      \range{pages}{14}
    \endentry
    \entry{das2017geometry}{book}{}
      \name{author}{3}{}{%
        {{hash=e8a599adef250aa15654570ea714403a}{%
           family={Das},
           familyi={D\bibinitperiod},
           given={Tushar},
           giveni={T\bibinitperiod}}}%
        {{hash=b65239add433807be228cb9c5c8a80de}{%
           family={Simmons},
           familyi={S\bibinitperiod},
           given={David},
           giveni={D\bibinitperiod}}}%
        {{hash=3be1f029c77ad50f41e1f40aeef52274}{%
           family={Urbański},
           familyi={U\bibinitperiod},
           given={Mariusz},
           giveni={M\bibinitperiod}}}%
      }
      \list{publisher}{1}{%
        {American Mathematical Soc.}%
      }
      \strng{namehash}{73686cada2db2ddb853f991337b11f04}
      \strng{fullhash}{73686cada2db2ddb853f991337b11f04}
      \strng{bibnamehash}{73686cada2db2ddb853f991337b11f04}
      \strng{authorbibnamehash}{73686cada2db2ddb853f991337b11f04}
      \strng{authornamehash}{73686cada2db2ddb853f991337b11f04}
      \strng{authorfullhash}{73686cada2db2ddb853f991337b11f04}
      \field{sortinit}{6}
      \field{sortinithash}{b33bc299efb3c36abec520a4c896a66d}
      \field{labelnamesource}{author}
      \field{labeltitlesource}{title}
      \field{title}{Geometry and dynamics in Gromov hyperbolic metric spaces}
      \field{volume}{218}
      \field{year}{2017}
      \true{nocite}
    \endentry
    \entry{abraham2007reconstructing}{inproceedings}{}
      \name{author}{6}{}{%
        {{hash=b059098b1e91402a6a7c9fb7303127e6}{%
           family={Abraham},
           familyi={A\bibinitperiod},
           given={Ittai},
           giveni={I\bibinitperiod}}}%
        {{hash=16cc5bacac31dd38e93ce8cc617ba9a5}{%
           family={Balakrishnan},
           familyi={B\bibinitperiod},
           given={Mahesh},
           giveni={M\bibinitperiod}}}%
        {{hash=06b0d2204e3c0bd58a7482152b266209}{%
           family={Kuhn},
           familyi={K\bibinitperiod},
           given={Fabian},
           giveni={F\bibinitperiod}}}%
        {{hash=adef420ee25e4cdaac9d926cbe70b98d}{%
           family={Malkhi},
           familyi={M\bibinitperiod},
           given={Dahlia},
           giveni={D\bibinitperiod}}}%
        {{hash=9b54053d7bb6c8d44a17394567ed6be5}{%
           family={Ramasubramanian},
           familyi={R\bibinitperiod},
           given={Venugopalan},
           giveni={V\bibinitperiod}}}%
        {{hash=1754a40f9d600dea756c1dd1047ce170}{%
           family={Talwar},
           familyi={T\bibinitperiod},
           given={Kunal},
           giveni={K\bibinitperiod}}}%
      }
      \strng{namehash}{a24ce69c5af96a4cee1735295181fb91}
      \strng{fullhash}{23b6ee9feb5bc552e9c4bdc4efac8b41}
      \strng{bibnamehash}{23b6ee9feb5bc552e9c4bdc4efac8b41}
      \strng{authorbibnamehash}{23b6ee9feb5bc552e9c4bdc4efac8b41}
      \strng{authornamehash}{a24ce69c5af96a4cee1735295181fb91}
      \strng{authorfullhash}{23b6ee9feb5bc552e9c4bdc4efac8b41}
      \field{sortinit}{6}
      \field{sortinithash}{b33bc299efb3c36abec520a4c896a66d}
      \field{labelnamesource}{author}
      \field{labeltitlesource}{title}
      \field{booktitle}{Proceedings of the twenty-sixth annual ACM symposium on Principles of distributed computing}
      \field{title}{Reconstructing approximate tree metrics}
      \field{year}{2007}
      \true{nocite}
      \field{pages}{43\bibrangedash 52}
      \range{pages}{10}
    \endentry
    \entry{day1987computational}{article}{}
      \name{author}{1}{}{%
        {{hash=78620ae5a53bc3a3e2dcb7bb08a97c32}{%
           family={Day},
           familyi={D\bibinitperiod},
           given={William\bibnamedelima HE},
           giveni={W\bibinitperiod\bibinitdelim H\bibinitperiod}}}%
      }
      \list{publisher}{1}{%
        {Elsevier}%
      }
      \strng{namehash}{78620ae5a53bc3a3e2dcb7bb08a97c32}
      \strng{fullhash}{78620ae5a53bc3a3e2dcb7bb08a97c32}
      \strng{bibnamehash}{78620ae5a53bc3a3e2dcb7bb08a97c32}
      \strng{authorbibnamehash}{78620ae5a53bc3a3e2dcb7bb08a97c32}
      \strng{authornamehash}{78620ae5a53bc3a3e2dcb7bb08a97c32}
      \strng{authorfullhash}{78620ae5a53bc3a3e2dcb7bb08a97c32}
      \field{sortinit}{7}
      \field{sortinithash}{108d0be1b1bee9773a1173443802c0a3}
      \field{labelnamesource}{author}
      \field{labeltitlesource}{title}
      \field{journaltitle}{Bulletin of mathematical biology}
      \field{number}{4}
      \field{title}{Computational complexity of inferring phylogenies from dissimilarity matrices}
      \field{volume}{49}
      \field{year}{1987}
      \field{pages}{461\bibrangedash 467}
      \range{pages}{7}
    \endentry
    \entry{samet1984quadtree}{article}{}
      \name{author}{1}{}{%
        {{hash=78da52600a4b819c3df311320ebf2a24}{%
           family={Samet},
           familyi={S\bibinitperiod},
           given={Hanan},
           giveni={H\bibinitperiod}}}%
      }
      \list{publisher}{1}{%
        {ACM New York, NY, USA}%
      }
      \strng{namehash}{78da52600a4b819c3df311320ebf2a24}
      \strng{fullhash}{78da52600a4b819c3df311320ebf2a24}
      \strng{bibnamehash}{78da52600a4b819c3df311320ebf2a24}
      \strng{authorbibnamehash}{78da52600a4b819c3df311320ebf2a24}
      \strng{authornamehash}{78da52600a4b819c3df311320ebf2a24}
      \strng{authorfullhash}{78da52600a4b819c3df311320ebf2a24}
      \field{sortinit}{7}
      \field{sortinithash}{108d0be1b1bee9773a1173443802c0a3}
      \field{labelnamesource}{author}
      \field{labeltitlesource}{title}
      \field{journaltitle}{ACM Computing Surveys (CSUR)}
      \field{number}{2}
      \field{title}{The quadtree and related hierarchical data structures}
      \field{volume}{16}
      \field{year}{1984}
      \field{pages}{187\bibrangedash 260}
      \range{pages}{74}
    \endentry
  \enddatalist
\endrefsection
\endinput

  \blx@bblend
  \endgroup
  \csnumgdef{blx@labelnumber@\the\c@refsection}{0}}
\theoremstyle{plain}
\newtheorem{thm}{Theorem}[section] 
\newtheorem*{thm-non}{Theorem} 
\theoremstyle{definition}
\newtheorem{defn}[thm]{Definition} 
\newtheorem{lemm}[thm]{Lemma}
\newtheorem{prop}[thm]{Proposition}
\DeclareMathOperator{\Hyp}{Hyp}
\DeclareMathOperator{\UM}{UM}
\DeclareMathOperator{\AvgHyp}{AvgHyp}
\DeclareMathOperator{\AvgUM}{AvgUM}
\DeclareMathOperator*{\argmin}{argmin}
\newenvironment{claim}[1]{\par\noindent\textbf{Claim:}\space#1}{}
\newenvironment{claimproof}[1]{\par\noindent\textbf{Proof:}\space#1}{\leavevmode\unskip\penalty9999 \hbox{}\nobreak\hfill\quad\hbox{$\qed$}}
\begin{document}
\title{Fitting trees to $\ell_1$-hyperbolic distances}
\author{Joon-Hyeok Yim \and Anna C.~Gilbert}
\date{\today}

\maketitle

\begin{abstract}
Building trees to represent or to fit distances is a critical component of phylogenetic analysis, metric embeddings, approximation algorithms, geometric graph neural nets, and the analysis of hierarchical data. Much of the previous algorithmic work, however, has focused on generic metric spaces (i.e., those with no \emph{a priori} constraints). Leveraging several ideas from the mathematical analysis of hyperbolic geometry and geometric group theory, we study the tree fitting problem as finding the relation between the hyperbolicity (ultrametricity) vector and the error of tree (ultrametric) embedding. That is, we define a vector of hyperbolicity (ultrametric) values over all triples of points and compare the $\ell_p$ norms of this vector with the $\ell_q$ norm of the distortion of the best tree fit to the distances. This formulation allows us to define the average hyperbolicity (ultrametricity) in terms of a normalized $\ell_1$ norm of the hyperbolicity vector. Furthermore, we can interpret the classical tree fitting result of Gromov as a $p = q = \infty$ result. We present an algorithm \textsc{HCCRootedTreeFit} such that the $\ell_1$ error of the output embedding is analytically bounded in terms of the $\ell_1$ norm of the hyperbolicity vector (i.e., $p = q = 1$) and that this result is tight. Furthermore, this algorithm has significantly different theoretical and empirical performance as compared to Gromov's result and related algorithms. Finally, we show using \textsc{HCCRootedTreeFit} and related tree fitting algorithms, that supposedly standard data sets for hierarchical data analysis and geometric graph neural networks have radically different tree fits than those of synthetic, truly tree-like data sets, suggesting that a much more refined analysis of these standard data sets is called for.
\end{abstract}

\section{Introduction}

Constructing trees or ultrametrics to fit given data are both problems of great interest in scientific applications (e.g., phylogeny), algorithmic applications (optimal transport and Wasserstein distances are easier, for example, to compute quickly on trees), data visualization and analysis, and geometric machine learning. An ultrametric space is one in which the usual triangle inequality has been strengthened to $d(x,y) \leq \max \{ d(x,z), d(y,z) \}$. A hyperbolic metric space is one in which the metric relations amongst any four points are the same as they would be in a tree, up to the additive constant $\delta$. More generally, any finite subset of a hyperbolic space ``looks like'' a finite tree.

There has been a concerted effort to solve both of these problems in the algorithmic and machine learning communities, including~\cite{farach1995robust,ailon2005fitting,cavalli1967phylogenetic,cohen2022fitting,gromov1987hyperbolic} among many others. Indeed, the motivation for embedding into hyperbolic space or into trees was at the heart of the recent explosion in geometric graph neural networks~\cite{chami2019hyperbolic}.

As an optimization problem, finding the tree metric that minimizes the $\ell_p$ norm of the difference between the original distances and those on the tree (i.e., the distortion) is known to be NP-hard for most formulations. \cite{agarwala1998approximability} showed that it is APX-hard under the $\ell_\infty$ norm. The first positive result also came from \cite{agarwala1998approximability}, which provided a $3$-approximation algorithm and introduced a reduction technique from a tree metric to an ultrametric (a now widely used technique). The current best known result is an $O((\log n \log \log n)^{1/p})$ approximation for $1 < p < \infty$ \cite{ailon2005fitting}, and $O(1)$ approximation for $p = 1$, recently shown by \cite{cohen2022fitting}. Both papers exploit the hierarchical correlation clustering reduction method and its LP relaxation to derive an approximation algorithm. 

While these approximation results are fruitful, they are not so practical (the LP result uses an enormous number of variables and constraints). On the more practical side, \cite{saitou1987neighbor} provided a robust method, commonly known as \emph{Neighbor Join} (which can be computed in $O(n^3)$ time), for constructing a tree. Recently, \cite{sonthalia2020tree} proposed an $O(n^2)$ method known as \emph{TreeRep} for constructing a tree. Unfortunately, neither algorithm provides a guaranteed bound on the distortion.

The main drawback with all of these results is that they assume almost nothing about the underlying discrete point set, when, in fact, many real application data sets are close to hierarchical or nearly so. After all, why fit a tree to generic data only to get a bad approximation? In fact, perhaps with some geometric assumptions on our data set, we can fit a better tree metric or ultrametric, perhaps even more efficiently than for a general data set. 

Motivated by both Gromov's $\delta$-hyperbolicity~\cite{gromov1987hyperbolic} and the work of Chatterjee and Slonam~\cite{chatterjee2021average} on average hyperbolicity, we define proxy measures of how \emph{tree-like} a data set is. We note that \cite{gromov1987hyperbolic,ghys1990espaces} provide a simple algorithm and analysis to find a tree approximation for which the maximum distortion ($\ell_\infty$ norm) is bounded by $O(\delta \log n)$, where $\delta$ is the hyperbolicity constant. Moreover, this bound turns out to be the best order we can have. In this paper, we go beyond the simple notion of $\delta$-hyperbolicity to define a vector of hyperbolicity values $\mathbf{\Delta}(d)$ for a set of distance values. The various $\ell_p$ norms of this vector capture how tree-like a data set is. Then, we show that the $\ell_q$ norm of the distortion of the best fit tree and ultrametrics can be bounded in terms of this tree proxy. Thus, we give a new perspective on the tree fitting problem, use the geometric nature of the data set, and arrive at, hopefully, better and more efficient tree representations. The table below captures the relationship between the different hyperbolicity measures and the tree fit distortion. We note the striking symmetry in the tradeoffs.
\begin{table}
\begin{tabular}{ccc}
        \toprule
	\textbf{Hyperbolicity measure} & Value & Distortion \\
        \midrule
	Gromov's $\delta$-hyperbolicity & $\delta = \| \mathbf\Delta_x\|_\infty$ & $\| d - d_T \|_\infty = O(\delta \log n) = O(\|\mathbf\Delta_x\|_\infty \log n)$  \\
	Average hyperbolicity & $\delta = \frac{1}{\binom{n-1}{3}} \|\mathbf\Delta_x\|_1$ & $\|d - d_T\|_1 = O(\delta n^3) = O(\|\mathbf\Delta_x\|_1)$ \\
        \bottomrule
\end{tabular} 
\caption{Connection between $\delta$-hyperbolicity and average hyperbolicity and how these quantities determine the distortion of the resulting tree metric.}
\end{table}

Our main theoretical result can be summarized as
\begin{quote}
There is an algorithm which runs in time $O(n^3 \log n)$ which returns a tree metric $d_T$ with distortion bounded by the average (or $\ell_1$) hyperbolicity of the distances; i.e.,
\[
    \|d - d_T\|_1 \leq 8 \|\mathbf{\Delta}_x(d)\|_1 \leq 8 \binom{n-1}{3} \AvgHyp(d).
\]
\end{quote}
Additionally, the performance of \textsc{HCCRottedTreeFit} and other standard tree fitting algorithms on commonly used data sets (especially in geometric graph neural nets) shows that they are quite far from tree-like and are not well-represented by trees, especially when compared with synthetic data sets. This suggests that we need considerably more refined geometric notions for learning tasks with these data sets.   


\section{Preliminaries}
\label{sec:background}

\subsection{Basic definitions}

To set the stage, we work with a finite metric space $(X,d)$ and we set $|X| = n$, the number of triples in $X$ as $\binom{n}{3} = \ell$, and $r = \binom{n}{4}$ the number of quadruples of points in $X$. In somewhat an abuse of notation, we let $\binom{X}{3}$ denote the set of all triples chosen from $X$ (and, similarly, for all quadruples of points). Next, we recall the notion of hyperbolicity, which is defined via the Gromov product~\cite{gromov1987hyperbolic}. Given a metric space $(X, d)$, the \emph{Gromov product} of two points $x,y \in X$ with respect to a base point $w \in X$ is defined as
    \[ gp_w(x,y) := \frac{1}{2} \left( d(x,w) + d(y,w) - d(x,y) \right). \]

We use the definition of the Gromov product on two points with respect to a third to establish the \emph{four point condition} and define 
    \[   fp_w(d;x,y,z) := \max_{\pi \text{ perm}}[\min(gp_w(\pi x, \pi z), gp_w(\pi y, \pi z)) - 
       gp_w(\pi x, \pi y)]\] 
where the maximum is taken over all permutations $\pi$ of the labels of the four points. Since 
$fp_w(d;x,y,z) = fp_x(d;y,z,w) = fp_y(d;x,z,w) = fp_z(d;x,y,w)$, we sometimes denote the four point condition as $fp(d;x,y,z,w)$. Similarly, we define the \emph{three point condition} as 
    \[tp(d;x,y,z) := \max_{\pi \text{ perm}} [d(\pi x,\pi z) - \max(d(\pi x,\pi y),d(\pi y,\pi z))]\]
which we use to define ultrametricity. 

Following a standard definition of Gromov, a metric space $(X, d)$ is said to be $\delta$-hyperbolic with respect to the base point $w \in X$, if for any $x,y,z \in X$, the following holds:
    \[ gp_w(x,y) \geq \min( gp_w(y,z), gp_w(x,z) ) - \delta. \]
We denote $\Hyp(d) = \delta$, the usual hyperbolicity constant (similarly, $\UM(d)$, is the usual ultrametricity constant). We note that this measure of hyperbolicity is a worst case measure and, as such, it may give a distorted sense of the geometry of the space. A graph which consists of a tree and a single cycle, for instance, is quite different from a single cycle alone but with a worst case measure, we will not be able to distinguish between those two spaces. 

In order to disambiguate different spaces, we define the \emph{hyperbolicity} vector as the $\ell$-dimensional vector of all four point conditions with respect to $d$:
\[
    \mathbf\Delta_w(d) = [fp(d;x,y,z,w)] \text{ for all } x,y,z \in \binom{X}{3}.
\]
Similarly, we define the \emph{ultrametricity} vector as the $\ell$-dimensional vector of all three point conditions with respect to $d$:
\[
    \mathbf\Delta(d) = [tp(d;x,y,z)] \text{ for all } x,y,z \in \binom{X}{3}.
\]
We use the hyperbolicity and ultrametricity vectors to express more refined geometric notions.

We define $p$-\emph{average hyperbolicity} and $p$-\emph{average ultrametricity}. 
\[
    \operatorname{AvgHyp}_p(d) = \left(\frac{1}{r} \sum_{x,y,z,w \in \binom{X}{4}} fp(d;x,y,z,w)^p \right)^{1/p} \text{and}
\]
\[
    \operatorname{AvgUM}_p(d) = \left(\frac{1}{\ell} \sum_{x,y,z \in \binom{X}{3}} tp(d;x,y,z)^p\right)^{1/p} 
\]
If $p = 1$, then the notions are simply the \emph{average} (and we will call them so). Also, for clarity, note the usual hyperbolicity and ultrametricity constants $\Hyp(d)$ and $\UM(d)$ are the $p = \infty$ case.

\begin{prop} We have the simple relations:
    \begin{enumerate}[label=(\alph*)]
        \item $\operatorname{Hyp}(d) = \max_{x \in X} \|\mathbf\Delta_x(d)\|_{\infty} \geq \|\mathbf\Delta_x(d)\|_{\infty}$ for any $x \in X$.
        \item $\operatorname{UM}(d) = \|\mathbf   \Delta(d)\|_{\infty}$.
    \end{enumerate}
\end{prop}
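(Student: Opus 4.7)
The plan is to treat both parts as definitional unfoldings, so most of the work is just verifying that the quantities $fp(d;x,y,z,w)$ and $tp(d;x,y,z)$ are exactly the pointwise violations of the hyperbolicity and ultrametricity inequalities, respectively, maximized over the relevant permutations.

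For part (a), I would start from the definition: $\operatorname{Hyp}(d)$ is the smallest $\delta \geq 0$ such that $gp_w(u,v) \geq \min(gp_w(v,z), gp_w(u,z)) - \delta$ holds for every base point $w$ and every triple $\{u,v,z\} \subseteq X$. Equivalently,
\[
\operatorname{Hyp}(d) \;=\; \sup_{w \in X}\ \sup_{\{x,y,z\}}\ \bigl[\min(gp_w(y,z), gp_w(x,z)) - gp_w(x,y)\bigr].
\]
The key observation is that, for a fixed triple $\{x,y,z\}$, letting the permutation $\pi$ range over the six orderings of $(x,y,z)$ realizes each of the three choices of which pair to single out on the right-hand side of the Gromov four-point inequality; hence the inner supremum over the three choices is exactly $fp_w(d;x,y,z)$. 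Using the stated symmetry $fp(d;x,y,z,w) = fp_w(d;x,y,z) = fp_x(d;y,z,w)$, this gives
\[
\operatorname{Hyp}(d) \;=\; \max_{w \in X}\ \max_{\{x,y,z\}}\ fp(d;x,y,z,w) \;=\; \max_{w \in X}\ \|\mathbf\Delta_w(d)\|_\infty,
\]
which is the first equality. The stated inequality $\operatorname{Hyp}(d) \geq \|\mathbf\Delta_x(d)\|_\infty$ for any fixed $x$ is then immediate from the outer maximum.

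For part (b), the same bookkeeping applies with the ultrametric inequality in place of the four-point condition. By definition, $\operatorname{UM}(d)$ is the smallest $\delta$ with $d(u,v) \leq \max(d(u,z), d(v,z)) + \delta$ for every triple. For a fixed unordered triple $\{x,y,z\}$, the three permutations of the role "singled-out pair" are exactly what the $\max_\pi$ in $tp(d;x,y,z)$ ranges over, so $tp(d;x,y,z)$ equals the worst violation among the three orderings of the ultrametric inequality on $\{x,y,z\}$. Taking the max over triples yields $\operatorname{UM}(d) = \max_{\{x,y,z\}} tp(d;x,y,z) = \|\mathbf\Delta(d)\|_\infty$.

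The only mildly delicate point—hence the only real content of the proof—is checking that the six-element $\max_\pi$ in the definitions of $fp_w$ and $tp$ reduces to a three-element maximum over "which pair is singled out," since $gp_w$ and $d$ are symmetric in their two arguments. I would dispatch this with a one-line parity remark at the start of each part, after which the equalities are pure unpacking. No estimates or nontrivial inequalities are needed.
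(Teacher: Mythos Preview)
Your proposal is correct and matches the paper's implicit treatment: the paper states this proposition without proof, clearly regarding both parts as immediate from the definitions of $\operatorname{Hyp}(d)$, $\operatorname{UM}(d)$, $\mathbf\Delta_x(d)$, and $\mathbf\Delta(d)$. Your unpacking of the permutation maximum down to the three essential choices (via the symmetry of $gp_w$ and $d$) is exactly the kind of one-line remark the authors would expect a reader to supply.
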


In the discussion of heirarchical correlation clustering in Section~\ref{sec:HCC_and_fitting} and in the analysis of our algorithms in Section~\ref{sec:algorithm_analysis}, we construct multiple graphs using the points of $X$ as vertices and derived edges. Of importance to our analysis is the following combinatorial object which consists of the set of bad triangles in a graph (i.e., those triples of vertices in the graph for which exactly two edges, rather than three, are in the edge set). Given a graph 
$G = (V,E)$, denote $B(G)$, the set of \emph{bad triangles} in $G$, as
\[ 
    B(G) := \left\{ (x,y,z) \in \binom{V}{3} | \:|\{(x,y), (y,z), (z,x)\} \cap E| = 2\right\}.
\]

\subsection{Problem formulation}

First, we formulate the tree fitting problem. Given a finite, discrete metric space $(X,d)$ and the distance $d(x_i, x_j)$ between any two points $x_i,x_j \in X$, find a tree metric $(T,d_T)$ in which the points in $X$ are among the nodes of the tree $T$ and the tree distance $d_T(x_i, x_j)$ is ``close'' to the original distance $d(x_i,x_j)$. 

While there are many choices to measure how close $d_T$ and $d$ are, in this paper, we focus on the $\ell_p$ error; i.e., $\|d_T - d\|_p$, for $1 \leq p \leq \infty$. This definition is itself a shorthand notation for the following. Order the pairs of points $(x_i,x_j), i < j$ lexicographically and write $d$ (overloading the symbol $d$) for the vector of pairwise distances $d(x_i,x_j)$. Then, we seek a tree distance function $d_T$ whose vector of pairwise tree distances is close in $\ell_p$ norm to the original vector of distances. For example, if $p = \infty$, we wish to bound the \emph{maximum} distortion between any pairs on $X$. If $p = 1$, we wish to bound the total error over all pairs. Similarly, we define the ultrametric fitting problem. 

We also introduce the \emph{rooted} tree fitting problem. Given a finite, discrete metric space $(X,d)$ and the distance $d(x_i, x_j)$ between any two points $x_i,x_j \in X$, and a distinguished point $w \in X$, find a tree metric $(T,d_T)$ such that $\|d - d_T\|_p$ is small and $d_T(w,x) = d(w,x)$ for all $x \in X$. Although the rooted tree fitting problem has more constraints, previous work (such as \cite{agarwala1998approximability}) shows that by choosing the base point $w$ appropriately, the optimal error of the rooted tree embedding is bounded by a constant times the optimal error of the tree embedding. Also, the rooted tree fitting problem is closely connected to the ultrametric fitting problem.

Putting these pieces together, we observe that while considerable attention has been paid to the $\ell_q$ tree fitting problem for generic distances with only mild attention paid to the assumptions on the input distances. No one has considered \emph{both} restrictions on the distances and more sophisticated measures of distortion. We define the $\ell_p/\ell_q$ tree (ultrametric) fitting problem as follows.
\begin{defn}[$\mathbf{\ell_p/\ell_q}$ \textbf{tree (ultrametric) fitting problem}]
    Given $(X,d)$ with hyperbolicity vector $\mathbf\Delta_x(d)$ and $\AvgHyp_q(d)$ (ultrametricity vector $\mathbf\Delta(d)$ and $\AvgUM_q(d)$), find the tree metric $(T,d_T)$ (ultrametric $(X,d_U)$) with distortion 
    \[
        \|d - d_T\|_p \leq \AvgHyp_q(d) \cdot f(n) \text{ or } \|d - d_U\|_p \leq \AvgUM_q(d) \cdot f(n)
    \]
    for a growth function $f(n)$ that is as small as possible. (Indeed, $f(n)$ might be simply a constant.)
\end{defn}

\subsection{Previous results}

Next, we detail Gromov's classic theorem on tree fitting, using our notation above. 
\begin{thm}\cite{gromov1987hyperbolic}
    Given a $\delta$-hyperbolic metric space $(X,d)$ and a reference point $x \in X$, there exists a tree structure $T$ and its metric $d_T$ such that
    \begin{enumerate}
        \item $T$ is $x$-restricted, i.e., $d(x,y) = d_T (x,y)$ for all $y \in X$.
        \item $\|d - d_T\|_{\infty} \leq 2 \|\mathbf\Delta_x(d)\|_\infty \lceil \log_2 (n - 2) \rceil$. 
    \end{enumerate}
\end{thm}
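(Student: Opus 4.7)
The plan is to use Gromov's chain-based construction of the tree-like Gromov product. For $y, z \in X \setminus \{x\}$, define
\[
    gp_T(y,z) := \max \Bigl\{ \min_{0 \leq i < k} gp_x(v_i, v_{i+1}) : y = v_0, v_1, \ldots, v_k = z,\ v_i \in X \setminus \{x\} \Bigr\},
\]
extend by $gp_T(x, y) := 0$, and set $d_T(y,z) := d(x,y) + d(x,z) - 2\,gp_T(y,z)$. The direct one-edge chain $y, z$ forces $gp_T(y,z) \geq gp_x(y,z)$, while the chain $x, y$ together with $gp_x(x, \cdot) = 0$ forces $gp_T(x, y) = 0$ and hence $d_T(x, y) = d(x, y)$, so item (1) is automatic once we know $d_T$ is genuinely a tree metric.

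The core of the argument is to show that $gp_T$ over-approximates $gp_x$ by at most $\delta \lceil \log_2(n-2) \rceil$, where $\delta := \|\mathbf\Delta_x(d)\|_\infty$. I would prove by induction on the chain length $k$ that for every chain $y = v_0, \ldots, v_k = z$ in $X \setminus \{x\}$,
\[
    gp_x(y,z) \geq \min_{0 \leq i < k} gp_x(v_i, v_{i+1}) - \delta \lceil \log_2 k \rceil.
\]
For $k = 1$ this is an equality. For $k \geq 2$, split the chain at $m = \lfloor k/2 \rfloor$: both halves have length at most $\lceil k/2 \rceil$, so by induction $gp_x(y, v_m)$ and $gp_x(v_m, z)$ each lie above $\min_i gp_x(v_i, v_{i+1}) - \delta \lceil \log_2 \lceil k/2 \rceil \rceil$. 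The definition of $\delta = \|\mathbf\Delta_x(d)\|_\infty$ via the four-point condition then furnishes $gp_x(y,z) \geq \min(gp_x(y, v_m), gp_x(v_m, z)) - \delta$, so the two errors accumulate to $\delta(\lceil \log_2 \lceil k/2 \rceil \rceil + 1) = \delta \lceil \log_2 k \rceil$. Any chain in $X \setminus \{x\}$ has $k \leq n - 2$, so taking the max over chains gives $0 \leq gp_T(y,z) - gp_x(y,z) \leq \delta \lceil \log_2(n-2) \rceil$, and therefore $|d(y,z) - d_T(y,z)| = 2\,|gp_T(y,z) - gp_x(y,z)| \leq 2\delta \lceil \log_2(n-2) \rceil$, which is item (2).

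The step I expect to be the most delicate is verifying that $d_T$ is actually the path metric of a tree. By construction $gp_T$ is the classical max-min chain (bottleneck) value on the complete graph with edge weights $gp_x$, which equals the minimum edge weight along the unique path in any maximum spanning tree. From this one reads off the ultrametric-like identity $gp_T(y,z) \geq \min(gp_T(y,u), gp_T(u,z))$ for all $y, z, u$, i.e., $gp_T$ is itself $0$-hyperbolic at $x$. A Gromov product satisfying this identity is well known (see e.g.\ Ghys--de la Harpe) to come from a genuine tree metric: one can build $T$ explicitly by processing the points in decreasing order of $d(x, \cdot)$ and attaching each new $y$ to the current partial tree at the branch point at depth $gp_T(y, y^\ast)$ from $x$ along the spine to $y^\ast$, where $y^\ast$ maximizes $gp_T(y, \cdot)$ over previously inserted vertices, with a pendant edge of length $d(x, y) - gp_T(y, y^\ast)$. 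A routine check that the induced tree distances reproduce $d_T$ and that $x$ sits as a leaf at distance $d(x, y)$ from each $y$ then closes the proof.
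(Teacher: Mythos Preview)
Your proposal is the classical Gromov argument, and the paper does not supply its own proof of this statement: the theorem is quoted in Section~2.3 as a known result with a citation to \cite{gromov1987hyperbolic} (and to \cite{ghys1990espaces}), with only the asymptotic \emph{tightness} discussed later in the appendix. So there is nothing in the paper to compare against beyond the attribution, and what you have written is exactly the standard construction that the citation points to.

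A couple of small remarks on the write-up. First, when you bound the chain length by $n-2$ you are implicitly using that the maximum in the definition of $gp_T$ is attained on a chain without repeated vertices; this is true (removing a repeated segment can only raise the minimum), but it is worth saying since otherwise chains may be arbitrarily long and the $\lceil\log_2 k\rceil$ bound would be vacuous. Second, the sentence about ``the chain $x,y$'' is slightly off: your chains live in $X\setminus\{x\}$, so $gp_T(x,y)=0$ is purely by the extension you declared, not via any chain. Neither point affects correctness.
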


In other words, we can bound the \emph{maximum distortion} of tree metric in terms of the hyperbolicity constant $\Hyp(d) = \delta$ and the size of our input space $X$. 

\subsection{(Hierarchical) Correlation clustering and ultrametric fitting}
\label{sec:HCC_and_fitting}
Several earlier works (\cite{ailon2005fitting,harb2005approximating}) connected the correlation clustering problem to that of tree and ultrametric fitting and, in order to achieve our results, we do the same. In the correlation clustering problem, we are given a graph $G=(V,E)$ whose edges are labeled ``+'' (similar) or ``-'' (different) and we seek a clustering of the vertices into two clusters so as to minimize the number of pairs incorrectly classified with respect to the input labeling. In other words, minimize the number of ``-'' edges within clusters plus the number of ``+'' edges between clusters. When the graph $G$ is complete, correlation clustering is equivalent to the problem of finding an optimal ultrametric fit under the $\ell_1$ norm when the input distances are restricted to the values of 1 and 2. 

Hierarchical correlation clustering is a generalization of correlation clustering that is also implicitly connected to ultrametric and tree fitting (see~\cite{harb2005approximating,chowdhury2016improved,cohen2022fitting,ailon2005fitting}). In this problem, we are given a set of non-negativeweights and a set of edge sets. We seek a partition of vertices that is both hierarchical and minimizes the weighted sum of incorrectly classified pairs of vertices. It is a (weighted) combination of correlation clustering problems.

More precisely, given a graph $G = (V,E)$ with $k+1$ edge sets $G_t = (V,E_t)$, and $k+1$ weights $\delta_t \geq 0$ for $0 \leq t \leq k$, we seek a \emph{hierarchical} partition $P_t$ that minimizes the $\ell_1$ objective function, $\sum \delta_t |E_t \Delta E(P_t)|$. It is \emph{hierarchical} in that for each $t$, $P_t$ subdivides $P_{t+1}$.

Chowdury, et al.~\cite{chowdhury2016improved} observed that the Single Linkage Hierarchical Clustering algorithm (SLHC) whose output can be modified to produce an ultrametric that is designed to fit a given metric satisfies a similar property to that of Gromov's tree fitting result. In this case, the distortion bound between the ultrametric and the input distances is a function of the ultrametricity of the metric space.
\begin{thm}
    Given $(X,d)$ and the output of SLHC in the form of an ultrametric $d_U$, we have 
    \[ 
        \|d - d_U \|_{\infty} \leq \|\mathbf\Delta(d)\|_\infty \lceil \log_2 (n-1) \rceil.
    \]
\end{thm}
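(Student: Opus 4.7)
The plan is to mirror Gromov's classical telescoping-with-balanced-splits argument in the ultrametric setting. First I would make explicit what SLHC produces on a finite metric space: the ultrametric $d_U$ is the bottleneck (min-max) distance
\[
    d_U(x,y) \;=\; \min_{x = x_0,\, x_1,\,\ldots,\, x_k = y} \; \max_{0 \leq i < k} d(x_i, x_{i+1}),
\]
equivalently the maximum edge weight along the MST path between $x$ and $y$. In particular $d_U(x,y) \leq d(x,y)$, so only the one-sided inequality $d(x,y) - d_U(x,y) \leq \|\mathbf\Delta(d)\|_\infty \lceil \log_2(n-1)\rceil$ needs to be established.

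The second ingredient is a pointwise form of the ultrametricity vector. Writing $\delta := \|\mathbf\Delta(d)\|_\infty$, specializing the permutation in $tp(d;a,b,c)$ to the identity yields, for every triple $a,b,c \in X$,
\[
    d(a,c) \;\leq\; \max\bigl(d(a,b),\, d(b,c)\bigr) + \delta.
\]
This inequality plays the same role here that Gromov's 4-point condition plays in the tree case: it is the sole mechanism by which the $\ell_\infty$ ultrametricity enters.

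With these two ingredients in hand I would fix $x,y \in X$, choose a simple MST-path $P: x = x_0,\ldots,x_k = y$ realising $d_U(x,y)$ (so $k \leq n-1$ and $M := \max_i d(x_i,x_{i+1}) = d_U(x,y)$), and prove by strong induction on $k$ the slightly stronger claim: if $k \leq 2^m$, then $d(x_0,x_k) \leq M + m\delta$. The base case $k=1$ is immediate. For $k \geq 2$, pick $m$ minimal with $k \leq 2^m$ and split $P$ at $j = \lceil k/2 \rceil$; each sub-path has length at most $2^{m-1}$, so the inductive hypothesis gives $d(x_0,x_j), d(x_j,x_k) \leq M + (m-1)\delta$, and a single application of the triple inequality to $(x_0,x_j,x_k)$ upgrades this to $d(x_0,x_k) \leq M + m\delta$. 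Taking $m = \lceil \log_2(n-1)\rceil$ for every pair simultaneously yields the theorem.

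The main obstacle is the balancing step. A naive telescoping of the triple inequality along the chain would accumulate an error proportional to $k$, hence to $n-1$, instead of $\log_2(n-1)$; the halving trick above is what converts the linear cost into a logarithmic one, and the only subtlety is the ceiling arithmetic $\lceil \log_2 \lceil k/2 \rceil \rceil + 1 \leq \lceil \log_2 k\rceil$ for $k \geq 2$, which is routine. A secondary (and equally routine) check is that the minimiser in the SLHC bottleneck identity can be taken to be a simple path, so that its length is bounded by $n-1$; this follows because any repeated vertex in a candidate path can be short-circuited without increasing the maximum edge weight.
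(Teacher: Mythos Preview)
Your proposal is correct. The paper itself does not supply a proof of this theorem; it merely quotes the result from Chowdhury, M\'emoli, and Smith~\cite{chowdhury2016improved} as the ultrametric analogue of Gromov's $\ell_\infty$ bound, so there is no paper-side argument to compare against line by line.

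That said, your write-up is exactly the natural proof and matches what one finds in the cited reference: identify the SLHC output with the bottleneck (subdominant) ultrametric, observe it is automatically a lower bound on $d$, and then run Gromov's balanced-dyadic telescoping along a minimising chain, replacing the four-point inequality by the three-point inequality $d(a,c)\leq\max(d(a,b),d(b,c))+\delta$. The two technical checks you flag (the ceiling arithmetic in the halving step, and that the minimising chain may be taken simple so its length is at most $n-1$) are precisely the points one must note, and both are routine as you say. Nothing is missing.
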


\subsection{Reductions and equivalent bounds} 

Finally, we articulate precisely how the tree and ultrametric fitting problems are related through the following reductions. We note that the proof of this theorem uses known techniques from~\cite{ailon2005fitting} and \cite{cohen2022fitting} although the specific results are novel. First, an ultrametric fitting algorithm yields a tree fitting algorithm.
\begin{thm}\label{thm:ulttotree}
    Given $1 \leq p < \infty$ and $1 \leq q \leq \infty$. Suppose we have an ultrametric fitting algorithm such that for any distance function $d$ on $X$ (with $|X| = n$), the output $d_U$ satisfies
    \[ \|d - d_U\|_p \leq \AvgUM_q(d) \cdot f(n) \text{ for some growth function } f(n).\]
    Then there exists a tree fitting algorithm (using the above) such that given an input $d$ on $X$ (with $|X| = n$), the output $d_T$ satisfies
    \[ \|d - d_T\|_p \leq 2 \left( \frac{n-3}{n} \right)^{1/q} \AvgHyp_q(d) \cdot f(n) \text{ for same growth function } f(n). \]
\end{thm}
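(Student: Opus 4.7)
The plan is to apply the classical Farris/Gromov-product reduction from tree fitting to ultrametric fitting, combined with an averaging argument over the choice of root. Given a prospective root $w \in X$, I would let $M_w := \max_{x,y} gp_w(x,y)$ and define an auxiliary function on $X$ by
\begin{equation*}
\tilde D_w(x,y) := M_w - gp_w(x,y) \text{ for } x,y \in X\setminus\{w\}, \text{ and } \tilde D_w(x,w) := M_w \text{ for all } x \neq w.
\end{equation*}
Unwinding the definitions of $tp$ and $fp$ gives the key identity $tp(\tilde D_w; x,y,z) = fp(d; x,y,z,w)$ whenever $\{x,y,z\} \subset X \setminus \{w\}$, while every triple containing $w$ has $tp(\tilde D_w;\cdot)=0$ since the two pairs touching $w$ both equal $M_w$.

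Next, I would choose $w$ by averaging. Since $fp$ is symmetric in its four arguments, each $4$-subset of $X$ is counted exactly four times in $\sum_w \sum_{\{x,y,z\} \subset X\setminus\{w\}} fp(d;x,y,z,w)^q$, which yields
\begin{equation*}
\sum_{w \in X} \binom{n}{3}\, \AvgUM_q(\tilde D_w)^q \;=\; 4 \binom{n}{4}\, \AvgHyp_q(d)^q.
\end{equation*}
A direct calculation gives $4\binom{n}{4}/(n\binom{n}{3}) = (n-3)/n$, so selecting the minimizer $w^\ast$ of $\AvgUM_q(\tilde D_w)$ produces $\AvgUM_q(\tilde D_{w^\ast}) \leq ((n-3)/n)^{1/q}\, \AvgHyp_q(d)$.

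Then I would run the hypothesized ultrametric fitting algorithm on $\tilde D_{w^\ast}$ to obtain an ultrametric $D_U$ with $\|\tilde D_{w^\ast} - D_U\|_p \leq \AvgUM_q(\tilde D_{w^\ast})\, f(n)$, and invert the transform: set $d_T(x,w^\ast) := d(x,w^\ast)$ and, for $x,y \neq w^\ast$,
\begin{equation*}
d_T(x,y) := 2 D_U(x,y) + d(x,w^\ast) + d(y,w^\ast) - 2 M_{w^\ast}.
\end{equation*}
A direct calculation gives $gp_{w^\ast}^{d_T}(x,y) = M_{w^\ast} - D_U(x,y)$, and because $D_U$ is an ultrametric, this Gromov product satisfies the reverse-ultrametric condition $gp_{w^\ast}^{d_T}(x,y) \geq \min(gp_{w^\ast}^{d_T}(x,z), gp_{w^\ast}^{d_T}(y,z))$, which is Gromov's criterion for $d_T$ to be $0$-hyperbolic, hence a tree metric. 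Finally, the identity $d(x,y) = 2\tilde D_{w^\ast}(x,y) + d(x,w^\ast) + d(y,w^\ast) - 2M_{w^\ast}$ combined with the analogous identity for $d_T$ yields $d(x,y) - d_T(x,y) = 2(\tilde D_{w^\ast}(x,y) - D_U(x,y))$ for $x,y \neq w^\ast$, while pairs involving $w^\ast$ contribute zero, so
\begin{equation*}
\|d - d_T\|_p \;\leq\; 2 \|\tilde D_{w^\ast} - D_U\|_p \;\leq\; 2 \bigl(\tfrac{n-3}{n}\bigr)^{1/q} \AvgHyp_q(d) \cdot f(n).
\end{equation*}

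The main obstacle is the inversion step: ensuring that $d_T$ is a genuinely non-negative metric (not merely $0$-hyperbolic at $w^\ast$ in the Gromov sense) requires that $D_U$ not undershoot too much relative to $M_{w^\ast}$, which can be enforced either by a modest upward adjustment of $M_{w^\ast}$ or by a truncation post-processing of $D_U$. The extension $\tilde D_w(x,w) = M_w$ is the lever that makes the rest clean: it simultaneously forces the $w$-containing triples to contribute zero to $\AvgUM_q$ (giving the $\binom{n}{3}$ denominator and the $(n-3)/n$ factor) and aligns the ultrametric structure with the rooted-tree structure via the identity $gp_{w^\ast}^{d_T} = M_{w^\ast} - D_U$. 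The base-point averaging in Step~2 is the cleanest ingredient, relying only on the permutation symmetry of $fp$.
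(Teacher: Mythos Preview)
Your approach is essentially the same as the paper's. Both use the Farris/Gromov-product transform to reduce tree fitting to ultrametric fitting, followed by pigeonhole over the choice of root $w$ using the permutation symmetry of $fp$. The paper's auxiliary function is $(d+c_w)(x,y)=2m-2\,gp_w(x,y)$ with $m=\max_x d(w,x)$, whereas you use $\tilde D_w(x,y)=M_w-gp_w(x,y)$; these differ only by an affine rescaling, so your factor of $2$ appears at the inversion step ($d-d_T=2(\tilde D-D_U)$) while the paper's factor of $2$ appears in the identity $tp(d+c_w;\cdot)=2\,fp_w(d;\cdot)$, and both routes land on the same bound. Your extension $\tilde D_w(x,w)=M_w$ corresponds exactly to the paper's observation that $(d+c_w)(x,w)=2m$, which is what kills the $w$-containing triples and produces the $(n-3)/n$ ratio.

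The one place where the paper goes further than your sketch is precisely the obstacle you flag: the paper gives an explicit clipping $d_U(x,y)=\min\bigl(\max(\beta_x,\beta_y,d_{U'}(x,y)),\,2m\bigr)$ with $\beta_x=2(m-d(w,x))$, and then devotes three claims to showing that (i) this clipping can only decrease the $\ell_p$ error, (ii) the clipped $d_U$ remains an ultrametric, and (iii) the resulting $d_T=d_U-c_w$ is a genuine non-negative tree metric. Your suggestion of ``truncation post-processing of $D_U$'' is exactly this device; you would just need to carry out the analogous verification in your normalization.
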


Conversely, a tree fitting algorithm yields an ultrametric fitting algorithm. From which we conclude that both problems should have the same asymptotic bound, which justifies our problem formulation.
\begin{thm}\label{thm:treetoult}
    Given $1 \leq p < \infty$ and $1 \leq q \leq \infty$. Suppose that we have a tree fitting algorithm such that for any distance function $d$ on $X$ (with $|X| = n$), the output $d_T$ satisfies
    \[ \|d - d_T\|_p \leq \AvgHyp_q(d) \cdot f(n) \text{ for some growth function } f(n).\]
    Then there exists an ultrametric fitting algorithm (using the above) such that given an input $d$ on $X$ (with $|X| = n$), the output $d_T$ satisfies
    \[ \|d - d_U\|_p \leq 3^{\frac{p-1}{p}} \left(\frac{1}{4} + 2^{q-4}\right)^{1/q} \cdot \AvgUM_q(d) \cdot f(2n) \text{ for same growth function } f(n). \]
\end{thm}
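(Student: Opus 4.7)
The plan is to reduce ultrametric fitting on $(X, d)$ to tree fitting on a doubled space. I form $X' = X \sqcup \bar X$ where $\bar X = \{\bar x_1, \ldots, \bar x_n\}$ is a disjoint copy of $X$, and set
\[
  d'(x_i, x_j) = d'(\bar x_i, \bar x_j) = d(x_i, x_j), \qquad d'(x_i, \bar x_j) = M
\]
for any $M \geq \operatorname{diam}(X, d)$; the triangle inequality is routine to verify. Running the given tree fitting algorithm on $(X', d')$ yields $d_T$ with $\|d' - d_T\|_p \leq \AvgHyp_q(d')\,f(2n)$. I extract an ultrametric $d_U$ on $X$ by fixing a reference point $\bar w \in \bar X$ and setting $d_U(x, y) := 2H - 2\,gp_{\bar w}(x, y)$ with $H := \max_{x \in X} d_T(x, \bar w)$; this is the standard ultrametric induced by rooting $T$ at $\bar w$, and it is ultrametric because in a tree $gp_{\bar w}(x, z) \geq \min(gp_{\bar w}(x, y), gp_{\bar w}(y, z))$.

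The central technical step is to prove $\AvgHyp_q(d')^q \leq (\tfrac14 + 2^{q-4})\AvgUM_q(d)^q$. I classify the $\binom{2n}{4}$ quadruples of $X'$ by the split of their points between $X$ and $\bar X$. A direct Gromov-product calculation gives $fp(d'; x, y, z, \bar w) = tp(d; x, y, z)/2$ for a $(3,1)$ or $(1,3)$ tuple (independent of $\bar w$); $(2,2)$ tuples yield $fp(d';\cdot) = 0$ because the two cross pair-sums both equal $2M$, exceeding the intra-pair sum $\leq 2\operatorname{diam}(X)$; and $(4,0)$ or $(0,4)$ tuples contribute $fp(d;\cdot)$. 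Summing, normalising by $\binom{2n}{4}$, and using the binomial bounds $n\binom{n}{3}/\binom{2n}{4} \leq 1/4$ and $\binom{n}{4}/\binom{2n}{4} \leq 1/16$, one obtains $\AvgHyp_q(d')^q \leq \tfrac14 \AvgUM_q(d)^q + \tfrac18 \AvgHyp_q(d)^q$. The residual $\AvgHyp_q(d)^q$ is then controlled by $2^{q-1}\AvgUM_q(d)^q$ via a pointwise inequality relating $fp$ on a 4-tuple to the $tp$'s of the four constituent triples, established by case analysis of the ordered pair-sums $S_1 \leq S_2 \leq S_3$. Combining these yields the claimed bound.

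For the distortion, I decompose for each $(x, y) \in \binom{X}{2}$
\[
  d(x, y) - d_U(x, y) \;=\; [d(x, y) - d_T(x, y)] + [d_T(x, y) - d_U(x, y)].
\]
The tree identity $d_T(x, y) = d_T(x, \bar w) + d_T(y, \bar w) - 2\,gp_{\bar w}(x, y)$ together with the definition of $d_U$ rewrites the second bracket as $-(H - d_T(x, \bar w)) - (H - d_T(y, \bar w))$; each nonnegative term $H - d_T(z, \bar w)$ is controlled by the $\bar w$-cross-pair tree-fit errors $|d_T(z, \bar w) - M|$ which already appear in $\|d' - d_T\|_p^p$. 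The per-pair distortion is thus a sum of three individual tree-fit errors supported on pairs of $X'$, so the power-mean Minkowski inequality
\[
  \|a\|_p + \|b\|_p + \|c\|_p \;\leq\; 3^{(p-1)/p}\bigl(\|a\|_p^p + \|b\|_p^p + \|c\|_p^p\bigr)^{1/p}
\]
yields $\|d - d_U\|_p \leq 3^{(p-1)/p}\|d' - d_T\|_p$. Combining with the hyperbolicity bound and the tree-fit guarantee completes the proof.

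The hard part will be establishing the pointwise inequality relating $fp$ on a 4-tuple to the $tp$'s of the four constituent triples with the correct $q$-dependent constant; this requires a careful case analysis of which pair-sum $S_i$ dominates and how the four ultrametricity defects constrain the gap $S_3 - S_2$. A secondary subtlety is choosing $\bar w$ (e.g.\ a point with small cross-pair tree-fit error, or averaging over $\bar X$) so that the aggregated cross-pair errors $\sum_{z}(H - d_T(z,\bar w))^p$ cleanly map to subsums of $\|d' - d_T\|_p^p$ without double-counting the intra-$X$ errors already absorbed in the first bracket.
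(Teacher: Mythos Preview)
Your overall architecture matches the paper's: double the space, bound $\AvgHyp_q$ of the doubled space by $\AvgUM_q$ of the original, run the tree fitter, then extract an ultrametric via Gromov products with a $3^{(p-1)/p}$ loss. There are, however, two concrete problems with your specific choices.

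\textbf{The isometric copy does not give the stated constant.} You set $d'(\bar x_i,\bar x_j)=d(x_i,x_j)$, so both $(4,0)$ and $(0,4)$ quadruples contribute $fp(d;\cdot)^q$, and both $(3,1)$ and $(1,3)$ contribute $(tp/2)^q$. After normalising, the coefficient on $\AvgHyp_q(d)^q$ is $2\binom{n}{4}/\binom{2n}{4}\le 1/8$, not $1/16$. Combined with the pointwise bound $fp\le\tfrac12\sum_{i=1}^4 tp_i$ (which yields $\AvgHyp_q^q\le 2^q\AvgUM_q^q$), you get $\AvgHyp_q(d')^q\le 2^{-q-1}+2^{q-3}$, and e.g.\ for $q=1$ this is $1/2>3/8=\tfrac14+2^{q-4}$. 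So you genuinely need the sharper relation $\AvgHyp_q^q\le 2^{q-1}\AvgUM_q^q$ that you flag as the ``hard part''; the paper does \emph{not} establish or need this. Instead, the paper takes $Y$ to be a cloud with \emph{all} pairwise distances equal to a small constant $c$ (not an isometric copy of $X$). Then every quadruple with at least two points in $Y$ has $fp=0$, so only $(4,0)$ and $(3,1)$ survive, giving coefficients $1/16$ and $1/4$; the weaker $2^q$ bound then suffices for the stated constant.

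\textbf{Using $H=\max_x d_T(x,\bar w)$ rather than $M$.} Your decomposition produces terms $H-d_T(z,\bar w)$, and you claim each is controlled by $|d_T(z,\bar w)-M|$. But $H-d_T(z,\bar w)=(H-M)+(M-d_T(z,\bar w))$, and $|H-M|$ is a single number (essentially $\max_x|d_T(x,\bar w)-M|$) that gets summed $n-1$ times when you aggregate over pairs; it is not dominated by the corresponding sub-sum of $\|d'-d_T\|_p^p$. The paper sidesteps this by physically relocating the leaves so that $d_T(x,y_0)=M$ exactly (the \textsc{RestrictTree} step); equivalently, you can set $d_U(x,y):=2M-2\,gp_{\bar w}(x,y)$ (using $M$, not $H$) and clip at $0$. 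Then each $|d-d_U|(x,y)$ is bounded by a sum of three genuine tree-fit errors on pairs of $X'$, and choosing $\bar w=\argmin_{\bar y}\sum_x|d_T(x,\bar y)-M|^p$ makes the $(n-1)$-fold repetition of the cross terms fit inside $\|d'-d_T\|_p^p$ via averaging over $\bar X$.
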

Proofs of both theorems can be found in the Appendix~\ref{app_sec:proofulttotree}
\ref{app_sec:prooftreetoult}.

\section{Tree and ultrametric fitting: Algorithm and analysis}
\label{sec:algorithm_analysis}

In this section, we present an algorithm for the $p = q = 1$ ultrametric fitting problem. We also present the proof of upper bound and present an example that shows this bound is asymptotically tight (despite our empirical evidence to the contrary). Then, using our reduction from Section~\ref{sec:background}, we produce a tree fitting result.

\subsection{HCC Problem}
As we detailed in Section~\ref{sec:background}, correlation clustering is connected with tree and ultrametric fitting and in this section, we present a hierarchical correlation clustering (HCC) algorithm which \emph{bounds} the number of disagreement edges by constant factor of number of bad triangles. We follow with a proposition that shows the connection between bad triangle objectives and the $\ell_1$ ultrametricity vector. 
    
    
    
\begin{defn}[\textbf{HCC with triangle objectives}]\label{obj:hcctriangle} 
    Given a vertex set $V$ and $k+1$ edge sets, set $G_t = (V,E_t)$ for $0 \leq t \leq k$. The edge sets are hierarchical so that $E_t \subseteq E_{t+1}$ for each $t$. We seek a \emph{hierarchical} partition $P_t$ so that for each $t$, $P_t$ subdivides $P_{t+1}$ and the number of disagreement edges $|E_t \Delta E(P_t)|$ is bounded by $C \cdot |B(G_t)|$ (where $B$ denotes the set of bad triangles) for some constant $C > 0$.
\end{defn}
Note that this problem does not include the weight sequence $\{\delta_t\}$, as the desired output will also guarantee an upper bound on $\sum \delta_t |E_t \Delta E(P_t)|$, the usual $\ell_1$ objective.

This proposition relates the $\ell_1$ vector norm of $\mathbf\Delta(d)$, the average ultrametricity of the distances, and the bad triangles in the derived graph. This result is why we adapt the hierarchical correlation clustering problem to include triangle objectives.
\begin{prop}\label{prop:ell1vectornorm}
    Given distance function $d$ on $\binom{X}{2}$ (with $|X| = n$) and $s > 0$, consider the $s$-neighbor graph $G_s = (X, E_s)$ where $E_s$ denotes $\{ (x,y) \in \binom{X}{2} | d(x,y) \leq s\}$. Then we have
        \[\|\mathbf\Delta(d)\|_1 = \ell \cdot \AvgUM(d) = \int_0^\infty |B(G_s)| ds .
        \]
\end{prop}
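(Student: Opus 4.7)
The first equality $\|\mathbf\Delta(d)\|_1 = \ell \cdot \AvgUM(d)$ is immediate from the definitions: $\|\mathbf\Delta(d)\|_1 = \sum_{(x,y,z) \in \binom{X}{3}} tp(d;x,y,z)$ and $\AvgUM(d)$ is by definition $(1/\ell)$ times this sum. So the content of the proposition is the second equality, which I would prove by a Fubini-style exchange: express the integrand on the right as a sum of indicators, swap the sum and the integral, then evaluate each term in closed form.

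The plan is to show, for each fixed triple $\{x,y,z\}$, that $\int_0^\infty \mathbf{1}[\{x,y,z\} \in B(G_s)]\, ds = tp(d;x,y,z)$. Fix a triple and label its three pairwise distances so that $a \leq b \leq c$ where $a,b,c$ are $d(x,y), d(y,z), d(x,z)$ in some order. By the definition of the three point condition, $tp(d;x,y,z) = c - b$ (the maximum, over permutations, of a side minus the largest of the other two sides is achieved by pulling out the largest side and subtracting the middle). Now examine the $s$-neighbor graph $G_s$ restricted to $\{x,y,z\}$: for $s < a$ no edge is present; for $a \leq s < b$ exactly one edge is present; for $b \leq s < c$ exactly two edges are present, so the triple is bad; and for $s \geq c$ all three edges are present. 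Hence the indicator equals $1$ precisely on the interval $[b,c)$, whose Lebesgue measure is $c - b = tp(d;x,y,z)$.

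Summing this identity over all triples and swapping the (finite) sum with the integral by Fubini--Tonelli (all quantities are nonnegative) gives
\[
\sum_{(x,y,z) \in \binom{X}{3}} tp(d;x,y,z) \;=\; \int_0^\infty \sum_{(x,y,z) \in \binom{X}{3}} \mathbf{1}[\{x,y,z\} \in B(G_s)]\, ds \;=\; \int_0^\infty |B(G_s)|\, ds,
\]
which is exactly $\|\mathbf\Delta(d)\|_1$. The only subtle point — really the only step that requires attention — is verifying the case analysis for the indicator, in particular confirming that the ``bad'' range is the half-open interval of length equal to $tp$; the swap of sum and integral is justified trivially since the sum is over finitely many triples and the integrand for each triple has finite (bounded by $\mathrm{diam}(X,d)$) support. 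No special assumption on $d$ (beyond being a nonnegative symmetric function on pairs) is needed for this identity.
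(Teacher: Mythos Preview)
Your argument is correct and is exactly the natural one: decompose $|B(G_s)|$ as a sum of triple-indicators, swap sum and integral, and observe that for a triple with sorted distances $a\le b\le c$ the ``bad'' window is $[b,c)$ while $tp=c-b$. The paper in fact states this proposition without proof (it is used as an ingredient in the proof of Theorem~\ref{thm:hccultra} via the same integral reinterpretation you give), so there is nothing further to compare; your write-up would serve as the omitted justification.
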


\subsection{Main results}
Our main contribution is that the \textsc{HCCTriangle} algorithm detailed in Section~\ref{sec:algorithm} solves our modified HCC problem and its output partition behaves ``reasonably'' on every level. From this partition, we can construct a good ultrametric fit which we then leverage for a good (rooted) tree fit (using the reduction from Section~\ref{sec:background}.

\begin{thm}\label{thm:hcctriangle}
    \textsc{HCCTriangle} outputs a hierarchical partition $P_t$ where $|E_t \Delta E(P_t)| \leq 4 \cdot |B(G_t)|$ holds for every $0 \leq t \leq \binom{n}{2} = k$. Furthermore, the algorithm runs in time $O(n^2)$.
\end{thm}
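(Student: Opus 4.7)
The plan is to separate the theorem into two independent pieces: the structural bound $|E_t \Delta E(P_t)| \leq 4\, |B(G_t)|$ at every level $t$, and the $O(n^2)$ runtime. For the structural bound, I would set up a charging scheme that sends each disagreement pair in $E_t \Delta E(P_t)$ to a bad triangle of $G_t$, and argue that no triangle receives more than four charges.

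Concretely, I would split the disagreements at level $t$ into two types. A pair $(u,v) \in E_t \setminus E(P_t)$ is a \emph{cross edge}: $u,v$ lie in different blocks of $P_t$. Using the rule by which the algorithm decided to separate $u$ and $v$ (typically a witness $w$ lying in one of the two blocks that is adjacent to exactly one of $u,v$ in $G_t$), I would exhibit a third vertex $w$ so that $\{u,v,w\}$ has exactly two edges in $E_t$, i.e.\ is in $B(G_t)$. Symmetrically, for a \emph{missing edge} $(u,v) \in E(P_t) \setminus E_t$ lying inside a block, I would use the fact that the algorithm only placed $u$ and $v$ in the same block because of a connectivity reason, producing a witness $w$ adjacent to both $u$ and $v$ in $G_t$ while $(u,v) \notin E_t$ — again yielding a bad triangle. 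The hierarchy requirement $P_t \preceq P_{t+1}$ I expect to follow directly from monotonicity of the edge sets $E_0 \subseteq E_1 \subseteq \cdots \subseteq E_k$: the algorithm's blocks will only merge as $t$ grows.

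The main obstacle is bounding the multiplicity of this charging by exactly $4$. The constant $4$ is tight enough to rule out naive single linkage, which for a long path on $n$ vertices gives a ratio of order $n$ (there are $\binom{n-1}{2}$ missing edges within the single connected component but only $n-2$ bad triangles). So \textsc{HCCTriangle} must employ finer decisions than connected components — most likely a local pivot or ball-carving step that refuses to merge into ``thin'' components. The key technical lemma I anticipate proving is: for each triple $\{x,y,z\} \in B(G_t)$, the number of disagreement pairs $(u,v)$ whose witness is $\{x,y,z\}$ is at most four. This lemma coupled with Proposition \ref{prop:ell1vectornorm} is what lifts the per-level bound to the $\ell_1$ bound used in \textsc{HCCRootedTreeFit}.

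For the runtime, note that there are $\binom{n}{2}$ levels and $\binom{n}{2}$ pairs, so no per-level scan is affordable. Instead I would argue the algorithm touches each pair $(u,v)$ a constant number of times at the single critical threshold $t(u,v) := \min\{t : (u,v) \in E_t\}$, maintaining the evolving hierarchy with a union-find data structure (or equivalent) as the pairs are processed in sorted order of $t(u,v)$. Since there are $\binom{n}{2}$ pairs total, an amortized analysis yields an overall running time of $O(n^2)$ up to the inverse Ackermann factor, matching the claimed bound.
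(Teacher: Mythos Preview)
Your high-level decomposition into within-cluster missing edges and between-cluster cross edges matches the paper, and your observation that single linkage fails (the path example) is exactly the right diagnostic. But the per-edge charging scheme you propose does not go through for cross edges, and the gap is not just a matter of bookkeeping.

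The issue is this: when $(u,v) \in E_t$ is a cross edge with $u \in C$ and $v \in C'$, the reason $C$ and $C'$ remain separate in $P_t$ is \textsc{HCCTriangle}'s actual merge rule --- two clusters merge only when every vertex in each has at least half its potential neighbors in the other (the \emph{highly connected} condition). The obstruction to merging is therefore some vertex $z$ with fewer than $|C'|/2$ neighbors in $C'$ (or symmetrically), and $z$ need not be $u$, $v$, or even adjacent to either. So there is no canonical witness $w$ making $\{u,v,w\}$ a bad triangle for that specific cross edge, and the charging cannot be set up edge by edge. Your guess of ``a local pivot or ball-carving step'' is not what the algorithm does, and without the correct merge rule the argument cannot start.

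The paper's proof is a counting argument, not a charging argument, driven by two structural consequences of the highly-connected rule. First, one proves by induction on $t$ that every vertex in every cluster $C$ has internal degree at least $|C|/2$ (Lemma~\ref{lemma:deghalf}); from this an isoperimetric inequality follows: any proper subset $X \subset C$ has $|\partial X| \geq |C|/2$ (Lemma~\ref{lemma:boundaryset}). For a within-cluster missing edge $(x,y)$, the half-degree property forces $x$ and $y$ to share at least two common neighbors in $C$, so each missing edge lies in at least two bad triangles inside $C$; this handles the within-cluster disagreements with a factor well below $4$. For cross edges, the key step (Lemma~\ref{lemma:btwntriangle}) uses the \emph{failure} of high connectivity together with the isoperimetric bound to show that the total number of edges between $C$ and $C'$ is at most four times the number of bad triangles of type $T_{(C,C')}$ --- a global inequality, not a local assignment. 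The constant $4$ emerges from this between-cluster case analysis, not from a multiplicity bound on a charging map.

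For the runtime, union-find is not the mechanism, because the expensive step is testing the highly-connected condition when an edge arrives. The paper maintains, for each vertex $x$ and live cluster $C$, the count $D(x,C)$ of neighbors and a flag $H(x,C)$ indicating whether $D(x,C) \geq |C|/2$, and for each cluster pair the number $M(C_a,C_b)$ of vertices with the flag set. An edge insertion updates these in $O(1)$; a merge (which happens exactly $n-1$ times) rebuilds the counters for the new cluster in $O(n)$, giving $O(n^2)$ total with no Ackermann factor.
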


By Proposition~\ref{prop:ell1vectornorm} and Theorem~\ref{thm:hcctriangle}, we can find an ultrametric fit using the \textsc{HCCTriangle} subroutine to cluster our points. This algorithm we refer to as \textsc{HCCUltraFit}. Using Theorem~\ref{thm:hcctriangle}, we have following $\ell_1$ bound.
\begin{thm}\label{thm:hccultra}
    Given $d$, \textsc{HCCUltraFit} outputs an ultrametric $d_U$ with $\|d - d_U\|_1 \leq 4 \|\mathbf\Delta\|_1 = 4 \ell \cdot \AvgUM_1(d)$. The algorithm runs in time $O(n^2 \log n)$.
\end{thm}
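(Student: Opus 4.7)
The plan is to reduce Theorem~\ref{thm:hccultra} to Theorem~\ref{thm:hcctriangle} and Proposition~\ref{prop:ell1vectornorm} by a simple integration-over-thresholds argument. First I would describe the \textsc{HCCUltraFit} algorithm: sort the $\binom{n}{2}$ pairwise distances as $0 = s_0 < s_1 < \cdots < s_k$; for each threshold $s_t$, form the nested $s_t$-neighbor graph $G_{s_t} = (X, E_{s_t})$ with $E_{s_t} = \{(x,y) : d(x,y) \leq s_t\}$; feed this nested sequence to \textsc{HCCTriangle} to obtain hierarchical partitions $P_0, P_1, \ldots, P_k$, where $P_t$ subdivides $P_{t+1}$; and finally define
\[
    d_U(x,y) := \min\{\, s_t : x,y \text{ lie in the same block of } P_t \,\}.
\]
The hierarchical (refining) structure of the $P_t$ guarantees the strong triangle inequality, so $d_U$ is an ultrametric: if $d_U(x,y), d_U(y,z) \leq s_t$ then both pairs are already merged at level $t$, hence so is $(x,z)$.

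Next I would establish the key identity
\[
    \|d - d_U\|_1 \;=\; \int_0^{\infty} |E_s \,\Delta\, E(P_s)|\, ds,
\]
where $P_s := P_t$ for $s_{t-1} \leq s < s_t$. For a single pair $(x,y)$, the symmetric difference indicator $\mathbf{1}[(x,y) \in E_s \Delta E(P_s)]$ equals $1$ precisely when $s$ lies strictly between $d(x,y)$ and $d_U(x,y)$, since $(x,y) \in E_s \iff d(x,y) \leq s$ and $(x,y) \in E(P_s) \iff d_U(x,y) \leq s$. Integrating in $s$ gives $|d(x,y) - d_U(x,y)|$, and summing over all pairs yields the identity. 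Applying Theorem~\ref{thm:hcctriangle} inside the integral gives
\[
    \|d - d_U\|_1 \;\leq\; 4 \int_0^\infty |B(G_s)|\, ds,
\]
and Proposition~\ref{prop:ell1vectornorm} identifies the right-hand integral with $4\|\mathbf\Delta(d)\|_1 = 4\ell \cdot \AvgUM_1(d)$.

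For the runtime, sorting the $\binom{n}{2}$ pairwise distances costs $O(n^2 \log n)$. Since the edge sets $E_{s_t}$ are nested (each step adds a single edge), the hierarchical partitions can be updated incrementally, so a single invocation of \textsc{HCCTriangle} produces all $P_t$ in total time $O(n^2)$ by Theorem~\ref{thm:hcctriangle}. Reading off $d_U$ from the merge history also takes $O(n^2)$, so sorting dominates and we obtain the claimed $O(n^2 \log n)$ bound.

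The main conceptual obstacle is the translation between the combinatorial symmetric-difference quantity bounded by \textsc{HCCTriangle} (which is pointwise-in-$t$) and the $\ell_1$ distortion (a sum over pairs); the integral identity above is what glues them together and is the only nontrivial piece that is not immediate from Theorems~\ref{thm:hcctriangle} and Proposition~\ref{prop:ell1vectornorm}. A secondary technical point is ensuring the level-set function $s \mapsto |E_s \Delta E(P_s)|$ is measurable and that the sum $\sum_t (s_{t+1} - s_t)|E_{s_t} \Delta E(P_{s_t})|$ equals both sides, but this is just the usual ``layer-cake'' computation since everything is piecewise constant in $s$.
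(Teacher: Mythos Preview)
Your proposal is correct and follows essentially the same argument as the paper: both derive the integral identity $\|d-d_U\|_1 = \int_0^\infty |E_s \Delta E(P_s)|\,ds$ from the pairwise observation that $(x,y)\in E_s \Delta E(P_s)$ exactly on the interval between $d(x,y)$ and $d_U(x,y)$, then apply Theorem~\ref{thm:hcctriangle} under the integral and invoke Proposition~\ref{prop:ell1vectornorm}. Your additional remarks (verifying that $d_U$ is an ultrametric, the piecewise-constant ``layer-cake'' justification) are fine and slightly more explicit than the paper, but the route is the same.
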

In other words, \textsc{HCCUltraFit} solves the HCC Problem~\ref{obj:hcctriangle} with constant $C = 4$. The following proof shows why adapting \textsc{HCCTriangle} as a subroutine is successful.

\textbf{Proof of Theorem~\ref{thm:hccultra} from Theorem~\ref{thm:hcctriangle}:} Suppose the outputs of $\textsc{HCCTriangle}$ and $\textsc{HCCUltraFit}$ are $\{P_t\}$ and $d_U$, respectively. Denote $d_i := d(e_i)$ ($d_0 = 0, d_{k+1} = \infty$) and, for any $s > 0$, set
\[G_s = (X,E_t), \; P_s = P_t \;\text{for}\; d_t \leq s < d_{t+1}.\]
Then, for any $x,y \in \binom{X}{2}$, we see that
\[
    (x,y) \in E_s \Delta E(P_s) \quad \iff \quad d(x,y) \leq s < d_U(x,y) \:\text{ or }\: d_U(x,y) \leq s < d(x,y).
\]
Again, we use the integral notion from Proposition~\ref{prop:ell1vectornorm}. Every edge $(x,y)$ will contribute $|E_s \Delta E(P_s)|$ with amount exactly $|d_U(x,y) - d(x,y)|$. Then, by Theorem~\ref{thm:hcctriangle}, 
\begin{align*}
    \|d - d_U\|_1 & = \int_0^\infty |E_s \Delta E(P_s)| ds \\
    & \leq 4 \int_0^\infty |B(G_s)| ds = 4 \|\mathbf\Delta(d)\|_1,
\end{align*}
as desired. Assuming that $\textsc{HCCTriangle}$ runs in time $O(n^2)$, $\textsc{HCCUltraFit}$ runs in time $O(n^2 \log n)$ as the initial step of sorting over all pairs is needed. Thus ends the proof.

By the reduction argument we discussed in Section~\ref{sec:background}, we can put all of these pieces together to conclude the following:
\begin{thm}\label{thm:hcctree}
Given $(X,d)$, we can find two tree fits with the following guarantees:
    \begin{itemize}
    \item Given a base point point $x \in X$, \textsc{HCCRootedTreeFit} outputs a tree fit $d_T$ with $\|d - d_T\|_1 \leq 8 \|\mathbf\Delta_x(d)\|_1$. The algorithm runs in $O(n^2 \log n)$.
    \item There exists $x \in X$ where $\|\mathbf\Delta_x(d)\|_1 \leq \binom{n-1}{3} \operatorname{AvgHyp}_1(d)$. Therefore, given $(X,d)$, one can find a (rooted) tree metric $d_T$ with $\|d - d_T\|_1 \leq 8 \binom{n-1}{3} \operatorname{AvgHyp}_1(d)$ in time $O(n^3 \log n)$.
    \end{itemize}
\end{thm}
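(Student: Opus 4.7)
The plan is to derive both statements by combining Theorem~\ref{thm:hccultra} with a rooted version of the ultrametric-to-tree reduction (the same machinery that underlies Theorem~\ref{thm:ulttotree}) and a pigeonhole averaging over base points.

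For the first bullet, I would instantiate the Farach--Kannan--Warnow-style reduction at the given base point $x$. Concretely, this reduction replaces $d$ by an auxiliary distance $d'$ on $X \setminus \{x\}$ (built from the Gromov products based at $x$) whose three-point condition on any triple $\{a,b,c\}$ equals the four-point condition $fp(d;x,a,b,c)$; in particular
\[
    \|\mathbf\Delta(d')\|_1 \;=\; \|\mathbf\Delta_x(d)\|_1.
\]
Running \textsc{HCCUltraFit} on $d'$ and applying Theorem~\ref{thm:hccultra} yields an ultrametric $d'_U$ with $\|d' - d'_U\|_1 \leq 4\|\mathbf\Delta(d')\|_1 = 4\|\mathbf\Delta_x(d)\|_1$. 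Inverting the reduction recovers an $x$-restricted tree metric $d_T$ on $X$, and the standard analysis shows that this step inflates the $\ell_1$ error by at most a factor of two, giving $\|d - d_T\|_1 \leq 8 \|\mathbf\Delta_x(d)\|_1$. The runtime is dominated by the call to \textsc{HCCUltraFit} on $n-1$ points, which is $O(n^2 \log n)$.

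For the second bullet I would use a symmetry/averaging argument over base points. Since $fp(d;x,y,z,w)$ is invariant under permutations of its four arguments, summing the first-bullet quantity across $x \in X$ double counts each quadruple exactly four times:
\[
    \sum_{x \in X} \|\mathbf\Delta_x(d)\|_1 \;=\; 4 \sum_{\{x,y,z,w\} \in \binom{X}{4}} fp(d;x,y,z,w) \;=\; 4 \binom{n}{4} \operatorname{AvgHyp}_1(d).
\]
By pigeonhole, some $x^{*} \in X$ satisfies $\|\mathbf\Delta_{x^{*}}(d)\|_1 \leq (4/n)\binom{n}{4} \operatorname{AvgHyp}_1(d) = \binom{n-1}{3}\operatorname{AvgHyp}_1(d)$. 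Algorithmically I would simply run \textsc{HCCRootedTreeFit} once for each of the $n$ candidate base points, measure each resulting $\ell_1$ error in $O(n^2)$, and return the best tree. Since $x^{*}$ is among the candidates and satisfies the first-bullet bound, the returned tree obeys $\|d - d_T\|_1 \leq 8 \binom{n-1}{3}\operatorname{AvgHyp}_1(d)$, and the total runtime is $n \cdot O(n^2\log n) = O(n^3 \log n)$.

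The main obstacle is making the rooted reduction in the first bullet go through with the precise constants required. Specifically, one must verify (i) that the change-of-metric $d \mapsto d'$ produces equality (not merely proportionality) between $\|\mathbf\Delta(d')\|_1$ and $\|\mathbf\Delta_x(d)\|_1$, and (ii) that converting the ultrametric fit $d'_U$ back to an $x$-rooted tree fit $d_T$ inflates the $\ell_1$ error by at most a factor of two. Both facts are implicit in the proof of Theorem~\ref{thm:ulttotree}, but they must be tracked carefully for the non-averaged rooted version in order to land on the exact constant $8$; everything else is bookkeeping with Theorem~\ref{thm:hccultra} and the pigeonhole calculation above.
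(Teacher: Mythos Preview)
Your proposal is correct and follows essentially the same route as the paper: run \textsc{HCCUltraFit} on the Gromov-product-based auxiliary distance at the chosen base point, invoke Theorem~\ref{thm:hccultra}, and then average over base points via the identity $\sum_{x}\|\mathbf\Delta_x(d)\|_1 = 4\binom{n}{4}\operatorname{AvgHyp}_1(d)$ to pigeonhole out a good $x^*$. The only cosmetic difference is where the factor $2$ sits: the paper uses $d+c_x = 2(M - gp_x)$ so that the back-conversion $d_T = d_U - c_x$ is error-preserving and the $2$ appears via $tp(d+c_x;\cdot)=2\,fp_x(d;\cdot)$, whereas your $d'=M-gp_x$ pushes the $2$ into the inversion step $d_T = 2d'_U - c_x$; both bookkeeping choices yield the same constant $8$.
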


\subsection{Algorithm and analysis}
\label{sec:algorithm}

    \begin{algorithm}
    \caption{\textsc{isHighlyConnected}: tests if two clusters are highly connected.}
    \begin{algorithmic}
            \Function{\textsc{isHighlyConnected}}{}
                \State \textbf{Input}: vertex set $X,Y$ and edge set $E$
                \State For $x \in X$:
                \State \quad If $|\{y \in Y| (x,y) \in E\}| < \frac{|Y|}{2}$:
                \State \quad \quad \textbf{return} False
                \State For $y \in Y$:
                \State \quad If $|\{x \in X| (x,y) \in E\}| < \frac{|X|}{2}$:
                \State \quad \quad \textbf{return} False
                \State \textbf{return} True
            \EndFunction
     \end{algorithmic}
    \end{algorithm}
    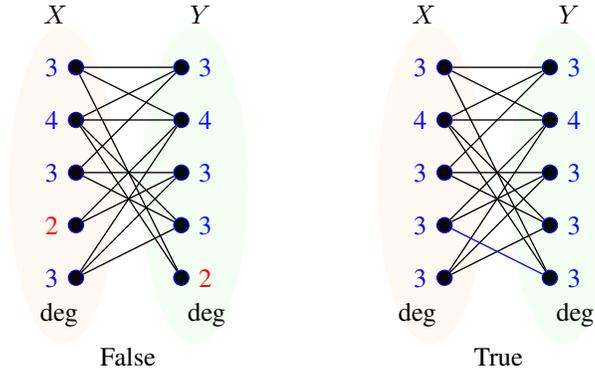
\begin{figure}[h]
        \centering
        \begin{tikzpicture}[scale = 0.7, >=stealth,
            point/.style = {draw, dotted,  fill = white, inner sep = 0pt},
            dot/.style   = {draw, circle, color = blue,  fill = black, inner sep = 2.0pt},
            >=latex,
            line width=0.5pt,
            Brace/.style={
                decorate,
                decoration={
                    brace,
                    raise=-7pt
                }
            }
            ]
            \def \xshift{7}
            \node (x1) at (0, 0) [dot, label = {180: \textcolor{blue}{3}}] {};
            \node (x2) at (0, 1) [dot, label = {180: \textcolor{red}{2}}] {};
            \node (x3) at (0, 2) [dot, label = {180: \textcolor{blue}{3}}] {};
            \node (x4) at (0, 3) [dot, label = {180: \textcolor{blue}{4}}] {};
            \node (x5) at (0, 4) [dot, label = {180: \textcolor{blue}{3}}] {}; 
            \node (x) at (0.4, -0.7) [label = {180: deg}] {};
            \filldraw[orange, opacity = 0.05] (-0.25,1.75) ellipse (1 and 3);
            \node (X) at (0.2, 5) [label = {180:$X$}] {};
    
            \node (y1) at (2, 0) [dot, label = {0: \textcolor{red}{2}}] {};
            \node (y2) at (2, 1) [dot, label = {0: \textcolor{blue}{3}}] {};
            \node (y3) at (2, 2) [dot, label = {0: \textcolor{blue}{3}}] {};
            \node (y4) at (2, 3) [dot, label = {0: \textcolor{blue}{4}}] {};
            \node (y5) at (2, 4) [dot, label = {0: \textcolor{blue}{3}}] {};
            \node (y) at (3.2, -0.7) [label = {180: deg}] {};
            \filldraw[green, opacity = 0.05] (2.25,1.75) ellipse (1 and 3);
            \node (Y) at (1.8, 5) [label = {0:$Y$}] {};
            \node (Z) at (0.1, -1.5) [label = {0:False}] {};
    
            \draw (x1) to (y2); \draw (x1) to (y3); \draw (x1) to (y4);
            \draw (x2) to (y3); \draw (x2) to (y4);
            \draw (x3) to (y2); \draw (x3) to (y3); \draw (x3) to (y5);
            \draw (x4) to (y1); \draw (x4) to (y2); \draw (x4) to (y4); \draw (x4) to (y5);
            \draw (x5) to (y1); \draw (x5) to (y4); \draw (x5) to (y5);
    
            \node (x1) at (\xshift, 0) [dot, label = {180:\textcolor{blue}{3}}] {};
            \node (x2) at (\xshift, 1) [dot, label = {180:\textcolor{blue}{3}}] {};
            \node (x3) at (\xshift, 2) [dot, label = {180:\textcolor{blue}{3}}] {};
            \node (x4) at (\xshift, 3) [dot, label = {180:\textcolor{blue}{4}}] {};
            \node (x5) at (\xshift, 4) [dot, label = {180:\textcolor{blue}{3}}] {};
            \node (x) at (0.2+\xshift, -0.7) [label = {180: deg}] {};
            \filldraw[orange, opacity = 0.05] (-0.25+\xshift,1.75) ellipse (1 and 3) {};
            \node (X) at (0.2+\xshift, 5) [label = {180:$X$}] {};
    
            \node (y1) at (2+\xshift, 0) [dot, label = {0:\textcolor{blue}{3}}] {};
            \node (y2) at (2+\xshift, 1) [dot, label = {0:\textcolor{blue}{3}}] {};
            \node (y3) at (2+\xshift, 2) [dot, label = {0:\textcolor{blue}{3}}] {};
            \node (y4) at (2+\xshift, 3) [dot, label = {0:\textcolor{blue}{4}}] {};
            \node (y5) at (2+\xshift, 4) [dot, label = {0:\textcolor{blue}{3}}] {};
            \node (y) at (3.2+\xshift, -0.7) [label = {180: deg}] {};
            \filldraw[green, opacity = 0.05] (2.25+\xshift,1.75) ellipse (1 and 3);
            \node (Y) at (1.8+\xshift, 5) [label = {0:$Y$}] {};
    
            \draw (x1) to (y2); \draw (x1) to (y3); \draw (x1) to (y4);
            \draw (x2) to (y3); \draw (x2) to (y4); \draw[blue] (x2) to (y1);
            \draw (x3) to (y2); \draw (x3) to (y3); \draw (x3) to (y5);
            \draw (x4) to (y1); \draw (x4) to (y2); \draw (x4) to (y4); \draw (x4) to (y5);
            \draw (x5) to (y1); \draw (x5) to (y4); \draw (x5) to (y5);
    
            \node (Z) at (0.2+\xshift, -1.5) [label = {0:True}] {};
        \end{tikzpicture}
        \caption{Illustration of highly connectedness condition}
    \end{figure}
    \begin{algorithm}
        \caption{\textsc{HCCTriangle}: Find HCC which respects triangle objectives}            
        \begin{algorithmic}
        \Function{\textsc{HCCTriangle}}{}
            \State \textbf{Input}: $V = \{v_1 , \cdots , v_n \}$ and an ordering of all pairs $\{e_1, e_2, \cdots, e_{\binom{n}{2}}\}$ on $V$ so that $E_t = \{e_1, e_2, \cdots, e_t\}$.
            \State \textbf{Desired Output}: \emph{hierarchical} partition $\{P_t\}$ for $0 \leq t \leq \binom{n}{2} = k$ so that $|E_t \Delta E(P_t)| \leq 4 \cdot |B(G_t)|$ holds for every $t$.
            \State \textbf{Init}: $\mathcal{P} = \{ \{v_1\}, \{v_2\}, \cdots, \{v_n\} \}$
            \State $P_0 \leftarrow \mathcal{P}$
            \State For $t \in \{1,2, \cdots, \binom{n}{2}\}$:
            \State \quad Take $e_t = (x,y)$ with $x \in C_x$ and $y \in C_y \: (C_x, C_y \in \mathcal{P})$
            \State \quad If $C_x \neq C_y$:
            \State \quad \quad If \textsc{isHighlyConnected}$(C_x, C_y, E_t)$ is true:
            \State \quad \quad \quad add $C = C_x \cup C_y$ and remove $C_x , C_y$ in $\mathcal{P}$.
            \State \quad $P_t \leftarrow \mathcal{P}$
            \State \textbf{return} $\{P_t\}$
        \EndFunction
        \end{algorithmic}
    \end{algorithm}
   \begin{algorithm}
        \caption{\textsc{HCCUltraFit}: Ultrametric fitting algorithm, uses \textsc{HCCTriangle}}
        \begin{algorithmic}
        \Function{\textsc{HCCUltraFit}$(d)$}{}
            \State Sort $\binom{X}{2}$ as $\{e_1 , \cdots, e_{\binom{n}{2}}\}$ so that $d(e_1) \leq d(e_2) \leq \cdots \leq d(e_{\binom{n}{2}})$
            \State $\{P_t\} = \textsc{HCCTriangle}(X, \{e_t\})$
            \State $d_U(x,y) \leftarrow d(e_j)$ where $j = \operatorname{argmin}_t(x,y$ are in same cluster in $P_t)$
            \State \textbf{return} $d_U$
        \EndFunction
        \end{algorithmic}
    \end{algorithm}
        
        
        
        In the rest of this subsection, we provide a sketch of the proof that \textsc{HCCTriangle} provides the desired correlation clustering (the detailed proof can be found in Appendix~\ref{app_sec:proof_main}). We denote the output of \textsc{HCCTriangle} by $\{P_t\}$ and also assume $E = E_t$ and $P_t = \{C_1, C_2, \cdots, C_k\}$. The algorithm \textsc{HCCTriangle} agglomerates two clusters if they are highly connected, adds the cluster to the partition, and iterates. The key to the ordering of the edges input to \textsc{HCCTriangle} is that they are ordered by increasing distance so that the number of edges that are in ``disagreement'' in the correlation clustering is upper bounded by the number of edges whose distances ``violate'' a triangle relationship. The proof proceeds in a bottom up fashion. (It is clear that the output partition is naturally hierarchical.) We count the number of bad triangles ``within'' and ``between'' clusters, which are lower bounded by the number of disagreement edges ``within'' and ``between'' clusters, respectively. The proof uses several combinatorial properties which follow from the highly connected condition. This is the key point of our proof.   
       
        \textbf{From an ultrametricity fit to a rooted tree fit:} Following a procedure from Cohen-Addad, et al.~\cite{cohen2022fitting}, we can also obtain a tree fit with the $\ell_1 / \ell_1$ objective. Note that the algorithm \textsc{HCCRootedTreeFit} takes the generic reduction method from ultrametric fitting algorithm but we have instantiated it with \textsc{HCCUltraFit}. For a self-contained proof, see the Appendix~\ref{app_sec:selfcontained}.

        \textbf{Proof of Theorem~\ref{thm:hcctree}:} This proof can be easily shown simply by applying Theorem~\ref{thm:ulttotree} with $p = q = 1$ and $f(n) = 4 \binom{n}{3}$. 

        \begin{algorithm}[h] 
            \caption{Find a tree fitting given an ultrametric fitting procedure}
            \begin{algorithmic}[1]    
            \Procedure{\textsc{HCCRootedTreeFit}}{}
                \State \textbf{Input}: distance function $d$ on $\binom{X}{2}$ and base point $w \in X$
                \State \textbf{Output}: (rooted) tree metric $d_T$ which fits $d$ and $d(x,w) = d_T(x,w)$ for all $x \in X$
                \State $M \leftarrow \max_{x \in X} d(x,w)$
                \State $c_w (x,y) \leftarrow 2M - d(x,w) - d(y,w)$ for all $x,y \in \binom{X}{2}$
                \State $d_{U} \leftarrow \textsc{HCCUltraFit}(d+c_w)$
                \State $d_T \leftarrow d_U - c_w$
                \State \textbf{return} $d_T$
            \EndProcedure
            \end{algorithmic}
        \end{algorithm}
        
        \textbf{Running Time} Although  \textsc{isHighlyConnected} is seemingly expensive, there is a way to implement \textsc{HCCTriangle} so that all procedures run in $O(n^2)$ time. Thus, \textsc{HCCUltraFit} can be implemented so as to run in $O(n^2 \log n)$ time. The detailed algorithm can be found in Appendix~\ref{app_sec:implement}.

\textbf{Asymptotic Tightness} Consider $(d,X)$ with $X = \{x_1, \cdots, x_n\}$ and $d(x_1, x_2) = d(x_1, x_3) = 1$ and 2 otherwise. Then we see that $tp(d;x_1, x_2, x_3)= 1$ and 0 otherwise, so that $\|\Delta\|_p = 1$. One can easily check that for any ultrametric $d_U$, $|\epsilon(x_1, x_2)|^p + |\epsilon(x_1, x_3)|^p + |\epsilon(x_2, x_3)|^p \geq 2^{1-p}$ for $\epsilon := d_U - d$. When $p = 1$, $\|d - d_U\|_1 \geq \|\mathbf\Delta(d)\|_1 = \binom{n}{3}\AvgUM(d)$ holds for any ultrametric $d_U$. While \textsc{HCCUltraFit} guarantees $\|d - d_U\|_1 \leq 4 \|\mathbf\Delta(d)\|_1 = 4 \binom{n}{3}\AvgUM(d)$; this shows that our theoretical bound is asymptotically tight.

Examples demonstrating \emph{how} \textsc{HCCUltraFit} works and related discussion can be found in Appendix~\ref{app_sec:examples}.


 \section{Experiments}

In this section, we run \textsc{HCCRootedTreeFit} on several different type of data sets, those that are standard for geometric graph neural nets and those that are synthetic. We also compare our results with other known algorithms. We conclude that \textsc{HCCRootedTreeFit} (HCC) is optimal when the data sets are close to tree-like and when we measure with respect to distortion in the $\ell_1$ sense and running time. It is, however, suboptimal in terms of the $\ell_\infty$ measure of distortion (as to be expected). We also conclude that purportedly hierarchical data sets do not, in fact, embed into trees with low distortion, suggesting that geometric graph neural nets should be configured with different geometric considerations. Appendix \ref{app:exprdetails} contains further details regarding the experiments.

\subsection{Common data sets}

We used common unweighted graph data sets which are known to be hyperbolic or close to tree-like and often used in graph neural nets, especially those with geometric considerations. The data sets we used are \textsc{C-elegan}, \textsc{CS Phd} from \cite{nr2015}, and \textsc{CORA}, \textsc{Airport} from \cite{sen2008collective}. (For those which contain multiple components, we chose the largest connected component.) Given an unweighted graph, we computed its shortest-path distance matrix and used that input to obtain a tree metric. We compared these results with the following other tree fitting algorithms \textsc{TreeRep} (TR) \cite{sonthalia2020tree}, \textsc{NeighborJoin} (NJ) \cite{saitou1987neighbor}, and the classical Gromov algorithm. As \textsc{TreeRep} is a randomized algorithm and \textsc{HCCRootedTreeFit} and Gromov's algorithm depends on the choice of a pivot vertex, we run all of these algorithms 100 times and report the average error with standard deviation. All edges with negative weights have been modified with weight 0, as \textsc{TreeRep} and \textsc{NeighborJoin} both occasionally produce edges with negative weights. Recall, both \textsc{TreeRep} and \textsc{NeighborJoin} enjoy \emph{no} theoretical guarantees on distortion.


\begin{table}
\centering
\scalebox{1.0}{
\begin{tabular}{lcccc}
        \toprule
        \textbf{Data set} & {\textsc{C-elegan}} & {\textsc{CS Phd}} & {\textsc{CORA}} & {\textsc{Airport}}  \\
        \midrule
        $n$ & {452} & {1025} & {2485} & {3158} \\
        $\operatorname{Hyp}(d)$ & {1.5} & {6.5} & {11} & {1}\\
        $\operatorname{AvgHyp}(d)$ & {0.13} & {0.51} & {0.36} & {0.18} \\
        Bound & {158.0} & {1384.5} & {2376.2} & {1547.1} \\ 
        \midrule
        HCC     & 0.90{\tiny$\pm0.19$} & 2.60{\tiny$\pm1.11$} & 2.{4}2\tiny$\pm0.44$ & 1.09\tiny$\pm0.15$ \\ 
        Gromov  & 1.14\tiny$\pm0.04$ & 2.79\tiny$\pm0.36$ & 3.38\tiny$\pm0.13$ & 1.56\tiny$\pm0.08$ \\ 
        TR      & 0.83\tiny$\pm0.16$ & {2.55\tiny$\pm1.34$} & 2.91\tiny$\pm0.63$ & 1.28\tiny$\pm0.21$ \\ 
        NJ      & \textbf{0.30} & \textbf{1.35} & \textbf{1.06} & \textbf{0.49} \\
        \bottomrule
    \end{tabular}}
\caption{Connection between hyperbolicity and average hyperbolicity and how these quantities determine the average distortion ($\|d - d_T\|_1 / \binom{n}{2}$) of the resulting tree metric.}
\label{tab:realdistortion}
\end{table}



\begin{table}
\centering
\scalebox{1.0}{
    \begin{tabular}{lcccc}
    \toprule
    \textbf{Data set} & {\textsc{C-elegan}} & {\textsc{CS Phd}} & {\textsc{CORA}} & {\textsc{Airport}}  \\
    \midrule
    HCC & 4.3\tiny$\pm0.64$ & 23.37\tiny$\pm3.20$ & 19.30\tiny$\pm1.11$ & 7.63\tiny$\pm0.54$ \\ 
    Gromov & 3.32\tiny$\pm0.47$ & \textbf{13.24\tiny$\pm0.67$} & \textbf{9.23\tiny$\pm0.53$} & \textbf{4.04\tiny$\pm0.20$} \\ 
    TR & 5.90\tiny$\pm0.72$ & 21.01\tiny$\pm3.34$ & 16.86\tiny$\pm2.11$ & 10.00\tiny$\pm1.02$ \\ 
    NJ & \textbf{2.97} & 16.81 & 13.42 & 4.18 \\ 
    \bottomrule
\end{tabular}}
\caption{$\ell_\infty$ error (i.e., max distortion)}
\label{tab:realmaxdistortion}
\end{table}


\begin{table}
\centering
\scalebox{0.8}{
    \begin{tabular}{rlcccc}
    \toprule
    \multicolumn{2}{c}{\textbf{Data set}} & {\textsc{C-elegan}} & {\textsc{CS Phd}} & {\textsc{CORA}} & {\textsc{Airport}} \\
    \midrule
    \scriptsize$O(n^2 \log n)$ & HCC& 0.648\tiny$\pm0.013$ & 3.114\tiny$\pm0.029$ & 18.125\tiny$\pm0.330$ & 28.821\tiny$\pm0.345$ \\ 
    \scriptsize$O(n^2)$ & Gromov    & 0.055\tiny$\pm0.005$ & 0.296\tiny$\pm0.004$ & 2.063\tiny$\pm0.033$ & 3.251\tiny$\pm0.033$ \\ 
    \scriptsize$O(n^2)$ & TR        & 0.068\tiny$\pm0.009$ & 0.223\tiny$\pm0.046$ & 0.610\tiny$\pm0.080$ & 0.764\tiny$\pm0.151$ \\ 
    \scriptsize$O(n^3)$ & NJ        & 0.336 & 4.659 & 268.45 & 804.67 \\
    \bottomrule
\end{tabular}}
\caption{Running Time(s). For NJ, we implemented the naive algorithm, which is $O(n^3)$. We note that \textsc{TreeRep} produces a tree and not a set of distances; its running times excluded a step which computes the distance matrix, which takes $O(n^2)$ time.}
\label{tab:runtimes}
\end{table}

First, we examine the results in Table~\ref{tab:realdistortion}. We note that although the guaranteed bound (of average distortion), $8\binom{n-1}{3}\AvgHyp(d) / \binom{n}{2}$ is asymptotically tight even in worst case analysis, this bound is quite loose in practice; most fitting algorithms perform much better than that. We also see that the $\ell_1$ error of \textsc{HCCRootedTreeFit} is comparable to that of \textsc{TreeRep}, while \textsc{NeighborJoin} performs much better than those. It tends to perform better when the graph data set is known to be \emph{more} hyperbolic (or tree-like), \emph{despite no theoretical guarantees}. It is, however, quite slow.

Also, we note from Table~\ref{tab:realmaxdistortion} that Gromov's algorithm, which solves our $\ell_\infty / \ell_\infty$ hyperbolicity problem tends to return better output in terms of $\ell_\infty$ error. On the other hand, its result on $\ell_1$ error is not as good as the other algorithms. In contrast, our \textsc{HCCRootedTreeFit} performs better on the $\ell_1$ objective, which suggests that our approach to this problem is on target.

In analyzing the running times in Table~\ref{tab:runtimes}, we notice that \textsc{HCCRootedTreeFit} runs in truly $O(n^2 \log n)$ time. Also, its dominant part is the subroutine \textsc{HCCTriangle}, which runs in $O(n^2)$. 

\subsection{Synthetic data sets}

In order to analyze the performance of our algorithm in a more thorough and rigorous fashion, we generate random synthetic distances with low hyperbolicity. More precisely, we construct a synthetic weighted graph from fixed balanced trees. We use $\operatorname{BT}(r,h)$ to indicate a balanced tree with branching factor $r$ and height $h$ and \textsc{Disease} from \cite{chami2019hyperbolic}, an unweighted tree data set. For each tree, we add edges randomly, until we reach 500 additional edges. Each added edge is given a distance designed empirically to keep the $\delta$-hyperbolicity bounded by a value of 0.2.

    


 Then we measured the $\ell_1$ error of each tree fitting algorithm (and averaged over 50 trials). Note that all rooted tree fitting algorithms use a root node (for balanced trees, we used the apex; for \textsc{Disease}, we used node 0).


\begin{table}
\centering
\scalebox{0.8}{
\begin{tabular}{lccccc}    
    \hline
    \toprule
    \textbf{Initial Tree} & $\operatorname{BT}(8,3)$ & $\operatorname{BT}(5,4)$ & $\operatorname{BT}(3,5)$ & $\operatorname{BT}(2,8)$ & \textsc{Disease} \\
    \midrule
    $n$ & 585 & 776 & 364 & 511 & 2665 \\
    $\|\mathbf\Delta_{r}\|_1 / \binom{n-1}{3}$ & 0.01887\tiny$\pm0.00378$ & 0.01876\tiny$\pm0.00375$ & 0.00150\tiny$\pm0.00036$ & 0.00098\tiny$\pm0.00020$ & 0.00013\tiny$\pm$0.00010 \\
    \midrule
    HCC & \textbf{0.00443\tiny$\pm0.00098$} & 0.01538\tiny$\pm0.00733$ & \textbf{0.04153\tiny$\pm0.01111$} & 0.07426\tiny$\pm0.02027$ & \textbf{0.00061\tiny$\pm0.00058$} \\
    Gromov & 0.18225\tiny$\pm0.00237$ & 0.44015\tiny$\pm0.00248$ & 0.17085\tiny$\pm0.00975$ & 0.16898\tiny$\pm0.00975$ & 0.18977\tiny$\pm0.00196$ \\ 
    TR & 0.01360\tiny$\pm0.00346$ & \textbf{0.01103\tiny$\pm0.00336$} & 0.06080\tiny$\pm0.00874$ & 0.09550\tiny$\pm0.01887$ & 0.00081\tiny$\pm0.00059$ \\ 
    NJ & 0.03180\tiny$\pm0.00767$ & 0.06092\tiny$\pm0.01951$ & 0.04309\tiny$\pm0.00521$ & \textbf{0.04360\tiny$\pm0.00648$} & 0.00601\tiny$\pm0.00281$ \\
    \bottomrule
\end{tabular}}
\caption{Average $\ell_1$ error when $n_e = 500$}
\label{tab:synthetic}
\end{table}

For these experiments, we see quite different results in Table~\ref{tab:synthetic}. All of these data sets are truly \emph{tree-like}. Clearly, \textsc{NeighborJoin} performs considerably worse on these data than on the common data sets above, especially when the input comes from a tree with \emph{high branching factor} (note that the branching factor of \textsc{Disease} is recorded as 6.224, which is also high). We also note that Gromov's method behaves much worse than all the other algorithms. This is possibly because Gromov's method is known to produce a \emph{non-stretching} tree fit, while it is a better idea to \emph{stretch} the metric in this case. The theoretical bound is still quite loose, but not as much as with the common data sets.

\section{Discussion}
All of our experiments show that it is critical to quantify how ``tree-like'' a data set is in order to understand how well different tree fitting algorithms will perform on that data set. In other words, we cannot simply assume that a data set is generic when fitting a tree to it. Furthermore, we develop both a measure of how tree-like a data set is and an algorithm \textsc{HCCRootedTreeFit} that leverages this behavior so as to minimize the appropriate distortion of this fit. The performance of \textsc{HCCRootedTreeFit} and other standard tree fitting algorithms shows that commonly used data sets (especially in geometric graph neural nets) are quite far from tree-like and are not well-represented by trees, especially when compared with synthetic data sets. This suggests that we need considerably more refined geometric notions for learning tasks with these data sets.   

\newpage

\nocite{*} 
\printbibliography

@inproceedings{cohen2022fitting,
  title={Fitting distances by tree metrics minimizing the total error within a constant factor},
  author={Cohen-Addad, Vincent and Das, Debarati and Kipouridis, Evangelos and Parotsidis, Nikos and Thorup, Mikkel},
  booktitle={2021 IEEE 62nd Annual Symposium on Foundations of Computer Science (FOCS)},
  pages={468--479},
  year={2022},
  organization={IEEE}
}

@article{sonthalia2020tree,
  title={Tree! i am no tree! i am a low dimensional hyperbolic embedding},
  author={Sonthalia, Rishi and Gilbert, Anna C},
  journal={arXiv preprint arXiv:2005.03847},
  year={2020}
}

@article{elkin2008lower,
  title={Lower-stretch spanning trees},
  author={Elkin, Michael and Emek, Yuval and Spielman, Daniel A and Teng, Shang-Hua},
  journal={SIAM Journal on Computing},
  volume={38},
  number={2},
  pages={608--628},
  year={2008},
  publisher={SIAM}
}

@inproceedings{ailon2005fitting,
  title={Fitting tree metrics: Hierarchical clustering and phylogeny},
  author={Ailon, Nir and Charikar, Moses},
  booktitle={46th Annual IEEE Symposium on Foundations of Computer Science (FOCS'05)},
  pages={73--82},
  year={2005},
  organization={IEEE}
}

@article{agarwala1998approximability,
  title={On the approximability of numerical taxonomy (fitting distances by tree metrics)},
  author={Agarwala, Richa and Bafna, Vineet and Farach, Martin and Paterson, Mike and Thorup, Mikkel},
  journal={SIAM Journal on Computing},
  volume={28},
  number={3},
  pages={1073--1085},
  year={1998},
  publisher={SIAM}
}

@article{ailon2008aggregating,
  title={Aggregating inconsistent information: ranking and clustering},
  author={Ailon, Nir and Charikar, Moses and Newman, Alantha},
  journal={Journal of the ACM (JACM)},
  volume={55},
  number={5},
  pages={1--27},
  year={2008},
  publisher={ACM New York, NY, USA}
}

@incollection{ghys1990espaces,
  title={Espaces m{\'e}triques hyperboliques},
  author={Ghys, {\'E}tienne and De La Harpe, Pierre},
  booktitle={Sur les groupes hyperboliques d’apr{\`e}s Mikhael Gromov},
  pages={27--45},
  year={1990},
  publisher={Springer}
}

@article{bowditch2006course,
  title={A course on geometric group theory.},
  author={Bowditch, Brian H},
  year={2006},
  publisher={Citeseer}
}

@article{chatterjee2021average,
  title={Average Gromov hyperbolicity and the Parisi ansatz},
  author={Chatterjee, Sourav and Sloman, Leila},
  journal={Advances in Mathematics},
  volume={376},
  pages={107417},
  year={2021},
  publisher={Elsevier}
}

@article{saitou1987neighbor,
  title={The neighbor-joining method: a new method for reconstructing phylogenetic trees.},
  author={Saitou, Naruya and Nei, Masatoshi},
  journal={Molecular biology and evolution},
  volume={4},
  number={4},
  pages={406--425},
  year={1987}
}

@article{farach1995robust,
  title={A robust model for finding optimal evolutionary trees},
  author={Farach, Martin and Kannan, Sampath and Warnow, Tandy},
  journal={Algorithmica},
  volume={13},
  number={1},
  pages={155--179},
  year={1995},
  publisher={Springer}
}

@inproceedings{chepoi2008diameters,
  title={Diameters, centers, and approximating trees of delta-hyperbolicgeodesic spaces and graphs},
  author={Chepoi, Victor and Dragan, Feodor and Estellon, Bertrand and Habib, Michel and Vax{\`e}s, Yann},
  booktitle={Proceedings of the twenty-fourth annual symposium on Computational geometry},
  pages={59--68},
  year={2008}
}

@article{chen2013hyperbolicity,
  title={On the hyperbolicity of small-world and treelike random graphs},
  author={Chen, Wei and Fang, Wenjie and Hu, Guangda and Mahoney, Michael W},
  journal={Internet Mathematics},
  volume={9},
  number={4},
  pages={434--491},
  year={2013},
  publisher={Taylor \& Francis}
}

@inproceedings{coudert2022computing,
  title={Computing Graph Hyperbolicity Using Dominating Sets*},
  author={Coudert, David and Nusser, Andr{\'e} and Viennot, Laurent},
  booktitle={2022 Proceedings of the Symposium on Algorithm Engineering and Experiments (ALENEX)},
  pages={78--90},
  year={2022},
  organization={SIAM}
}

@inproceedings{sarkar2011low,
  title={Low distortion delaunay embedding of trees in hyperbolic plane},
  author={Sarkar, Rik},
  booktitle={International Symposium on Graph Drawing},
  pages={355--366},
  year={2011},
  organization={Springer}
}

@inproceedings{sala2018representation,
  title={Representation tradeoffs for hyperbolic embeddings},
  author={Sala, Frederic and De Sa, Chris and Gu, Albert and R{\'e}, Christopher},
  booktitle={International conference on machine learning},
  pages={4460--4469},
  year={2018},
  organization={PMLR}
}

@article{chami2020trees,
  title={From trees to continuous embeddings and back: Hyperbolic hierarchical clustering},
  author={Chami, Ines and Gu, Albert and Chatziafratis, Vaggos and R{\'e}, Christopher},
  journal={Advances in Neural Information Processing Systems},
  volume={33},
  pages={15065--15076},
  year={2020}
}

@article{chami2019hyperbolic,
  title={Hyperbolic graph convolutional neural networks},
  author={Chami, Ines and Ying, Zhitao and R{\'e}, Christopher and Leskovec, Jure},
  journal={Advances in neural information processing systems},
  volume={32},
  year={2019}
}

@inproceedings{monath2019gradient,
  title={Gradient-based hierarchical clustering using continuous representations of trees in hyperbolic space},
  author={Monath, Nicholas and Zaheer, Manzil and Silva, Daniel and McCallum, Andrew and Ahmed, Amr},
  booktitle={Proceedings of the 25th ACM SIGKDD International Conference on Knowledge Discovery \& Data Mining},
  pages={714--722},
  year={2019}
}

@article{nickel2017poincare,
  title={Poincar{\'e} embeddings for learning hierarchical representations},
  author={Nickel, Maximillian and Kiela, Douwe},
  journal={Advances in neural information processing systems},
  volume={30},
  year={2017}
}

@incollection{gromov1987hyperbolic,
  title={Hyperbolic groups},
  author={Gromov, Mikhael},
  booktitle={Essays in group theory},
  pages={75--263},
  year={1987},
  publisher={Springer}
}

@article{linial1995geometry,
  title={The geometry of graphs and some of its algorithmic applications},
  author={Linial, Nathan and London, Eran and Rabinovich, Yuri},
  journal={Combinatorica},
  volume={15},
  number={2},
  pages={215--245},
  year={1995},
  publisher={Springer}
}

@article{cavalli1967phylogenetic,
  title={Phylogenetic analysis. Models and estimation procedures},
  author={Cavalli-Sforza, Luigi L and Edwards, Anthony WF},
  journal={American journal of human genetics},
  volume={19},
  number={3 Pt 1},
  pages={233},
  year={1967},
  publisher={Elsevier}
}

@inproceedings{nickel2018learning,
  title={Learning continuous hierarchies in the lorentz model of hyperbolic geometry},
  author={Nickel, Maximillian and Kiela, Douwe},
  booktitle={International Conference on Machine Learning},
  pages={3779--3788},
  year={2018},
  organization={PMLR}
}

@article{fournier2015computing,
  title={Computing the Gromov hyperbolicity of a discrete metric space},
  author={Fournier, Herv{\'e} and Ismail, Anas and Vigneron, Antoine},
  journal={Information Processing Letters},
  volume={115},
  number={6-8},
  pages={576--579},
  year={2015},
  publisher={Elsevier}
}

@article{chowdhury2016improved,
  title={Improved error bounds for tree representations of metric spaces},
  author={Chowdhury, Samir and M{\'e}moli, Facundo and Smith, Zane T},
  journal={Advances in Neural Information Processing Systems},
  volume={29},
  year={2016}
}

@article{agarwal2018computing,
  title={Computing the Gromov-Hausdorff distance for metric trees},
  author={Agarwal, Pankaj K and Fox, Kyle and Nath, Abhinandan and Sidiropoulos, Anastasios and Wang, Yusu},
  journal={ACM Transactions on Algorithms (TALG)},
  volume={14},
  number={2},
  pages={1--20},
  year={2018},
  publisher={ACM New York, NY, USA}
}

@article{nica2016strong,
  title={Strong hyperbolicity},
  author={Nica, Bogdan and {\v{S}}pakula, J{\'a}n},
  journal={Groups, Geometry, and Dynamics},
  volume={10},
  number={3},
  pages={951--964},
  year={2016}
}

@book{das2017geometry,
  title={Geometry and dynamics in Gromov hyperbolic metric spaces},
  author={Das, Tushar and Simmons, David and Urba{\'n}ski, Mariusz},
  volume={218},
  year={2017},
  publisher={American Mathematical Soc.}
}

@inproceedings{abraham2007reconstructing,
  title={Reconstructing approximate tree metrics},
  author={Abraham, Ittai and Balakrishnan, Mahesh and Kuhn, Fabian and Malkhi, Dahlia and Ramasubramanian, Venugopalan and Talwar, Kunal},
  booktitle={Proceedings of the twenty-sixth annual ACM symposium on Principles of distributed computing},
  pages={43--52},
  year={2007}
}

@inproceedings{harb2005approximating,
  title={Approximating the best-fit tree under l p norms},
  author={Harb, Boulos and Kannan, Sampath and McGregor, Andrew},
  booktitle={Approximation, Randomization and Combinatorial Optimization. Algorithms and Techniques: 8th International Workshop on Approximation Algorithms for Combinatorial Optimization Problems, APPROX 2005 and 9th International Workshop on Randomization and Computation, RANDOM 2005, Berkeley, CA, USA, August 22-24, 2005. Proceedings},
  pages={123--133},
  year={2005},
  organization={Springer}
}

@article{day1987computational,
  title={Computational complexity of inferring phylogenies from dissimilarity matrices},
  author={Day, William HE},
  journal={Bulletin of mathematical biology},
  volume={49},
  number={4},
  pages={461--467},
  year={1987},
  publisher={Elsevier}
}

@article{sen2008collective,
  title={Collective classification in network data},
  author={Sen, Prithviraj and Namata, Galileo and Bilgic, Mustafa and Getoor, Lise and Galligher, Brian and Eliassi-Rad, Tina},
  journal={AI magazine},
  volume={29},
  number={3},
  pages={93--93},
  year={2008}
}

@inproceedings{nr2015,
     title={The Network Data Repository with Interactive Graph Analytics and Visualization},
     author={Ryan A. Rossi and Nesreen K. Ahmed},
     booktitle={AAAI},
     url={https://networkrepository.com},
     year={2015}
}

@article{samet1984quadtree,
  title={The quadtree and related hierarchical data structures},
  author={Samet, Hanan},
  journal={ACM Computing Surveys (CSUR)},
  volume={16},
  number={2},
  pages={187--260},
  year={1984},
  publisher={ACM New York, NY, USA}
}

\newpage
\section{Appendix}

\setcounter{page}{1}

\subsection{Proof of Theorem~\ref{thm:ulttotree}}
\label{app_sec:proofulttotree}

    We will follow the details from \cite{agarwala1998approximability} (and \cite{cohen2022fitting} to address minor issues). This is, in fact, the generic reduction method from ultrametric fitting.
    \begin{algorithm}[h] 
        \caption{Find a tree fitting given an ultrametric fitting procedure}
        \begin{algorithmic}[1]
        \Procedure{\textsc{UltraFit}}{}
            \State \textbf{Input}: distance function $d$ 
            \State \textbf{Output}: ultrametric $d_U$ which fits $d$
        \EndProcedure

        \Procedure{\textsc{RootedTreeFit}}{}
            \State \textbf{Input}: distance function $d$ on $\binom{X}{2}$ and base point $w \in X$
            \State \textbf{Output}: (rooted) tree metric $d_T$ which fits $d$ and $d(w,x) = d_T(w,x)$ for all $x \in X$
            \State Define $m = \max_{x \in X} d(w,x)$, $c_w (x,y)= 2m - d(w,x) - d(w,y)$, and $\beta_x = 2(m - d(w,x)) (x \in X)$
            \State $d_{U'} = \textsc{UltraFit}(d+c_w)$
            \State Restrict $d_U (x,y) = \min(\max(\beta_x, \beta_y, d_{U'}(x,y)),2m)$
            \State $d_T = d_U - c_w$
            \State \textbf{return} $d_T$
        \EndProcedure
        \end{algorithmic}
    \end{algorithm}

    \begin{claim}
        For any $x,y \in X$, $|d_{U}(x,y) - (d + c_w)(x,y)| \leq |d_{U'}(x,y) - (d + c_w)(x,y)|$ holds. In other words, the restriction will reduce the error.
    \end{claim}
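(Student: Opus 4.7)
The plan is to recognize the formula $d_U(x,y)=\min(\max(\beta_x,\beta_y,d_{U'}(x,y)),2m)$ as the clamp (equivalently, the $\mathbb{R}$-projection) of $d_{U'}(x,y)$ onto the closed interval $I_{x,y}:=[\max(\beta_x,\beta_y),\,2m]$. The claim will then follow from the standard $1$-Lipschitz property of projection onto a convex set, applied to the point $(d+c_w)(x,y)$, provided we verify that this point actually lies in $I_{x,y}$.

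Step 1 is to check the containment $(d+c_w)(x,y)\in I_{x,y}$ using only the triangle inequality for $d$. Expanding definitions,
\[
(d+c_w)(x,y)=d(x,y)+2m-d(w,x)-d(w,y).
\]
The upper bound $(d+c_w)(x,y)\le 2m$ reduces to $d(x,y)\le d(w,x)+d(w,y)$, which is the triangle inequality through $w$. The lower bound $(d+c_w)(x,y)\ge \beta_x=2m-2d(w,x)$ reduces to $d(w,y)\le d(w,x)+d(x,y)$, again the triangle inequality; the same argument with the roles of $x,y$ swapped gives $(d+c_w)(x,y)\ge\beta_y$. Thus $(d+c_w)(x,y)\ge\max(\beta_x,\beta_y)$. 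One also needs that the interval is nonempty, i.e. $\max(\beta_x,\beta_y)\le 2m$; since $d(w,\cdot)\ge 0$ we have $\beta_x,\beta_y\le 2m$, so this is immediate.

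Step 2 is to observe that the definition of $d_U(x,y)$ is literally the clamp of $d_{U'}(x,y)$ to $I_{x,y}$: if $d_{U'}(x,y)\in I_{x,y}$ then $d_U(x,y)=d_{U'}(x,y)$; if $d_{U'}(x,y)<\max(\beta_x,\beta_y)$ then $d_U(x,y)=\max(\beta_x,\beta_y)$; and if $d_{U'}(x,y)>2m$ then $d_U(x,y)=2m$.

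Step 3 is to apply non-expansiveness of the clamp. For any $a\in\mathbb{R}$ and any $z$ in the closed interval $I_{x,y}$,
\[
|\operatorname{clamp}_{I_{x,y}}(a)-z|\le |a-z|,
\]
as is immediate by a three-case check (or by $1$-Lipschitz projection onto a convex set). Taking $a=d_{U'}(x,y)$ and $z=(d+c_w)(x,y)$, which lies in $I_{x,y}$ by Step~1, yields
\[
|d_U(x,y)-(d+c_w)(x,y)|\le |d_{U'}(x,y)-(d+c_w)(x,y)|,
\]
which is exactly the claim. There is no real obstacle here; the only thing to be careful about is the Step~1 verification, since the whole geometric content of the reduction is precisely that the shift by $c_w$ places $d+c_w$ inside the ``admissible'' interval $[\max(\beta_x,\beta_y),2m]$ that $d_U$ is designed to lie in.
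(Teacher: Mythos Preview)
Your proof is correct and is essentially the same as the paper's: both verify via the triangle inequality that $(d+c_w)(x,y)$ lies in $[\max(\beta_x,\beta_y),2m]$, and then use that clamping $d_{U'}(x,y)$ to this interval cannot increase the distance to any point already inside it. The only difference is presentational: the paper argues the two nontrivial cases directly, while you package the same computation as ``projection onto an interval is $1$-Lipschitz.''
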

    \begin{claimproof}
        It is enough to check the two cases when $d_U$ differs. 
        \begin{itemize}
        \item[Case 1:] $\max(\beta_x, \beta_y) = d_U(x,y) > d_{U'}(x,y)$: without loss of generality, we assume that $d_U(x,y) = \beta_x \geq \beta_y$. We have $(d+c_w)(x,y) = d(x,y) + c_w(x,y) \geq (d(w,y) - d(w,x)) + (2m - d(w,x) - d(w,y)) = 2m - 2 d(w,x) = \beta_x$, which shows $(d+c_w)(x,y) \geq d_U(x,y) > d_{U'}(x,y)$. Therefore the claim holds.
        \item[Case 2:] $2m = d_U(x,y) < d_{U'}(x,y)$: since $(d+c_w)(x,y) = 2m - 2 gp_w(x,y) \leq 2m$, we have $(d+c_w)(x,y) \leq d_U(x,y) < d_{U'}(x,y)$. Again, the claim holds.
    \end{itemize}
    This completes the proof.
    \end{claimproof}
    
    \begin{claim}
        The restriction $d_U$ is also an ultrametric.
    \end{claim}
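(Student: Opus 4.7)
The plan is to verify the three defining properties of an ultrametric for
\[
    d_U(x,y) \;=\; \min\!\bigl(\max(\beta_x,\beta_y,d_{U'}(x,y)),\,2m\bigr).
\]
Symmetry is immediate (since $d_{U'}$ is symmetric and $\max/\min$ are symmetric in the $\beta$'s), and non-negativity follows from $\beta_x = 2(m - d(w,x))\ge 0$ (as $m = \max_{x}d(w,x)$) together with $d_{U'}\ge 0$. The only real work is the strong triangle inequality
\[
    d_U(x,y)\;\le\;\max\bigl(d_U(x,z),d_U(y,z)\bigr).
\]

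To handle this, I would decompose the definition into two stages: first set $f(x,y) := \max(\beta_x,\beta_y,d_{U'}(x,y))$, and then observe $d_U = \min(f,2m)$. I would show $f$ itself satisfies the ultrametric inequality by checking each of its three constituents. The term $d_{U'}(x,y)$ satisfies it by hypothesis, and is dominated by $\max(f(x,z),f(y,z))$. For $\beta_x$, note $\beta_x \le \max(\beta_x,\beta_z)\le f(x,z)$; similarly $\beta_y\le f(y,z)$. Thus each of the three quantities entering the max defining $f(x,y)$ is bounded by $\max(f(x,z),f(y,z))$, giving $f(x,y)\le\max(f(x,z),f(y,z))$.

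For the cap, I would use the elementary identity
\[
    \min\!\bigl(\max(a,b),c\bigr) \;=\; \max\!\bigl(\min(a,c),\min(b,c)\bigr),
\]
which one verifies by case analysis on the relative order of $a,b,c$. Applying this with $a=f(x,z)$, $b=f(y,z)$, $c=2m$ yields
\[
    d_U(x,y)=\min(f(x,y),2m)\le\min\bigl(\max(f(x,z),f(y,z)),2m\bigr)=\max(d_U(x,z),d_U(y,z)),
\]
which is the desired ultrametric inequality.

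There is essentially no hard step; the main thing to be careful about is (i) showing that the pointwise maximum of the $\beta$-terms and of $d_{U'}$ individually obey the ultrametric inequality so that $f$ inherits it, and (ii) verifying the $\min$-$\max$ swap identity so that the truncation at $2m$ does not destroy ultrametricity. I would present the identity as a short lemma (three cases) and then assemble the two stages into the final bound.
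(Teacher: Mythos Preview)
Your argument is correct. The two-stage decomposition—first showing that $f(x,y)=\max(\beta_x,\beta_y,d_{U'}(x,y))$ satisfies the strong triangle inequality, then that capping at $2m$ preserves it via the distributive identity $\min(\max(a,b),c)=\max(\min(a,c),\min(b,c))$—is clean and fully rigorous.

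The paper's proof takes a somewhat different route: rather than decomposing the operation, it argues directly by case analysis on where $d_U$ differs from $d_{U'}$. If $d_U(x,y)>d_{U'}(x,y)$ then $d_U(x,y)=\beta_x$ (say), and one checks $d_U(x,z)\ge\beta_x$; if instead the right-hand side $\max(d_U(x,z),d_U(y,z))$ dropped below $\max(d_{U'}(x,z),d_{U'}(y,z))$, then some value was clipped at $2m$, which automatically dominates $d_U(x,y)$. Your approach is more modular: it isolates the two modifications (raising by the potential $\beta$, truncating at $2m$) as separate ultrametric-preserving operations, which makes the logic reusable and arguably easier to verify. The paper's case analysis is shorter on the page but requires the reader to track why the two cases are exhaustive (namely, that if neither case holds then $d_U=d_{U'}$ on all three relevant pairs and the inequality is inherited directly). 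Both arrive at the same conclusion with no real difference in depth.
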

    \begin{claimproof}
        We need to prove $d_U(x,y) \leq \max(d_U(x,z), d_U(y,z))$ for all $x,y,z \in X$. As $U'$ is ultrametric by assumption, it is enough to check only if
        \[d_U(x,y) > d_{U'}(x,y) \text{ or } \max(d_U(x,z), d_U(y,z)) > \max(d_{U'}(x,z), d_{U'}(y,z)) \text{ holds.}\]
        \begin{itemize}
        \item[Case 1:] $d_U(x,y) > d_{U'}(x,y)$: we have $\max(\beta_x, \beta_y) = d_U(x,y) > d_{U'}(x,y)$. Without loss of generality, we assume that $d_U(x,y) = \beta_x \geq \beta_y$. Then we have $d_U(x,z) \geq \max(\beta_x, \beta_z) \geq \beta_x = d_U(x,y)$, which shows $d_U(x,y) \leq \max(d_U(x,z), d_U(y,z))$.
        \item[Case 2:] $\max(d_U(x,z), d_U(y,z)) < \max(d_{U'}(x,z), d_{U'}(y,z))$: we have $d_U(x,z) < d_{U'}(x,z)$ or $d_U(y,z) < d_{U'}(y,z)$ holds. In either case, we have the value is clipped by the maximum value $2m$ so that $d_{U}(x,y) \leq \max(d_{U}(x,z), d_{U}(y,z)) = 2m.$
        \end{itemize}
        Therefore, we conclude that the stronger triangle inequality still holds, which completes the proof.
    \end{claimproof}

    \begin{claim}
        By the restriction, $d_T$ satisfies tree metric.
    \end{claim}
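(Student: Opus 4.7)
The plan is to invoke the classical characterization that a nonnegative symmetric function $d_T$ on $X \cup \{w\}$ (with $d_T(x,x) = 0$) is a tree metric rooted at $w$ if and only if its Gromov products with respect to $w$, namely $gp_w^{d_T}(x,y) := \tfrac12(d_T(w,x) + d_T(w,y) - d_T(x,y))$, are nonnegative and form an ultrametric on $X$. Since Claim~2 already establishes that $d_U$ is an ultrametric, the task reduces to translating this property through the definition $d_T = d_U - c_w$.

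First I would compute the Gromov products explicitly. Writing $d_T(x,y) = d_U(x,y) - c_w(x,y) = d_U(x,y) - 2m + d(w,x) + d(w,y)$ for $x,y \in X$ and $d_T(w,x) = d(w,x)$, a one-line substitution gives
\[
gp_w^{d_T}(x,y) \;=\; \tfrac12\bigl(d(w,x)+d(w,y) - d_U(x,y) + 2m - d(w,x) - d(w,y)\bigr) \;=\; m - \tfrac12 d_U(x,y).
\]
Consequently the ultrametric inequality $gp_w^{d_T}(x,y) \geq \min(gp_w^{d_T}(x,z), gp_w^{d_T}(y,z))$ on the Gromov products is equivalent to $d_U(x,y) \leq \max(d_U(x,z), d_U(y,z))$, which is precisely the ultrametric property of $d_U$ proved in Claim~2.

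Next I would verify the nonnegativity conditions. Since the definition of $d_U$ clips at $2m$, we have $d_U(x,y) \leq 2m$, hence $gp_w^{d_T}(x,y) \geq 0$. Similarly, $d_T(x,y) \geq 0$ amounts to $d_U(x,y) \geq c_w(x,y) = \tfrac12(\beta_x + \beta_y)$; and since the construction forces $d_U(x,y) \geq \max(\beta_x, \beta_y) \geq \tfrac12(\beta_x + \beta_y)$ (using $\beta_x, \beta_y \leq 2m$ so that the outer $\min$ with $2m$ does not drop $d_U$ below $\max(\beta_x,\beta_y)$), this also holds. Symmetry, $d_T(x,x) = 0$, and $d_T(w,x) = d(w,x)$ are immediate from the construction.

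Combining these observations, $d_T$ is a nonnegative symmetric function on $X \cup \{w\}$ whose Gromov products with respect to the basepoint $w$ form a nonnegative ultrametric on $X$; by the cited equivalence, $d_T$ is therefore a tree metric, with the triangle inequality following from the four-point condition by setting $u = z$. I do not anticipate a serious obstacle: the proof is essentially algebraic bookkeeping, and all substantive content has already been encapsulated in the prior claims — Claim~1 ensures the restriction does not increase error, Claim~2 preserves the ultrametric inequality, and the present step only rewrites that inequality in the Gromov-product coordinates where it becomes the tree-metric condition.
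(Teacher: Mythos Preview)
Your proposal is correct and arrives at the same conclusion, but it is organized differently from the paper's proof. The paper proceeds by direct verification: it first checks $d_T(x,y)\ge 0$ via $d_U(x,y)\ge\max(\beta_x,\beta_y)$ (the same inequality you use), then proves the triangle inequality for $d_T$ by hand from the ultrametric inequality for $d_U$, and finally observes that for any four points the three pair-sums of $d_T$ differ from those of $d_U$ by the common constant $4m - d(w,x)-d(w,y)-d(w,z)-d(w,t)$, so $0$-hyperbolicity transfers from $d_U$ to $d_T$. Your route instead computes $gp_w^{d_T}(x,y)=m-\tfrac12 d_U(x,y)$ and invokes the classical equivalence ``$d_T$ is a tree metric $\Leftrightarrow$ $gp_w^{d_T}$ is a nonnegative ultrametric.'' This is more streamlined and makes the role of the clipping at $2m$ and at $\max(\beta_x,\beta_y)$ transparent (they are exactly what force $gp_w^{d_T}\ge 0$ and $d_T\ge 0$). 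The one place to be slightly careful is your remark that the triangle inequality follows from the four-point condition ``by setting $u=z$'': your argument so far only yields the four-point condition for quadruples containing $w$, so you are implicitly relying on the standard fact that $0$-hyperbolicity with respect to one basepoint implies $0$-hyperbolicity with respect to all basepoints (which is part of the classical characterization you cite). The paper's pair-sum argument handles arbitrary quadruples in one line and avoids this appeal.
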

    \begin{claimproof}
        First, we need to verify that $d_T$ is a metric. First, we have
        \begin{align*}
            d_T(x,y) = d_U(x,y) - c_w(x,y) \geq \max(\beta_x, \beta_y) - c_w(x,y) & = \max(\beta_x, \beta_y) - \frac{1}{2}(\beta_x + \beta_y) \\
            & = \frac{|\beta_x - \beta_y|}{2} = |d(w,x) - d(w,y)| \geq 0,
        \end{align*}
        so that $d_T$ is non-negative. Next, we will prove the triangle inequality: $d_T(x,z) \leq d_T(x,y) + d_T(y,z)$. Without loss of generality, we assume that $d_U(x,y) \geq d_U(y,z)$. Then we have
        \begin{align*}
            d_T(x,z) = d_U(x,z) - c_w(x,z) & \leq \max(d_U(x,y),d_U(y,z)) - c_w(x,z) \\
            & = d_U(x,y) - c_w(x,y) + c_w(x,y) - c_w(x,z) \\
            & = d_T(x,y) + (d(w,z) - d(w,y)) \\
            & \leq d_T(x,y) + |d(w,z) - d(w,y)| \leq d_T(x,y) + d_T(y,z) .
        \end{align*}
        Therefore, $d_T$ is (non-negative) metric. To show it is a tree metric, we examine the four point condition of $d_T$. For any $x,y,z,t \in X$,
        \begin{align*} 
            d_T(x,y) + d_T(z,t) & = (d_U(x,y) - c_w(x,y)) + (d_U(z,t) - c_w(z,t)) \\
            & = (d_U(x,y) + d_U(z,t)) - (c_w(x,t) + c_w(z,t)) \\
            & = (d_U(x,y) + d_U(z,t)) - (4m - d(w,x) - d(w,y) - d(w,z) - d(w,t)),
        \end{align*}
        so that any pair sum of $d_T$ differs from $d_U$ by $4m-d(w,x)-d(w,y)-d(w,z)-d(w,t)$. As $d_U$ is ultrametric and thus 0-hyperbolic, $d_T$ is also 0-hyperbolic, as desired. Thus, $d_T$ is a tree metric.
    \end{claimproof}
    
    Finally, we prove that $\ell_p$ error of $d_T$ is bounded as desired. This can be done by
    \begin{align*}
    \|d - d_T\|_p = \| (d + c_w) - (d_{T} + c_w)\|_p & = \|d_{U} - (d + c_w)\|_p \\
    & \leq \|d_{U'} - (d + c_w)\|_p \leq \AvgUM_{q}(d+c_w) \cdot f(n),
    \end{align*}
    by assumption. For $1 \leq q < \infty$, we have
    \begin{align*}
        \frac{1}{n} \sum_{w \in X} (\AvgUM_{q}(d+c_w))^q
        & = \frac{1}{n} \sum_{w \in X} \frac{1}{\binom{n}{3}} \sum_{x,y,z \in \binom{X \setminus \{w\}}{3}} tp(d+c_w;x,y,z)^q\\
        & = \frac{n-3}{n^2} \sum_{w \in X} \frac{1}{\binom{n-1}{3}} \sum_{x,y,z \in \binom{X \setminus \{w\}}{3}} 2^q fp_w(d;x,y,z)^q \\
        & = 2^q \frac{n-3}{n} \cdot \frac{1}{\binom{n}{4}} \sum_{x,y,z,w \in \binom{X}{4}} fp^q(d;x,y,z,w) \\
        & = 2^q \frac{n-3}{n} (\AvgHyp_{q}(d))^q.
    \end{align*}
    Therefore, there exists $w \in X$ so that $\AvgUM_{q}(d+c_w) \leq 2 \left( \frac{n-3}{n} \right)^{1/q} \AvgHyp_{q}(d)$. The $q = \infty$ case can be separately checked.

    Hence, there exists $w \in X$ where such reduction yields a tree fitting with $\ell_p$ error bounded by $2 \left( \frac{n-3}{n} \right)^{1/q} \AvgHyp_{q}(d) \cdot f(n)$, as desired.

Note that when we use \textsc{HCCUltraFit}, then $\max(\beta_x , \beta_y) \leq d_{U'}(x,y) \leq 2m$ always hold so that the clipping is not necessary.

\subsection{Proof of Theorem~\ref{thm:treetoult}}
\label{app_sec:prooftreetoult}

 Our reduction is based on the technique from~\cite{day1987computational} and Section 9 of \cite{cohen2022fitting}. Given $d$ on $X$, we will construct a distance $d'$ on $Z = X \cup Y$ such that $|X| = |Y| = n$ and 
    $$d'(z,w) = \begin{cases}
    d(z,w) & \text{ if } z,w \in X \\
    M & \text{ if } z \in X, w \in Y \text{ or } z \in Y, w \in X \\
    c & \text{ if } z,w \in Y
    \end{cases} \text{ for all } z \neq w \in Z,$$
    for $M$ large enough and $c$ small enough. Then first we have
    \begin{claim}
        If $1 \leq q < \infty$, then $\AvgHyp_{q}(d') \leq \left(\frac{1}{4} + 2^{q-4}\right)^{1/q} \cdot \AvgUM_{q}(d)$. For $q = \infty$, we have $\Hyp(d') \leq 2 \UM(d)$.
    \end{claim}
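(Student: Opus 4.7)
The plan is to expand $\AvgHyp_q(d')^q = \tfrac{1}{\binom{2n}{4}} \sum_{Q \in \binom{Z}{4}} fp(d'; Q)^q$ and partition the four-subsets $Q$ by $k := |Q \cap X| \in \{0,1,2,3,4\}$, then compute $fp(d'; Q)$ in each case via the identity $2\,fp(d'; Q) = S_{\max} - S_{\text{mid}}$, where $S_1, S_2, S_3$ are the three perfect-matching sums on $Q$. For $k = 4$, $fp(d'; Q) = fp(d; Q)$ since $d' = d$ on $X$. For $k = 3$, writing $Q \cap X = \{x,y,z\}$ and $Q \cap Y = \{w\}$, the three matching sums are $d(x,y)+M,\, d(x,z)+M,\, d(y,z)+M$, so their top-two gap equals the top-two gap of $\{d(x,y), d(x,z), d(y,z)\}$, which is exactly $tp(d; x, y, z)$; hence $fp(d'; Q) = \tfrac{1}{2} tp(d; x, y, z)$. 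For $k \in \{0,1,2\}$, a direct enumeration shows that for $M$ large enough and $c$ small enough the two largest sums coincide (each equal to $2M$, $M+c$, or $2c$), forcing $fp(d'; Q) = 0$. Consequently only $k \in \{3, 4\}$ contribute.

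The key technical lemma for the $k=4$ contribution is the bound
\[
    fp(d; x, y, z, w) \;\leq\; \max_{T \in \binom{\{x,y,z,w\}}{3}} tp(d; T).
\]
I would prove it by relabelling so that $S_1 = d(x,y) + d(z,w)$ is maximal, whence $2\,fp = (d(x,y) - d(x,z)) + (d(z,w) - d(y,w))$, and then invoking the slack ultrametric inequality $d_{\max} \leq d_{\text{med}} + tp$ on the triangles $\{x,y,z\}$ and $\{y,z,w\}$. The main obstacle is that this direct estimate gives $\max tp$ only under a favourable configuration (the relevant edge being maximal in its triangle); when this fails, one must use the companion inequality $S_1 \geq S_3$, apply the slack inequality also to triangles $\{x,y,w\}$ and $\{x,z,w\}$, and combine both estimates. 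This case analysis is, I expect, the most delicate step, but all examples suggest the bound is tight.

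Finally I would assemble. The $k = 3$ contribution to the unnormalized sum is $n \cdot 2^{-q} \binom{n}{3} \AvgUM_q(d)^q$ (each triple $T \subset X$ paired with $n$ choices of $w \in Y$). For $k = 4$, applying the lemma pointwise together with $(\max_T tp_T)^q \leq \sum_T tp_T^q$ and noting that each triple $T \subset X$ lies in exactly $n - 3$ four-subsets of $X$ gives a bound of $(n-3) \binom{n}{3} \AvgUM_q(d)^q$. Combining and dividing by $\binom{2n}{4}$, using the identity $\binom{n}{3}/\binom{2n}{4} = (n-2)/[(2n-1)(2n-3)]$, the resulting coefficient of $\AvgUM_q(d)^q$ is $[(n-3) + n \cdot 2^{-q}] (n-2)/[(2n-1)(2n-3)]$, whose leading-order value $\tfrac{1}{4} + 2^{-q-2}$ one checks is bounded above by $\tfrac{1}{4} + 2^{q-4}$ for all $q \geq 1$ (equality at $q = 1$). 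The $q = \infty$ case is immediate by taking pointwise maxima throughout: $\Hyp(d') = \max(\Hyp(d), \tfrac{1}{2}\UM(d)) \leq \max(\UM(d), \tfrac{1}{2}\UM(d)) \leq 2\,\UM(d)$, where $\Hyp(d) \leq \UM(d)$ is the pointwise version of the lemma.
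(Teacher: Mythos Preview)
Your proof is correct and follows the same overall scaffold as the paper: partition quadruples of $Z$ by $k = |Q \cap X|$, show $k \le 2$ contributes nothing, and bound the $k=3$ and $k=4$ contributions in terms of the three-point quantities $tp(d;\cdot)$. Two points of departure are worth noting.

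First, your computation $fp(d';x,y,z,w) = \tfrac{1}{2}\,tp(d;x,y,z)$ for $w \in Y$ is correct; the paper writes this without the factor $\tfrac{1}{2}$, which is a minor slip (harmless for the stated bound since it only overcounts). Second, for the $k=4$ term the paper invokes the weaker inequality $fp(d;x,y,z,w) \le \tfrac{1}{2}\sum_{T} tp(d;T)$ and then uses the power-mean bound, obtaining $\AvgHyp_q(d)^q \le 2^q\,\AvgUM_q(d)^q$. Your sharper lemma $fp \le \max_T tp(d;T)$ (whose case analysis does go through exactly as you outline, using both $S_1 \ge S_2$ and $S_1 \ge S_3$) gives instead $\AvgHyp_q(d)^q \le 4\,\AvgUM_q(d)^q$, and hence the tighter constant $\tfrac{1}{4} + 2^{-q-2}$ in place of $\tfrac{1}{4} + 2^{q-4}$ (they agree at $q=1$). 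The paper's route is slightly easier to verify; yours buys a better constant and, at $q=\infty$, the cleaner $\Hyp(d') \le \UM(d)$.

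One small point: you compare only the \emph{leading-order} coefficient to the target. To make the bound hold for every $n$, note that both $\frac{(n-3)(n-2)}{(2n-1)(2n-3)}$ and $\frac{n(n-2)}{(2n-1)(2n-3)}$ are increasing to $\tfrac{1}{4}$, so the exact coefficient is dominated by its limit $\tfrac{1}{4} + 2^{-q-2}$ for all $n \ge 2$.
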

    \begin{claimproof}
        It can be checked that if at least two of $x,y,z,w$ is from $Y$, then we immediately have $fp(d';x,y,z,w) = 0$.
        Therefore, we get
        \begin{align*}
            \sum_{x,y,z,w \in \binom{Z}{4}}fp(d';x,y,z,w)^q & = \sum_{x,y,z,w \in \binom{X}{4}} fp(d';x,y,z,w)^q + \sum_{w \in Y} \sum_{x,y,z \in \binom{X}{3}} fp(d';x,y,z,w)^q \\
            & = \sum_{x,y,z,w \in \binom{X}{4}} fp(d;x,y,z,w)^q + \sum_{w \in Y} \sum_{x,y,z \in \binom{X}{3}} tp(d;x,y,z)^q \\
            & = \binom{n}{4} \AvgHyp_q(d)^q + n \binom{n}{3} \AvgUM_q(d)^q.
        \end{align*}
        To bound $\AvgHyp_q$, we will use the fact that
        \[ fp(d;x,y,z,w) \leq \frac{1}{2} [ tp(d;x,y,z) + tp(d;x,y,w) + tp(d;x,z,w) + tp(d;y,z,w) ].\]
        Therefore, 
        \begin{align*}
            \AvgHyp_q(d)^q & = \frac{1}{\binom{n}{4}} \sum_{x,y,z,w \in \binom{X}{4}} fp(d;x,y,z,w)^q \\
            & \leq \frac{2^{q-2}}{\binom{n}{4}} \sum_{x,y,z,w \in \binom{X}{4}} [tp(d;x,y,z)^q + tp(d;x,y,w)^q + tp(d;x,z,w)^q + tp(d;y,z,w)^q] \\
            & = \frac{2^{q-2}}{\binom{n}{4}} \sum_{x,y,z \in \binom{X}{3}} (n-3) tp(d;x,y,z)^q = \frac{2^{q}}{\binom{n}{3}} \sum_{x,y,z \in \binom{X}{3}} tp(d;x,y,z)^q = 2^{q} \AvgUM_q(d)^q.
        \end{align*}
        To sum up, we get
    \begin{align*}
        \AvgHyp_q(d')^q & = \frac{1}{\binom{2n}{4}} \sum_{x,y,z,w \in \binom{Z}{4}}fp(d';x,y,z,w)^q \\
        & = \frac{\binom{n}{4}}{\binom{2n}{4}} \AvgHyp_q(d)^q + \frac{n \binom{n}{3}}{\binom{2n}{4}} \AvgUM_q(d)^q \\
        & \leq \frac{1}{16} \AvgHyp_q(d)^q + \frac{1}{4} \AvgUM_q(d)^q \\
        & \leq \frac{2^q}{16} \AvgUM_q(d)^q + \frac{1}{4} \AvgUM_q(d)^q = \left(\frac{1}{4} + 2^{q-4}\right) \AvgHyp_q(d)^q.
    \end{align*}
    The $q = \infty$ case can be checked separately.
    \end{claimproof}

    Therefore, by the claim above and the assumption, we have a tree fitting $d_T$ which fits $d'$ with $\|d_T - d'\|_p \leq \AvgHyp_{q}(d') \cdot f(2n)$. Our goal is to construct a reduction to deduce an ultrametric fit $d_U$ on $X$, by utilizing $d_T$. First,
    \begin{claim}
        If $M$ and $c$ were large and small enough respectively, then $d_T (x_1 , y) + d_T (x_2 , y) - d_T (x_1, x_2) \geq 2c$ holds for every $x_1 , x_2 \in X$ and $y \in Y$.
    \end{claim}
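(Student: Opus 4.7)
The plan is to reduce the claim to a statement about the pointwise approximation error $\epsilon := d_T - d'$ and then exploit the freedom to choose $M$ large and $c$ small. If $d_T$ were exactly $d'$, the left-hand side would equal
\[
    d'(x_1,y) + d'(x_2,y) - d'(x_1,x_2) = 2M - d(x_1,x_2),
\]
which is at least $2c$ as soon as $M \geq c + \tfrac{1}{2}\max_{x,x'\in X} d(x,x')$. So the only issue is the perturbation coming from the fact that $d_T$ merely approximates $d'$.

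First I would write
\[
    d_T(x_1,y) + d_T(x_2,y) - d_T(x_1,x_2) = 2M - d(x_1,x_2) + \epsilon(x_1,y) + \epsilon(x_2,y) - \epsilon(x_1,x_2),
\]
so it suffices to control the three $\epsilon$ terms uniformly in $M$ and $c$. Since $|\epsilon(z,w)| \leq \|d_T - d'\|_p$ for any finite $p$, and by assumption $\|d_T - d'\|_p \leq \AvgHyp_q(d')\cdot f(2n)$, the previous claim gives
\[
    |\epsilon(z,w)| \;\leq\; \Bigl(\tfrac{1}{4} + 2^{q-4}\Bigr)^{1/q} \cdot \AvgUM_q(d) \cdot f(2n) \;=:\; C,
\]
and crucially $C$ depends only on $d$, $n$, $p$, $q$ — \emph{not} on $M$ or $c$. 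Choosing $M$ so that $2M - \max_{x,x'\in X} d(x,x') - 3C \geq 2c$ (equivalently $M \geq c + \tfrac{1}{2}\max d + \tfrac{3C}{2}$) then forces the right-hand side above to be at least $2c$, as required. The $p = \infty$ case is identical since the $\ell_\infty$ norm already gives the pointwise bound directly.

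The main obstacle is exactly this independence of $C$ from $M$ and $c$: if enlarging $M$ or shrinking $c$ also enlarged the approximation error of $d_T$, the argument would collapse into a circular chase. Fortunately the previous claim was designed to rule this out — four-point conditions on $d'$ vanish whenever at least two coordinates lie in $Y$, and on the remaining quadruples the constants $M$ and $c$ cancel inside $fp(d';\cdot)$. So the bookkeeping step is to cite that earlier claim, record that $C$ is $M,c$-independent, and then simply set the threshold for $M$. (A symmetric remark applies to $c$: it can be chosen, for instance, smaller than $C$, which will matter for the subsequent steps of the reduction but is already consistent with the requirement here.)
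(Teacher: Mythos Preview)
Your argument is correct and matches the paper's approach: both compare $d_T(x_1,y)+d_T(x_2,y)-d_T(x_1,x_2)$ to its $d'$-value $2M-d(x_1,x_2)$, bound the discrepancy by the $\ell_p$ fitting error, and then invoke the previous claim to see that this error is controlled by a constant independent of $M$ and $c$. The paper phrases the last step contrapositively (if the inequality failed, one of the three pairwise distortions would exceed $\tfrac13(2M-2c-d(x_1,x_2))$, contradicting $M-c \gg \AvgHyp_q(d')\cdot f(2n)$), whereas you give the equivalent direct estimate---but the content is the same.
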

    \begin{claimproof}
        Suppose the claim fails. As $d(x_1, y) + d(x_2, y) - d(x_1 , x_2) = M + M - d(x_1, x_2)$, it suggests that one of three pair distances should have distortion at least $\frac{1}{3}(2M - 2c - d(x_1, x_2))$. This should not happen if $M - c \gg \AvgHyp_{q}(d') \cdot f(2n)$.
    \end{claimproof}
    
    We will interpret the above claim as a structural property. Given a tree $T$ which realizes the tree metric $d_T$, one can find a Steiner node $w$ on $x_1, x_2, y$. Then we have $d_T (w,y) = \frac{1}{2} [d_T (x_1 , y) + d_T (x_2 , y) - d_T (x_1, x_2)] \geq c$. Since it holds for any $x_1 , x_2 \in X$, it suggests that every geodesic segment $[x,y]$ for fixed $y \in Y$ should share their paths at least $c$. This observation allows a following \emph{uniformization} on $Y$.
    \begin{algorithm}[h]
        \caption{Refine a tree fit on $Z = X \cup Y$}
        \begin{algorithmic}[2]
        \Procedure{Uniformization on $Y$}{}
            \State \textbf{Input}: tree fit $(T,d_T)$ on $Z = X \cup Y$, $1 \leq p < \infty$
            \State \textbf{Output}: refined tree $(T',d_{T'})$ on $Z$
            \State $d_{T'}(x_1, x_2) \leftarrow d_T(x_1, x_2) \quad \forall x_1 \neq x_2 \in X$
            \State $d_{T'}(y_1, y_2) \leftarrow c \quad \forall y_1 \neq y_2 \in Y$
            \State Find $y_0 = \argmin_{y \in Y} \sum_{x \in X} |d_T (x,y) - M|^p$
            \State $d_{T'}(x,y) \leftarrow d_{T}(x,y_0) \quad \forall x \in X, y \in Y$
            \State \textbf{return} $d_{T'}$
        \EndProcedure
        \end{algorithmic}
    \end{algorithm}
    \begin{claim}
        One can refine a tree metric so that $d_T(y_1, y_2) = c$ for all $y_1, y_2 \in Y$ with $\ell_p$ error non-increasing.
    \end{claim}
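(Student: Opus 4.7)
The plan is to establish the claim in two stages: first verify that $d_{T'}$ really is a tree metric by building an explicit realizing tree, and then decompose the $\ell_p$ error into three classes of pairs and show each is non-increasing.

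For the first stage, I would invoke the structural consequence of the previous claim: for every $y \in Y$ and every $x_1, x_2 \in X$, the Gromov product $\frac{1}{2}(d_T(x_1,y) + d_T(x_2,y) - d_T(x_1,x_2)) \geq c$, which says that in the realizing tree $T$, the Steiner point of $\{x_1, x_2, y_0\}$ lies at distance at least $c$ from $y_0$. Applying this to $y_0$ in particular, I can locate a point $v$ on the path from $y_0$ toward the rest of $T$ at distance exactly $c/2$ from $y_0$, so that $d_T(x, v) = d_T(x, y_0) - c/2$ for every $x \in X$. I then form $T'$ by deleting the entire subtree containing $y_0$ beyond $v$ and attaching, at $v$, a star of $|Y|$ leaves $y_1, \ldots, y_{|Y|}$ each joined to $v$ by an edge of length $c/2$. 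This gives $d_{T'}(x, y) = d_T(x, v) + c/2 = d_T(x, y_0)$, $d_{T'}(y_i, y_j) = c$, and $d_{T'}(x_1, x_2) = d_T(x_1, x_2)$, matching the specification of the algorithm.

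For the second stage, I split the $\ell_p$ error as
\begin{align*}
\|d_{T'} - d'\|_p^p = \sum_{\{x_1,x_2\} \subset X} |d_{T'}(x_1,x_2) - d(x_1,x_2)|^p
&+ \sum_{\{y_1,y_2\} \subset Y} |d_{T'}(y_1,y_2) - c|^p \\
&+ \sum_{x \in X,\, y \in Y} |d_{T'}(x,y) - M|^p.
\end{align*}
The first sum is unchanged from $\|d_T - d'\|_p^p$ by construction. The second sum is zero in $T'$, hence trivially bounded by the corresponding sum for $d_T$. For the third (cross) sum, I use $d_{T'}(x,y) = d_T(x, y_0)$ and the definition $y_0 = \argmin_{y \in Y} \sum_{x \in X} |d_T(x,y) - M|^p$ to get
\[
\sum_{x,y} |d_{T'}(x,y) - M|^p = |Y| \sum_{x \in X} |d_T(x,y_0) - M|^p \leq \sum_{y \in Y} \sum_{x \in X} |d_T(x,y) - M|^p,
\]
which is exactly the cross contribution of $\|d_T - d'\|_p^p$. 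Summing the three bounds gives $\|d_{T'} - d'\|_p \leq \|d_T - d'\|_p$, as required.

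The main technical obstacle is the first stage: one must be sure that redirecting all $y \in Y$ to hang off a common Steiner point $v$ does not destroy any existing pairwise tree distance. The previous claim is precisely what makes this safe, since it guarantees the $c/2$ headroom on the edge leaving $y_0$, so that $v$ really lies on the geodesic from $y_0$ to every $x \in X$ in $T$. Once $v$ is correctly located, the rest is a direct comparison of sums, and the optimality of $y_0$ for the $\ell_p$ objective restricted to cross pairs automatically yields the desired monotonicity.
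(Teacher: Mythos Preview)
Your proposal is correct and follows essentially the same argument as the paper: both construct $T'$ by hanging all of $Y$ as a star of radius $c/2$ off a point at distance $c/2$ from $y_0$ (using the previous claim to guarantee this point lies on every $y_0$--$x$ geodesic), and both bound the $\ell_p$ error by splitting into $X$--$X$, $Y$--$Y$, and $X$--$Y$ pairs, invoking the minimality of $y_0$ for the cross term. Your treatment of the first stage is in fact more careful than the paper's, spelling out why no $X$--$X$ geodesic is disturbed.
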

    \begin{claimproof}
        Pick $y_0 = \argmin_{y \in Y} \sum_{x \in X} |d_T (x,y) - d(x,y)|^p$. Then we will keep $X \cup \{y_0\}$ with its tree structure and relocate all other $y \neq y_0 \in Y$. As every geodesic segments will share their paths at least $c$, we will pick a point $z$ with $d_T(y,z) = c/2$ on the segment, and draw edge $(z,y)$ for all $y \in Y$ with length $c/2$. Then $d_T(y_1, y_2) = c$ as desired. Furthermore, as $d_{T'}(x,y) = d_T(x,y_0)$ for any $x \in X$ and $y \in Y$,
        \begin{align*}
            & \|d_{T'} - d\|_p^p \\
            = & \sum_{z,w \in \binom{Z}{2}} |d_{T'}(z,w) - d(z,w)|^p \\
            = & \sum_{x,x' \in \binom{X}{2}} |d_{T'}(x,x') - d(x,x')|^p +  \sum_{x \in X, y \in Y} |d_{T'}(x,y) - M|^p +  \sum_{y,y' \in \binom{Y}{2}} |d_{T'}(y,y') - c|^p \\
            \leq & \sum_{x,x' \in \binom{X}{2}} |d_{T}(x,x') - d(x,x')|^p +  \sum_{x \in X, y \in Y} |d_{T}(x,y_0) - M|^p +  \sum_{y,y' \in \binom{Y}{2}} 0 \\
            = & \sum_{x,x' \in \binom{X}{2}} |d_{T}(x,x') - d(x,x')|^p + \sum_{y \in Y} \sum_{x \in X} |d_{T}(x,y_0) - M|^p \\
            \leq & \sum_{x,x' \in \binom{X}{2}} |d_{T}(x,x') - d(x,x')|^p + \sum_{y \in Y} \sum_{x \in X} |d_{T}(x,y) - M|^p \leq \|d_T - d\|_p^p,
        \end{align*}
        as desired.
    \end{claimproof}
    
    
    \begin{algorithm}[h]
        \caption{Construction of a restricted tree with respect to given root $r$}
        \begin{algorithmic}[2]
        \Procedure{\textsc{RestrictTree}}{}
            \State \textbf{Input}: Tree fitting $(T,d_T)$ on $Z$
            \State \textbf{Output}: Tree fitting $d_T$ such that $d_T(x,y) = M$ for all $x \in X$ and $y \in Y$
            \State \textbf{for} each $x \in X$:
                \State \quad \textbf{if} $d_T(x,y_0) > M$:
                    \State \quad \quad Relocate $x$ to point on geodesic $[x,y_0]$ so that $d_T(x,y_0) = M$
                \State \quad \textbf{else if} $d_T(x,y_0) < M$:
                    \State \quad \quad Add edge $(x,x')$ with length $M - d_{T}(x,y_0)$ and relocate $x$ to $x'$
            \State \textbf{return} $(T, d_T)$ 
        \EndProcedure
        \end{algorithmic}
    \end{algorithm}
    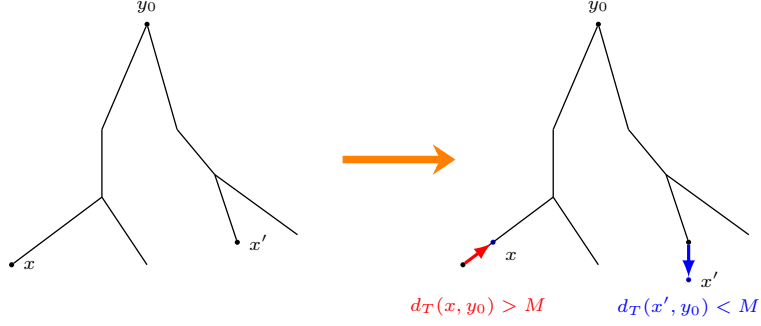
\begin{figure}[h]
        \begin{center}
            \begin{tikzpicture}[scale = 1, >=stealth,
                point/.style = {draw, circle,  fill = black, inner sep = 0.5pt},
                dot/.style   = {draw, circle, color = blue,  fill = black, inner sep = 0.5pt},
                >=latex,
                line width=0.5pt,
                Brace/.style={
                    decorate,
                    decoration={
                        brace,
                        raise=-7pt
                    }
                }
                ]
                \tikzstyle{every node}=[font=\scriptsize]

                \def \shift{6}
            
                \node (a1) at (5.4-\shift, 7.8) [point, label = {90:{$y_0$}}] {};
                \coordinate (s1) at (4.8-\shift, 5.5) {}; 
                \coordinate (a2) at (4.8-\shift, 6.4) [point, label = {0:{}}] {};
                \node (a) at (3.6-\shift, 4.6) [point, label = {0:{$x$}}] {};
                \coordinate (a4) at (7.4-\shift, 5) [point, label = {180:{}}] {};
                \coordinate (s2) at (6.3-\shift, 5.8) {};
                \coordinate (a5) at (5.8-\shift, 6.4) [point, label = {240:{}}] {};
                \coordinate (a6) at (5.4-\shift, 4.6) [point, label = {0:{}}] {};
                \node (b) at (6.6-\shift, 4.9) [point, label = {0:{$x'$}}] {};
                \draw (a1) to (a2);
                \draw (a2) to (s1);
                \draw (a1) to (a5);
                \draw (s1) to (a6);
                \draw (a) to (s1);
                \draw (a4) to (s2);
                \draw (a5) to (s2);
                \draw (b) to (s2);
            
                \draw[orange] [-{Stealth[length=3mm, width=4mm]}, line width = 1mm] (2,6) -- (3.5,6);
                
                \node (x1) at (5.4, 7.8) [point, label = {90:{$y_0$}}] {};
                \coordinate (r1) at (4.8, 5.5) {}; 
                \coordinate (x2) at (4.8, 6.4) [point, label = {0:{}}] {};
                \node (x) at (3.6, 4.6) [point, label = {0:{}}] {};
                \node (x_after) at (4.0, 4.9) [dot, label = {-3:{$x$}}] {};
                \coordinate (x4) at (7.4, 5) [point, label = {180:{}}] {};
                \coordinate (r2) at (6.3, 5.8) {};
                \coordinate (x5) at (5.8, 6.4) [point, label = {240:{}}] {};
                \coordinate (x6) at (5.4, 4.6) [point, label = {0:{}}] {};
                \node (y) at (6.6, 4.9) [point, label = {0:{}}] {};
                \node (y_after) at (6.6, 4.4) [dot, label = {0:{$x'$}}] {};
                \draw (x1) to (x2);
                \draw (x2) to (r1);
                \draw (x1) to (x5);
                \draw (r1) to (x6);
                \draw (x) to (r1);
                \draw (x4) to (r2);
                \draw (x5) to (r2);
                \draw (y) to (r2);
                \draw[->, blue, very thick] (y) to (y_after);
                \draw[->, red, very thick] (x) to (x_after);
                \draw[red] (3.8, 4.3) coordinate (x_exp) node[below] {$d_T(x,y_0) > M$};
                \draw[blue] (6.6, 4.3) coordinate (y_exp) node[below] {$d_T(x',y_0) < M$};
            \end{tikzpicture}
        \end{center}
        \caption{This figure depicts how \textsc{RestrictTree} works. The idea is we can relocate every vertices so that $d(x,y_0) = d_T(x,y_0)$ holds for every $x \in X$.}
    \end{figure}

    \begin{claim} One can refine a tree metric so that $d_{T'}(x,y) = M$ for all $x \in X$ and $y \in Y$ with $\ell_p$ error not greater than $3^{(p-1)/p}$ times the original error. Furthermore, if we restrict $d_{T'}$ on $X$, then it is an ultrametric.   
    \end{claim}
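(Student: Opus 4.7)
The plan is to verify three things in turn: that \textsc{RestrictTree} produces a tree metric with $d_{T'}(x,y) = M$ on all of $X \times Y$, that its $\ell_p$ error grows by at most a factor $3^{(p-1)/p}$, and that its restriction to $X$ is an ultrametric. First I would unpack the algorithm: each $x \in X$ is either slid along the geodesic $[x, y_0]$ toward $y_0$ (when $d_T(x, y_0) > M$) or extended outward via a new pendant edge (when $d_T(x, y_0) < M$), so that $d_{T'}(x, y_0) = M$ in either case. The preceding uniformization step has arranged every $y \in Y$ to attach to the geodesic $[x, y_0]$ near $y_0$ with $d_T(x, y) = d_T(x, y_0)$ for all $x \in X$, so after relocating $x$ we still have $d_{T'}(x, y) = M$ for all $y \in Y$.

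For the $\ell_p$ error, I would set $\delta_x := d_T(x, y_0) - M$ (signed) and observe that relocating $x$ along $[x, y_0]$ perturbs $d_T(x, x')$ by a signed quantity $s^{(x)}_{x'}$ with $|s^{(x)}_{x'}| \leq |\delta_x|$. Writing
\[
    d(x_1, x_2) - d_{T'}(x_1, x_2) = \bigl(d(x_1, x_2) - d_T(x_1, x_2)\bigr) - s^{(x_1)}_{x_2} - s^{(x_2)}_{x_1},
\]
the new error on an $X \times X$ pair is a sum of three terms. Applying the inequality $|a+b+c|^p \leq 3^{p-1}(|a|^p + |b|^p + |c|^p)$ and summing over unordered pairs gives
\[
    \sum_{\{x_1,x_2\}} |d - d_{T'}|^p \;\leq\; 3^{p-1}\Bigl[\sum_{\{x_1,x_2\}} |d - d_T|^p + (n-1)\sum_{x \in X} |\delta_x|^p \Bigr].
\]
Uniformization gives $\sum_{x \in X,\, y \in Y} |d - d_T|^p = n \sum_x |\delta_x|^p$, while the $X \times Y$ and $Y \times Y$ contributions to $\|d - d_{T'}\|_p^p$ both vanish (they are $|M-M|^p$ and $|c-c|^p$). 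Bounding $(n-1) \leq n$ then yields $\|d - d_{T'}\|_p^p \leq 3^{p-1}\|d - d_T\|_p^p$, which is the claimed inequality after taking $p$-th roots.

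For the ultrametric property on $X$, since $d_{T'}(x, y_0) = M$ for every $x \in X$, in the tree rooted at $y_0$ all points of $X$ sit at equal depth. In any tree, equidistance from a common root forces the three-point condition on the leaves: for any triple $x_1, x_2, x_3$, two of the Gromov products $gp_{y_0}(x_i, x_j)$ coincide and equal the minimum, so the two largest among $d_{T'}(x_i, x_j) = 2M - 2\, gp_{y_0}(x_i, x_j)$ coincide, which is precisely the ultrametric inequality.

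The main obstacle is verifying $|s^{(x)}_{x'}| \leq |\delta_x|$ uniformly across all cases. Sliding $x$ inward may take it past the Steiner vertex of $\{x, x', y_0\}$, in which case the geodesic $[x, x']$ effectively reverses direction at that vertex and the formula for the new pairwise distance changes form. A short case analysis, splitting on whether the relocation is inward or outward and on the position of the new endpoint relative to the Steiner vertex, confirms the bound in every scenario, thereby justifying the additive perturbation decomposition used in the error estimate.
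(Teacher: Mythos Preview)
Your proposal is correct and follows essentially the same approach as the paper: both bound the change in each $X\times X$ distance by $|\delta_{x_1}|+|\delta_{x_2}|$ (the paper states this as ``distortion does not increase greater than $|\epsilon(x_1,y_0)|+|\epsilon(x_2,y_0)|$'' without the case analysis you supply), apply the convexity bound $(a+b+c)^p\le 3^{p-1}(a^p+b^p+c^p)$, and absorb $(n-1)\sum_x|\delta_x|^p$ into the $X\times Y$ block of $\|d-d_T\|_p^p$ via the uniformization identity. Your ultrametric argument via equal depth from $y_0$ is equivalent to the paper's one-line observation $tp(d_{T'};x_1,x_2,x_3)=fp(d_{T'};x_1,x_2,x_3,y_0)=0$.
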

    \begin{claimproof}
        We will use the algorithm \textsc{RestrictTree} to obtain the restricted tree as described. It is clear that such procedure will successfully return a tree metric $d_T$ with $d_T(x,y) = M$ for all $x \in X$ and $y \in Y$. Furthermore, as Denote $\epsilon := d_T - d$. Then for any pair $x_1, x_2 \in X$, its distortion does not increase greater than $|\epsilon(x_1, y_0)| + |\epsilon(x_2, y_0)|$, which is the amount of relocation. We can utilize this fact by
            \begin{align*}
                \|d_{T'} - d\|_p^p = & \sum_{z,w \in \binom{Z}{2}} |d_{T'}(z,w) - d(z,w)|^p \\
                = & \sum_{x,x' \in \binom{X}{2}} |d_{T'}(x,x') - d(x,x')|^p +  \sum_{x \in X, y \in Y} |d_{T'}(x,y) - M|^p +  \sum_{y,y' \in \binom{Y}{2}} |d_{T'}(y,y') - c|^p \\
                = & \sum_{x,x' \in \binom{X}{2}} |d_{T'}(x,x') - d(x,x')|^p  \\
                \leq & \sum_{x,x' \in \binom{X}{2}} (|\epsilon(x,y_0)| + |\epsilon(x,x')| + |\epsilon(x',y_0)|)^p \\
                \leq & 3^{p-1} \sum_{x,x' \in \binom{X}{2}} (|\epsilon(x,y_0)|^p + |\epsilon(x,x')|^p + |\epsilon(x',y_0)|^p) \\
                \leq & 3^{p-1} \left( \sum_{x,x' \in \binom{X}{2}} |\epsilon(x,x')|^p + (n-1) \sum_{x \in X} |\epsilon(x,y_0)|^p \right) \\
                \leq & 3^{p-1} \left( \sum_{x,x' \in \binom{X}{2}} |\epsilon(x,x')|^p + n \sum_{x \in X} |\epsilon(x,y_0)|^p \right) = 3^{p-1} \|d_{T} - d\|_p^p,
            \end{align*}
        as desired. Then by the restriction, $d_T(x,y_0) = M$ holds for all $x \in X$. Therefore,
        \[ tp(d_{T'};x_1, x_2, x_3) = fp(d_{T'}; x_1, x_2, x_3, y_0) = 0\]
        holds for all $x_1, x_2, x_3 \in \binom{X}{3}$, which shows that $d_{T'}$ on $X$ is an ultrametric.
    \end{claimproof}

    Denote the restricted metric as $d_U$ (on $\binom{X}{2})$). Then we have
    \begin{align*}
        \|d_{U} - d\|_p = \|d_{T''} - d'\|_p \leq 3^{\frac{p-1}{p}} \|d_{T'} - d'\|_p & \leq 3^{\frac{p-1}{p}} \|d_T - d'\|_p \\
        & \leq 3^{\frac{p-1}{p}} \left(\frac{1}{4} + 2^{q-4}\right)^{1/q} \cdot \AvgUM_{q}(d) \cdot f(2n),
    \end{align*}
    as desired.

\subsection{Self-contained proof of ultrametric fit to rooted tree fit}
\label{app_sec:selfcontained}

Choose any base point $w \in X$ and run $\textsc{HCCRootedTreeFit}$. Note that $d+c_w = 2(M - gp_w)$. Then by \ref{thm:hccultra}, we see that the output $d_T$ satisfies
\[ 
    \|d - d_T\|_1 \leq 8 \|\mathbf\Delta(d+c_w)\|_1 = 8 \|\mathbf\Delta(M - gp_w)\|_1 = 8 \|\mathbf\Delta_w(d)\|_1.
\]
        Although $\|\mathbf\Delta_w(d)\|_1$ itself does not satisfy the guaranteed bound, we know that
        \begin{align*}
            \sum_{w \in X} \|\mathbf\Delta_w(d)\|_1 & = \sum_{w \in X} \sum_{x,y,z \in \binom{X \setminus \{w\}}{3}} fp_w(d;x,y,z) \\
            & = 4 \sum_{x,y,z,w \in \binom{X}{4}} fp(d;x,y,z,w) = 4 \binom{n}{4} \AvgHyp(d)
        \end{align*}
        so there exists an output $d_T$ (with appropriately chosen $w$) with its $\ell_1$ error bounded by $8 \cdot \frac{4}{n} \binom{n}{4} \AvgHyp(d) = 8 \binom{n-1}{3}\AvgHyp(d)$. As we need to check every base point $w \in X$, this procedure runs in time $O(n^3 \log n)$.

\subsection{Proof of Theorem~\ref{thm:hcctriangle}}
\label{app_sec:proof_main}

We begin by recalling the definition of highly connectedness.
 
        \begin{defn}
            Given a graph $G = (V,E)$ and two disjoint vertex sets $X$ and $Y$, the pair $(X,Y)$ is said to be \emph{highly connected} if both
            \[ \text{for all } x \in X, \: |\{y \in Y| (x,y) \in E \}| \geq \frac{|Y|}{2} , \text{and} \]
            \[ \text{for all } y \in Y, \: |\{x \in X| (y,x) \in E \}| \geq \frac{|X|}{2}  \]
            hold.
        \end{defn}

        Our first lemma shows that every clusters should contain reasonably many edges ``within'', so that the number of ``false positive'' edges can be bounded. Because highly connectedness determines whether at least half of the edges have been connected, we can expect that the degree of every node should be at least half of the size of clusters. 
        \begin{lemm}\label{lemma:deghalf}
            For any $C \in P_t$ with $|C| \geq 2$ and $x \in C$, $|\{x' \in C| (x,x') \in E_t\}| \geq \frac{|C|}{2}$.
        \end{lemm}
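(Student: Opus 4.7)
The plan is to prove Lemma~\ref{lemma:deghalf} by induction on the time step $t$, tracking how clusters grow and how the highly-connected merge rule interacts with the monotonically growing edge set $E_t$. The key observation is that once a cluster has the half-degree property, adding more edges to $E_t$ can only help, so the only nontrivial case is the step at which two clusters merge; there the property falls out directly from the definition of \textsc{isHighlyConnected}.

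\textbf{Setup and base case.} I would state the claim as: for every $t \geq 0$, every cluster $C \in P_t$ with $|C| \geq 2$, and every $x \in C$, the inequality $|\{x' \in C : (x,x') \in E_t\}| \geq |C|/2$ holds. The base case $t = 0$ is vacuous since $P_0$ consists of singletons. For the inductive step, fix $t \geq 1$ and assume the claim holds at time $t-1$.

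\textbf{Inductive step, two cases.} There are two subcases depending on whether \textsc{HCCTriangle} merges at step $t$.
\begin{itemize}
    \item If no merge occurs, then $P_t = P_{t-1}$. For any cluster $C \in P_t$ with $|C|\geq 2$ and any $x \in C$, the inductive hypothesis gives $|\{x' \in C : (x,x') \in E_{t-1}\}| \geq |C|/2$. Since $E_{t-1} \subseteq E_t$, the same lower bound holds with $E_t$, as desired.
    \item If a merge occurs, then there exist $C_x, C_y \in P_{t-1}$ with $e_t$ joining $C_x$ to $C_y$, and the new cluster is $C = C_x \cup C_y \in P_t$. For any $u \in C_x$, the inductive hypothesis (applied either vacuously when $|C_x|=1$, or via the bound for $t-1$ when $|C_x|\geq 2$) gives $|\{u' \in C_x : (u,u') \in E_{t-1}\}| \geq |C_x|/2$, and $E_{t-1}\subseteq E_t$ preserves this count. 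The merge condition \textsc{isHighlyConnected}$(C_x,C_y,E_t)$ being true gives $|\{v \in C_y : (u,v) \in E_t\}| \geq |C_y|/2$. Adding the two disjoint contributions yields $|\{x' \in C : (u,x') \in E_t\}| \geq |C_x|/2 + |C_y|/2 = |C|/2$. A symmetric argument handles $u \in C_y$. Clusters in $P_t$ other than $C$ are unchanged from $P_{t-1}$, and the previous subcase's reasoning applies to them.
\end{itemize}

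\textbf{Main obstacle.} There is no real analytic obstacle here; the only thing to be careful about is the bookkeeping between the time index on the partition ($P_{t-1}$ vs.\ $P_t$) and the time index on the edge set ($E_{t-1}$ vs.\ $E_t$). Specifically, the inductive hypothesis is phrased with respect to $E_{t-1}$, whereas the conclusion is needed for $E_t$, and one must repeatedly invoke the monotonicity $E_{t-1}\subseteq E_t$ to transport the half-degree lower bound across this shift. Once that is kept straight, the half-degree invariant drops out of the merge rule by construction, which is precisely why the algorithm was designed to test highly-connectedness before agglomerating.
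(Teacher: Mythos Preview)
Your overall strategy (induction on $t$, monotonicity of $E_t$ for unchanged clusters, and the highly-connected condition for the merged cluster) is exactly the paper's approach. However, the merge case has a genuine gap when one of the two clusters is a singleton.

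Concretely, suppose $|C_x| = 1$, say $C_x = \{u\}$. You claim the inductive hypothesis ``applies vacuously'' to give $|\{u' \in C_x : (u,u') \in E_{t-1}\}| \geq |C_x|/2$. But this count is $0$, and $|C_x|/2 = 1/2$, so the inequality $0 \geq 1/2$ is simply false; there is nothing vacuous about it. Your additive bound then reads $0 + |C_y|/2 \geq |C|/2 = (|C_y|+1)/2$, which fails. The paper handles this case separately: when $C_x = \{u\}$, the \emph{other} half of the highly-connected condition (for every $y \in C_y$, at least $|C_x|/2 = 1/2$ neighbors in $C_x$, hence at least one, hence $(u,y) \in E_t$) forces $u$ to be adjacent to \emph{every} vertex of $C_y$. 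This gives $u$ exactly $|C_y| = |C|-1 \geq |C|/2$ neighbors in $C$. For vertices $y \in C_y$ one then gets $|C_y|/2$ from the inductive hypothesis (assuming $|C_y|\geq 2$) plus the edge $(u,y)$, for a total of $|C_y|/2 + 1 > |C|/2$.

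Once you add this singleton case (the paper's Case~1), your argument becomes identical to the paper's.
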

        \begin{proof}
            We use induction on $t$. The lemma is obviously true when $t = 0$. Next, suppose we pick $C \in P_t$ with $|C| \geq 2$ and the cluster $C$ is formed at step $t_0 \leq t$. If $t_0 < t$, since $P_t$ is increasing, it is clear that
            \[ 
                |\{x' \in C| (x,x') \in E_t\}| \geq |\{x' \in C| (x,x') \in E_{t_0}\}| \geq \frac{|C|}{2} \quad \text{for all } x \in C.
            \]
            Therefore, it is enough to check when $t_0 = t$, i.e., $C$ is the newly added cluster at step $t$. 

            Suppose $C$ is achieved by merging $C = C_a \cup C_b$. We then have two cases to check, depending on the sizes of $C_a$ and $C_b$.
            \begin{enumerate}
            \item[Case 1:] $|C_a| = 1$ or $|C_b| = 1$: Without loss of generality, assume that $|C_a| = 1$ and let $C_a = \{ a \}$. If $C_b$ is also a singleton, then there is nothing to prove. If not, then, we must have $(a,y) \in E_t$ for all $y \in C_b$ in order to be \emph{highly connected}. And by the induction hypothesis, 
            \[ 
                |\{y' \in C_b| (y,y') \in E_t\}| \geq |\{y' \in C_b| (y,y') \in E_{t-1}\}| \geq \frac{|C_b|}{2} \quad \text{for all } y \in C_b.
            \]
            Hence, we have
            \[ |\{y' \in C| (a,y') \in E_t\}| = |C_b| = |C| - 1 \geq \frac{|C|}{2}  
            \]
            and for all $y \in C_b$
            \[ 
                |\{y' \in C| (y,y') \in E_t\}| = |\{y' \in C_b| (y,y') \in E_t\}| + 1 
                \geq \frac{|C_b|}{2} + 1 >  \frac{|C|}{2},
            \]
            which completes the proof of Case 1.

            \item[Case 2:] $|C_a| > 1$ and $|C_b| > 1$: By the induction hypothesis, we have 
            \[ 
                |\{x' \in C_a| (x,x') \in E_t\}| \geq |\{x' \in C_a| (x,x') \in E_{t-1}\}| \geq \frac{|C_a|}{2} \quad \text{for all } x \in C_a,
            \]
            and similarly for $C_b$. As $C_a$ and $C_b$ are both \emph{highly connected}, we have
            \begin{align*}
             |\{y' \in C_b| (x,y') \in E_t\}| & \geq \frac{|C_b|}{2} \quad \text{for all } x \in C_a, \\
             |\{x' \in C_b| (y,x') \in E_t\}| & \geq \frac{|C_a|}{2} \quad \text{for all } y \in C_b.
            \end{align*} 
            Therefore, for any $x \in C_a$,
            \begin{align*}
                |\{z' \in C| (x,z') \in E_t\}| & = |\{x' \in C_a| (x,x') \in E_t\}| + |\{y' \in C_b| (x,y') \in E_t\}| \\
                & \geq \frac{|C_a|}{2} + \frac{|C_b|}{2} = \frac{|C|}{2},
            \end{align*}
            while each inequality comes from the induction hypothesis and the connectivity condition. We can similarly show the property for any $y \in C_b$, which completes the proof of Case 2. 
            \end{enumerate}
            This completes the proof.
        \end{proof}

        Next, we prove the following isoperimetric property:
        \begin{lemm}\label{lemma:boundaryset}
            For $C \in P_t$ denote $\partial X := \{(x,y) \in E_t| x \in X, y \in C \setminus X\}$ for $X \subset C$. Then for any proper $X$,
                \[|\partial X| \geq \frac{|C|}{2}.
                \]
        \end{lemm}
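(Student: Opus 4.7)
The plan is a standard double-counting argument that leverages Lemma~\ref{lemma:deghalf} together with its symmetric counterpart on the complement. Write $n = |C|$ and $k = |X|$, with $1 \leq k \leq n - 1$. By Lemma~\ref{lemma:deghalf}, every vertex $x \in C$ has at least $n/2$ neighbors in $C$ via edges of $E_t$. For each $x \in X$, at most $k - 1$ of those neighbors can lie in $X$, so the remainder contribute to $\partial X$.

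Summing this observation over $x \in X$ and using $\sum_{x \in X} |\{x' \in X : (x, x') \in E_t\}| = 2 |E_t \cap \binom{X}{2}| \leq k(k-1)$, together with $\sum_{x \in X} |\{y \in C \setminus X : (x, y) \in E_t\}| = |\partial X|$, I would obtain
\[
    |\partial X| \;\geq\; k \cdot \frac{n}{2} \;-\; k(k - 1) \;=\; k\!\left(\frac{n}{2} - k + 1\right).
\]
Running the symmetric argument starting from $C \setminus X$ (of size $n - k$) yields the mirror bound $|\partial X| \geq (n - k)\left(k - n/2 + 1\right)$.

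It then remains to show that $\max\{g(k),\, g(n-k)\} \geq n/2$, where $g(k) := k(n/2 - k + 1)$. By swapping $X$ and $C \setminus X$ I may assume $k \leq n/2$, and it suffices to check that $g(k) \geq n/2$ on $[1, n/2]$. Since $g$ is a downward-opening quadratic with $g(1) = g(n/2) = n/2$, its minimum on this closed interval is attained at the endpoints and equals exactly $n/2$. This yields $|\partial X| \geq n/2 = |C|/2$, as required.

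The main obstacle I anticipate is handling the odd-$|C|$ case cleanly: degrees are integers, so Lemma~\ref{lemma:deghalf} really provides $d(x) \geq \lceil n/2 \rceil$, and the integer range for $k$ becomes $[1, (n-1)/2]$. Re-evaluating $g$ at these integer endpoints with the ceiling in place of $n/2$ only strengthens the bound (giving $|\partial X| \geq \lceil n/2 \rceil$), so the conclusion $|\partial X| \geq |C|/2$ goes through without modification. No structural ideas beyond Lemma~\ref{lemma:deghalf} appear necessary.
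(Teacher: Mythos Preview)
Your proposal is correct and follows essentially the same route as the paper: assume without loss of generality that $|X| \le |C|/2$, use Lemma~\ref{lemma:deghalf} to conclude that each $x \in X$ contributes at least $|C|/2 - |X| + 1$ boundary edges, sum to obtain $|\partial X| \ge |X|\bigl(|C|/2 - |X| + 1\bigr)$, and then observe via concavity that this is at least $|C|/2$ on the relevant range. The paper's proof states the final inequality without the explicit endpoint/concavity justification and does not discuss the parity issue you raise, so your write-up is simply a more detailed version of the same argument.
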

        \begin{proof}
            Without loss of generality, assume that $1 \leq |X| \leq \frac{|C|}{2}$. Then for any $x \in X$, by \ref{lemma:deghalf}, there are at least $\frac{|C|} {2} - |X| + 1$ edges which connects $x$ and other vertices not in $X$. Therefore, we have 
            \[|\partial X| \geq |X|\left(\frac{|C|}{2} - |X| + 1\right) \geq \frac{|C|}{2}.\]
        \end{proof}

        As our version of the HCC problem bounds the number of edges in disagreement in terms of the bad triangles, with the next lemmas, we  count the number of bad triangles $|B(G_t)|$ and bound the number of edges in disagreement. Recall that a bad triangle is defined as a triplet of edges in which only two of the three possible edges belong to $E_t$. For each cluster $C \in P_t$, denote the number of bad triangles \emph{inside} $C$ as $T_C$. And for each cluster pair $C, C' \in P_t$, we count \emph{some} of the bad triangles \emph{between} $C$ and $C'$ as $T_{C,C'}$. More precisely,
        \begin{defn}
            Given a partition $P_t$, denote
            \[T_C := \{(x,y,z) | x,y,z \in C \text{ and } (x,y,z) \in B(G_t)\} \quad \text{for } C \in P_t ,\]
            \begin{align*}
                T_{(C,C')} & := \{(x,x',y) | x,x' \in C, y \in C' \text{ and } (x,x'), (x,y) \in E_t, (x',y) \notin E_t \} \\
                & \cup \{(x,y,y') | x \in C, y,y' \in C' \text{ and } (y,y'), (x,y) \in E_t, (x,y') \notin E_t \} \quad \text{for } (C,C') \in \binom{P_t}{2}.
            \end{align*}
        \end{defn}
        Note that $T_{C,C'}$ does not count \emph{every} bad triangle between $C$ and $C'$, as there might be bad triangles in which the missing edge is \emph{inside} the cluster. With these definitions, we have
        \[
            \sum_{C \in P_t} |T_C| + \sum_{(C,C') \in \binom{P_t}{2}} |T_{(C,C')}| \leq |B(G_t)|.
        \]
        
        \begin{prop}\label{prop:inboundedge}
            For any cluster $C \in P_t$, the number of edges \emph{not} in $C$ is bounded by $|T_C|$. That is,
            \[
                |\{(x,y) \notin E | x,y \in C\}| \leq |T_C|.
            \] 
            (In fact, it is bounded by $|T_C| / 2$.)
        \end{prop}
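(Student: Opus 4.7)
The plan is to bound missing edges inside $C$ by a double-counting argument against bad triangles, attributing each bad triangle to its unique missing edge. The first observation is that any bad triangle $(x,y,z) \in T_C$ has exactly one non-edge among its three pairs (by the definition $|\{\cdots\} \cap E_t| = 2$), so if we fix that non-edge to be $(x,y)$, the third vertex $z$ must be a common neighbor of $x$ and $y$ inside $C$. Writing $N_C(v) := \{u \in C : (u,v) \in E_t\}$, this gives the exact count
\[
|T_C| \;=\; \sum_{\substack{(x,y) \notin E_t \\ x,y \in C}} |N_C(x) \cap N_C(y)|.
\]

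The heart of the argument is then to show that each summand is at least $2$. For this I would invoke Lemma~\ref{lemma:deghalf}, which yields $|N_C(x)|, |N_C(y)| \geq |C|/2$, and observe that since $(x,y)\notin E_t$ neither vertex lies in the other's neighborhood, hence $N_C(x) \cup N_C(y) \subseteq C \setminus \{x,y\}$, a set of size $|C|-2$. A one-line inclusion-exclusion then produces $|N_C(x) \cap N_C(y)| \geq |N_C(x)| + |N_C(y)| - (|C|-2) \geq 2$, and summing over the missing edges gives $|T_C| \geq 2M$, where $M$ is the number of missing internal pairs. This yields the stronger bound $M \leq |T_C|/2$ stated parenthetically.

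The only subtlety I anticipate is handling small $|C|$: for $|C| = 1$ there is nothing to prove, and for $|C| = 2$ Lemma~\ref{lemma:deghalf} forces the unique internal pair to be an edge, so $M = 0$ holds vacuously (and for $|C|=3$ the degree bound already forces $C$ to be a complete triangle). No real obstacle arises; the crucial point is the ``$+2$'' coming from excluding the two endpoints of a missing edge from each other's neighborhood, which is exactly what allows the degree lower bound from Lemma~\ref{lemma:deghalf} to overcome the $|C|$ budget and guarantee a non-trivial common neighborhood.
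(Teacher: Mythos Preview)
Your proposal is correct and is essentially the same argument as the paper's: both show that every missing edge $(x,y)$ inside $C$ participates in at least two bad triangles of $T_C$ by using Lemma~\ref{lemma:deghalf} together with inclusion--exclusion on $N_C(x)$ and $N_C(y)$ inside $C\setminus\{x,y\}$ to force $|N_C(x)\cap N_C(y)|\ge 2$. Your explicit identity $|T_C|=\sum_{(x,y)\notin E_t,\,x,y\in C}|N_C(x)\cap N_C(y)|$ is a slightly cleaner packaging of the same double count, but the content is identical.
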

        \begin{proof}
            We will prove that for any $e = (x,y) \notin E_t$ with $x,y \in C$, there exist at least two elements in $T_C$ that contain $e$, which implies our result. If $|C| = 1$, then there is nothing to prove. For any $|C| \geq 2$ and $(x,y) \notin E_t$, by Lemma~\ref{lemma:deghalf}, there are at least $\frac{|C|}{2}$ neighbors of $x$ and $y$ (we will denote each as $N_x$ and $N_y$, respectively). Since $N_x , N_y \subset C \setminus \{x,y\}$, we have
            \[|N_x \cap N_y| = |N_x| + |N_y| - |N_x \cup N_y| \geq \frac{|C|}{2} + \frac{|C|}{2} - (|C| - 2) = 2.\]
            For any $z \in N_x \cap N_y$, $(x,y,z) \in T_C$ by definition, which proves the assertion.
        \end{proof}
        
        \begin{lemm}\label{lemma:btwntriangle}
            For $G = (V,E)$ and two disjoint subsets $X,Y \subset V$, suppose
            \[ |\{x' \in X| (x,x') \in E\}| \geq \frac{|X|}{2} \text{ for all } x \in X \text{ and } |\{y' \in Y| (y,y') \in E\}| \geq \frac{|Y|}{2} \text{ for all } y \in Y.\]
            We will further assume that $X$ and $Y$ are not highly connected. Then we have
            \begin{align*}
            |\{(x,y) \in E| x \in X, y \in Y\}| \leq 4 & \cdot  (|\{(x,x',y) | x,x' \in X, y \in Y \text{ and } (x,x'), (x,y) \in E, (x',y) \notin E \}| \\
                 & + |\{(x,y,y') | x \in X, y,y' \in Y \text{ and } (y,y'), (x,y) \in E, (x,y') \notin E \}|)
            \end{align*}
        \end{lemm}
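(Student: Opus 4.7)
The plan is to proceed by case analysis on which side of the highly-connected condition fails. Without loss of generality I will assume there exists $y_0 \in Y$ with $a_{y_0} := |N_X(y_0)| < |X|/2$ (the symmetric case with a vertex $x_0 \in X$ having fewer than $|Y|/2$ neighbors in $Y$ is handled by swapping the roles of $X$ and $Y$). Set $A = N_X(y_0)$ and $B = X \setminus A$, so $|B| > |X|/2$. The strategy is to partition $E_{XY} := \{(x,y) \in E : x \in X, y \in Y\}$ into layers according to the cross-degrees $a_y = |N_X(y)|$ and $b_x = |N_Y(x)|$---writing $Y_- = \{y : a_y \leq |X|/2\}$, $Y_{\mathrm{mid}} = \{y : |X|/2 < a_y < |X|\}$, $Y_{\max} = \{y : a_y = |X|\}$, and analogously $X_-, X_{\mathrm{mid}}, X_{\max}$---and to charge each layer to type 1 or type 2 bad triangles.

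For the first two layers I will use the following isoperimetric consequence of the hypothesis that every vertex in $X$ has at least $|X|/2$ neighbors inside $X$: for any nonempty proper $S \subsetneq X$, the number of within-$X$ edges between $S$ and $X \setminus S$ is at least $|S|(|X|/2 - |S| + 1)$ when $|S| \leq |X|/2$ and at least $(|X| - |S|)(|S| - |X|/2 + 1)$ otherwise. Applied with $S = N_X(y)$, this gives for each $y \in Y_-$ a contribution of at least $a_y$ to $|T_1|$ (so that $|E(X, Y_-)| \leq |T_1|$), and for each $y \in Y_{\mathrm{mid}}$ a contribution of at least $|X|/2$ (so that $|Y_{\mathrm{mid}}| \leq 2|T_1|/|X|$ and hence $|E(X, Y_{\mathrm{mid}})| \leq 2|T_1|$). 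The symmetric analysis on the $Y$-side gives $|E(X_-, Y)| \leq |T_2|$ and $|E(X_{\mathrm{mid}}, Y)| \leq 2|T_2|$. Together these charge every edge of $E_{XY}$ except those in $E(X_{\max}, Y_{\max})$.

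For the final layer $E(X_{\max}, Y_{\max})$ the witness $y_0$ enters decisively: since $b_x = |Y|$ forces $(x, y_0) \in E$, we have $X_{\max} \subseteq A$, whence $|X_{\max}| \leq a_{y_0} < |X|/2$. To charge these edges to type 2 bad triangles I will use that for every $y \in Y_{\max}$ adjacent to $y_0$ within $Y$ and every $b \in B$, the triple $(b, y, y_0)$ is a type 2 bad triangle with present edges $(y, y_0) \in E$ and $(b, y) \in E$ and missing cross edge $(b, y_0)$. Since $d_Y(y_0) \geq |Y|/2$, at least $|Y_{\max}| - |Y|/2$ of the $y \in Y_{\max}$ are adjacent to $y_0$; combined with $|B| > |X|/2$ this produces enough previously uncounted type 2 triangles to absorb $|E(X_{\max}, Y_{\max})| \leq |X_{\max}| \cdot |Y_{\max}|$. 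The main obstacle I anticipate is pinning down the final constant to exactly $4$: the first two steps already commit $|T_1|$ and $|T_2|$ up to a factor of $3$, and the last step must contribute only freshly counted type 2 triangles, so the bookkeeping has to be careful. I expect this to work out via a careful averaging of the $X$-side and $Y$-side bounds together with the asymmetric bound $|X_{\max}| < |X|/2$ (and, in the symmetric case, $|Y_{\max}| < |Y|/2$) that the witness $y_0$ (respectively $x_0$) provides.
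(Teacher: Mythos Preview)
Your core ingredient---the isoperimetric bound $|\partial S| \geq |X|/2$ for a nonempty proper $S\subset X$ under the internal degree hypothesis---is exactly the tool the paper uses (its Lemma~\ref{lemma:boundaryset}). The difference is that you build a three-layer partition $Y_{-}\cup Y_{\mathrm{mid}}\cup Y_{\max}$ and then try to charge the residual block $E(X_{\max},Y_{\max})$ separately, whereas the paper skips all of this and bounds the crude quantity $|E_{XY}|\le |X|\,|Y|$ in one shot.

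Your Step~3 as written does not go through. The triangles $(b,y,y_0)$ you produce, with $b\in B$ and $y\in Y_{\max}$ adjacent to $y_0$, are \emph{not} ``previously uncounted'': they are exactly edges of $\partial N_Y(b)$ and hence were already consumed when you charged $E(X_{-}\cup X_{\mathrm{mid}},Y)$ to $|T_2|$ (note $B\subseteq X_{-}\cup X_{\mathrm{mid}}$ since $y_0\notin N_Y(b)$). Moreover the count $|Y_{\max}|-|Y|/2$ can be nonpositive, so as stated the step may produce nothing at all. Your scheme \emph{can} be rescued---if $Y_{\max}\neq\emptyset$ then every $b\in B$ has $N_Y(b)$ proper nonempty, so $|T_2|\ge |B|\cdot |Y|/2 > |X|\,|Y|/4 > \tfrac12|X_{\max}|\,|Y_{\max}|$, and you end up with $|E_{XY}|\le 2|T_1|+4|T_2|\le 4(|T_1|+|T_2|)$---but this is already double-using $T_2$, not finding fresh triangles.

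The paper's argument is much shorter. Writing $T_1(y)=|\partial N_X(y)|$ and $T_2(x)=|\partial N_Y(x)|$, the right-hand side equals $4\bigl(\sum_y T_1(y)+\sum_x T_2(x)\bigr)$. If some $x_0$ has $N_Y(x_0)=\emptyset$, then every $y$ with $N_X(y)\neq\emptyset$ has $N_X(y)\subsetneq X$, hence $T_1(y)\ge |X|/2$, and summing gives the bound immediately. Otherwise (and symmetrically) every $N_Y(x),N_X(y)$ is nonempty; take the witness $x_0$ with $|N_Y(x_0)|<|Y|/2$. Then for each of the more than $|Y|/2$ vertices $y\notin N_Y(x_0)$ the set $N_X(y)$ is proper nonempty, so $T_1(y)\ge |X|/2$, giving $\sum_y T_1(y)>|X|\,|Y|/4$ and hence $4(|T_1|+|T_2|)>|X|\,|Y|\ge |E_{XY}|$. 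No layer decomposition, no residual block, no bookkeeping of ``fresh'' triangles.
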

        \begin{proof}
            The key argument here is to use Lemma~\ref{lemma:boundaryset} and count the boundary sets. Define $N_Y(x) := \{y \in Y| (x,y) \in E\}$ for $x \in X$ (and $N_X(y)$ for $y \in Y$ respectively.) Then for any $(y,y') \in \partial N_Y(x)$, $(x,y,y')$ is a bad triangle and it contributes the right hand side. In other words, the right hand side of the above equation is upper bounded by
            \[ (\star) = 4 \cdot \left(\sum_{x \in X} |\partial N_Y(x)| + \sum_{y \in Y} |\partial N_X(y)| \right). \]
            By \ref{lemma:boundaryset}, we know that for $N_Y(x) \neq \emptyset \text{ or } Y$, $|\partial N_Y(x)| \geq \frac{|Y|}{2}$. We will count such $x \in X$ and $y \in Y$ which its neighbor set is \emph{proper}, in order to make a lower bound.
            We divide the rest of the analysis into three cases: 
            \begin{itemize}
            \item[Case 1:] There exists $x_0 \in X$ such that $N_Y(x_0) = \emptyset$: then for any $y \in Y$, $x_0 \notin N_X(y)$ so that $N_X(y)$ cannot be $X$. Thus, $N_X(y) \neq \emptyset$ immediately leads that $|\partial N_X(y)| \geq \frac{|X|}{2}$. We have
            \begin{align*}
                |\{(x,y) \in E| x \in X, y \in Y\}| & \leq |X| \cdot |\{y \in Y| N_X(y) \neq \emptyset\}| \\
                & \leq \sum_{y \in Y, N_X(y) \neq \emptyset} |X| \\
                & \leq 2 \sum_{y \in Y, N_X(y) \neq \emptyset} |\partial N_X(y)| \\
                & \leq (\star) \leq \text{RHS},
            \end{align*}
            which proves the lemma.
            \item[Case 2:] There exists $y_0 \in Y$ such that $N_X(y_0) = \emptyset$: this case is proven as we did in Case 1.
            \item[Case 3:] For any $x \in X$ and $y \in Y$, $N_Y(x) \neq \emptyset$ and $N_X(y) \neq \emptyset$, we will use the assumption that $X$ and $Y$ are \emph{not} highly connected. With this assumption, we can also assume without loss of generality that there is an $x_0 \in X$ such that $|N_Y(x_0)| < \frac{|Y|}{2}$. Then for any $y \notin N_Y(x_0)$, $N_X(y) \neq \emptyset$ and $N_X(y) \neq X$ (as $x_0 \notin N_X(y)$). Thus, the boundary set exists and has at least $\frac{|X|}{2}$ elements. Therefore,
            \begin{align*}
                \text{RHS} \geq (\star) & \geq 4 \sum_{y \in Y} |\partial N_X(y)| \\
                & \geq 4 \sum_{y \in Y, y \notin N_Y(x_0)} |\partial N_X(y)| \\
                & \geq 4 \sum_{y \in Y, y \notin N_Y(x_0)} \frac{|X|}{2} \\
                & = 4 \cdot \frac{|X|}{2} \cdot (|Y| - |N_Y(x_0)|) \\
                & > 4 \cdot \frac{|X|}{2} \cdot \frac{|Y|}{2} = |X| \cdot |Y| \geq \text{LHS}.
            \end{align*}
        \end{itemize}
        This completes the proof.
        \end{proof}
        
        \begin{prop}\label{prop:outboundedge}
            For any cluster pair $(C,C') \in P_t$, the number of edges \emph{between} $C$ and $C'$ is bounded by $4 |T_{(C,C')}|$. (i.e., $|\{(x,y) \in E| x \in C, y \in C'\}| \leq 4 |T_{(C,C')}|$.)
        \end{prop}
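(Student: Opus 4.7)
The plan is to invoke Lemma~\ref{lemma:btwntriangle} with $X = C$ and $Y = C'$ in $G = (V, E_t)$, since that lemma delivers exactly the bound $|\{(x,y) \in E : x \in C, y \in C'\}| \leq 4|T_{(C,C')}|$ that we want. The lemma has two hypotheses: (i) every vertex in $C$ has at least $|C|/2$ neighbors inside $C$ (and symmetrically for $C'$) in $E_t$; and (ii) the pair $(C,C')$ is not highly connected in $E_t$. Hypothesis (i) is immediate from Lemma~\ref{lemma:deghalf} applied to each cluster at step $t$, so the real work is verifying (ii).

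For (ii), I would consider the last step $s^* \leq t$ at which an edge $(x,y)$ with $x \in C$, $y \in C'$ was processed by the algorithm. If no such step exists, then $|\{(x,y) \in E_t : x \in C, y \in C'\}| = 0$ and the proposition is trivial. Otherwise, at step $s^*$ the endpoints lie in clusters $A \in P_{s^*-1}$ and $D \in P_{s^*-1}$ with $A \subseteq C$ and $D \subseteq C'$. Because $C$ and $C'$ are still distinct in $P_t$ and clusters only grow over time, the algorithm did not merge $A$ with $D$ at step $s^*$, so the \textsc{isHighlyConnected}$(A, D, E_{s^*})$ check returned False, witnessing that $(A, D)$ is not highly connected in $E_{s^*}$.

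The main obstacle is then lifting this local failure at $(A, D)$ in $E_{s^*}$ to a global failure at $(C, C')$ in $E_t$, since $A$ and $D$ may be proper subclusters. The structural facts I would exploit are that no cross-edges between $C$ and $C'$ are added after step $s^*$ (so cross-degrees $|N_{C'}(x)|$ and $|N_C(y)|$ in $E_t$ coincide with those in $E_{s^*}$), and that every subsequent within-$C$ or within-$C'$ merge must itself pass a highly-connected check and thus contribute the density guarantees of Lemma~\ref{lemma:deghalf} to the enlarged clusters. Combining the failing witness vertex from the $(A, D)$ check at step $s^*$ with these invariants should identify a vertex of $C$ or $C'$ whose cross-neighborhood falls below the threshold $|C'|/2$ or $|C|/2$ at $E_t$. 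With hypothesis (ii) secured, Lemma~\ref{lemma:btwntriangle} applies verbatim and yields $|\{(x,y) \in E_t : x \in C, y \in C'\}| \leq 4|T_{(C,C')}|$, completing the proof.
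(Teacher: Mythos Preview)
Your plan hinges on showing that the pair $(C,C')$ is \emph{not} highly connected in $E_t$ so that Lemma~\ref{lemma:btwntriangle} applies directly, but this claim is false in general. The failure of \textsc{isHighlyConnected}$(A,D,E_{s^*})$ for the subclusters $A\subseteq C$, $D\subseteq C'$ need not lift to a failure for $(C,C')$ in $E_t$: the witnessing vertex may acquire enough cross--neighbours once the other pieces of $C'$ (or $C$) are aggregated. Concretely, take $C=\{a_1,a_2\}$, $C'=\{d_1,d_2,d_3,d_4\}$ and process the edges in the order
\[
(a_1,a_2),\ (a_1,d_1),\ (a_1,d_2),\ (a_2,d_3),\ (a_2,d_4),\ (d_1,d_2),\ (d_3,d_4),\ (d_1,d_3),\ (d_2,d_4).
\]
Tracing \textsc{HCCTriangle} gives $P_9=\{C,C'\}$, with cross edges $\{(a_1,d_1),(a_1,d_2),(a_2,d_3),(a_2,d_4)\}$. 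Here every $a_i$ has $2\geq |C'|/2$ neighbours in $C'$ and every $d_j$ has $1\geq |C|/2$ neighbours in $C$, so $(C,C')$ \emph{is} highly connected in $E_9$, yet the two clusters remain separate because no cross edge was processed after step~$5$. Your ``lifting'' step therefore cannot succeed, and Lemma~\ref{lemma:btwntriangle} is not available for $(C,C')$ as a whole.

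The paper avoids this obstacle by arguing inductively on $t$ rather than trying to certify non--high--connectedness globally. The only time one actually invokes Lemma~\ref{lemma:btwntriangle} is when $e_t$ itself is a cross edge between $C$ and $C'$: then the rejected merge at step $t$ literally certifies that $(C,C')$ is not highly connected in $E_t$. In the remaining case, one of $C$, $C'$ was just created as $C=C_a\cup C_b$, and the bound follows by summing the inductive bounds for $(C_a,C')$ and $(C_b,C')$, using $|T_{(C_a,C')}|+|T_{(C_b,C')}|\leq |T_{(C,C')}|$. This decomposition is precisely what replaces the global non--HC check you were hoping to establish.
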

        \begin{proof}
            Here, we will use an induction on $t$. When $t = 0$, then there is nothing to prove. Suppose the induction hypothesis holds for $t_0 < t$.
            \begin{itemize}
            \item[Case 1:] $e_t$ does not connect two vertices between $C$ and $C'$ and $C,C'$ are not added at step $t$: then by induction hypothesis, the proposition holds at step $t-1$, which is also true at step $t$ (As the left hand side is invariant and $|T_{(C,C')}|$ is increasing).
            \item[Case 2:] $e_t = (x,y)$ for $x \in C$ and $y \in C'$: then it follows that our algorithm decided not to merge two clusters, which means $C$ and $C'$ are \emph{not} highly connected. Then by Lemma~\ref{lemma:btwntriangle}, the number of edges is bounded by $4|T_{(C,C')}|$ as desired.
            \item[Case 3:] $e_t$ does not connect two vertices between $C$ and $C'$, but $C$ or $C'$ is newly formed at step $t$: note that both clusters cannot be generated at the same time, so we assume without loss of generality that $C$ is the new cluster and assume it is generated by $C = C_a \cup C_b$. Then by the induction hypothesis, we have
            \begin{align*}
                \{(x,y) \in E| x \in C, y \in C'\}| & = \{(x,y) \in E| x \in C_a, y \in C'\}| + \{(x,y) \in E| x \in C_b, y \in C'\}| \\
                & \leq 4 (|T_{(C_a, C')}| + |T_{(C_b, C')}|) \quad\text{(by the induction hypothesis)} \\
                & \leq 4 |T_{(C,C')}|.
            \end{align*}
            \end{itemize}
            This completes the proof.
        \end{proof}

        We are now ready to prove Theorem~\ref{thm:hcctriangle}.

        \textbf{Proof of Theorem~\ref{thm:hcctriangle}:}
            For $P_t = \{C_1, C_2, \cdots, C_k\}$, denote
            \[e_i := |\{(x,x') \notin E| x,x' \in C_i\}|,\]
            \[e_{i,j} := |\{(x,y) \in E| x \in C_i, y \in C_j\}|.\]
            Then by the above propositions, we have
            \begin{align*}
                |E_t \Delta E(P_t)| & = \sum_{i} e_i + \sum_{i<j} e_{i,j} \\
                & \leq \sum_{i} T_{C_i} + \sum_{i<j} 4 T_{(C_i,C_j)} \\ 
                & \leq 4 \left( \sum_{i} T_{C_i} + \sum_{i<j} T_{(C_i,C_j)} \right) \leq 4 \cdot |B(G_t)|.
            \end{align*}

\subsection{Implementation details}
\label{app_sec:implement}

To detail our implementation, we will introduce the following variables.
    

    For $x \in X$ and cluster $C$, denote $D(x,C) := |\{y \in C | (x,y) \in E\}|$. We further denote the \emph{connectivity condition} $H(x,C)$ as $\operatorname{IF}(D(x,C) \geq \frac{|C|}{2})$.
    \\Also for given two clusters $C_a$ and $C_b$, we will denote $M(C_a, C_b) := |\{x \in C_a | H(x,C_b) = 1\}|$. We can easily check that two disjoint clusters $C_a$ and $C_b$ are highly connected if and only if
    \[M(C_a, C_b) + M(C_b, C_a) = |C_a| + |C_b|.\]

    \begin{algorithm}
        \caption{\textsc{HCCTriangle}: Actual Implementation}
        \begin{algorithmic}
        \Function{\textsc{HCCTriangle}}{}
            \State \textbf{Input}: $X = \{x_1 , \cdots, x_n \}$, $\{e_t\}$

            \State Initialize $D, H, M$ as 0 for vertex set $X$ 
            \State Initialize $\mathcal{P} = \{ \{x_1\}, \{x_2\}, \cdots, \{x_n\} \}, \mathcal{E} = \emptyset$
            \State For $t \in \{1,2, \cdots, \binom{n}{2}\}$:
            \State \quad Take $e_t = (x,y)$ with $x \in C_a$ and $y \in C_b \: (C_a, C_b \in \mathcal{P})$
            \State \quad Add $D(x, C_b), D(y, C_a)$ by 1
            \State \quad Update $H(x, C_b)$ and $H(y, C_a)$
            \State \quad If $H$ has been updated, update $M(C_a, C_b)$ and $M(C_b, C_a)$ accordingly
            \State \quad If $C_a \neq C_b$ and $M(C_a, C_b) + M(C_b, C_a) = |C_a| + |C_b|$:
            \State \quad \quad Remove $C_a, C_b$ and add $C_{new} := C_a \cup C_b$ in $\mathcal{P}$
            \State \quad \quad $D(:, C_{new}) \leftarrow D(:, C_a) + D(:, C_b)$
            \State \quad \quad Compute $H(:, C_{new})$ and $M(:, C_{new})$ (by using $D(:, C_{new})$)
            \State \quad \quad $M(C_{new}, :) \leftarrow M(C_a, :) + M(C_b, :)$
            \State \quad $P_t \leftarrow \mathcal{P}$
            \State \textbf{return} $\{P_t\}$
        \EndFunction
        \end{algorithmic}
    \end{algorithm}

    One can observe that an iteration only takes $O(1)$ when $C_a$ and $C_b$ are not merged. When we merge $C_{new}$, we need time $O(n)$ to compute new information of $C_{new}$. Since such events occur exactly $n-1$ times, we can conclude that \textsc{HCCTriangle} overall runs in time $O(n^2)$.
\subsection{Asymptotic tightness of Gromov's distortion bound} 
\label{app_sec:GromovTight}

    While it is well-known that Gromov's result is asymptotically tight, we give a self-contained sketch of the proof and provide an example that witnesses the bound. We detail the construction given by \cite{bowditch2006course}. Consider the Poincar\'e disk $\mathbb{H}^2$ and its hyperbolic $n = 4m$-gon $H_{n,r}$ as a set of equally spaced $n$ points around the circle of radius $r$ in $\mathbb{H}^2$. Label all the points as $x_1, x_2, \ldots, x_{4m}$, in the counterclockwise order, and define $y = x_{3m}$. Suppose $d_T$ is a tree metric fit to $d$ and denote the Gromov product with respect to $y$ as $gp_y$. Then
    \begin{align*}
            gp_y(x_0, x_{2m}) \geq \min (gp_y(x_0, x_1), gp_y(x_1, x_{2m})) & \geq \min (gp_y(x_0, x_1), \min(gp_y(x_1, x_2), gp_y(x_2, x_{2m})) \\
            & \geq \quad \cdots \\
            & \geq \min_{i \in \{0,1,\ldots,2m-1\}} gp_y(x_i, x_{i+1}) .
    \end{align*}
    Therefore there exists an $i$ such that 
    \[ d_T(x_i, y) + d_T(x_{i+1}, y) - d_T(x_i, x_{i+1}) \leq d_T(x_0, y) + d_T(x_{2m}, y) - d_T(x_0, x_{2m}) .\] 
    On the other hand, for $r, n$ large enough,
    \[ [ d(x_i, y) + d(x_{i+1}, y) - d(x_i, x_{i+1}) ] - [ d(x_0, y) + d(x_{2m}, y) - d(x_0, x_{2m}) ] \approx 2 \log(\sin(\pi/m)) \approx 2\log n,\]
    shows one of six pair distances should have distortion at least $\approx 1/3 \log n = \Omega(\log n)$.
    \begin{figure}[h]
        \begin{center}
            \begin{tikzpicture}[scale=1.00]
                \tikzstyle{every node}=[font=\small, point/.style = {draw, circle,  fill = black, inner sep = 1pt},
                dot/.style   = {draw, circle,  fill = black, inner sep = .2pt}]
                \def\rad{2}
                \def\r{1.6}
                \def\m{12}
                \pgfmathsetmacro\k{\m-1}
                \pgfmathsetmacro\n{2*\m}
                \def\i{8}
                \def\theta{360/\n}
                \def\gamma{3}
                \pgfmathsetmacro{\sina}{sin(\theta/2)}
                \pgfmathsetmacro{\sinb}{sin(\gamma*\theta/2)}
                \def\t{\r*\sina/\sinb} 
                \def\rho{5}
                \pgfmathsetmacro{\scale}{sqrt(0.5)/sin(45-\rho)}
                \def\rdash{\r*\scale}

                \node (origin) at (0,0) [point, label = {below right:$O$}]{};
                \draw (origin) circle (\rad);
                \draw (\rad*0.72, -\rad*0.72) coordinate(h) node[below right] {$\mathbb{H}^2$};
                \node at +(270:\r) [point, label = {270:$y$}] {};
                \node at +(180:\r) [point, label = {180:$x_{2m}$}] {};
                \foreach \s in {0,1,...,\k}
                {
                    \node at +({(\s)*\theta}:\r) [point, label = {{(\s)*\theta}:\ifthenelse{\lengthtest{\s pt < 3pt}}{$x_{\s}$}{}}] {};
                    \draw[red] ({(\s+1)*\theta}:\r) arc ({180+(\s+0.5)*\theta-(0.5*\gamma*\theta)}:{180+(\s+0.5)*\theta+(0.5*\gamma*\theta)}:\t);
                }
                \node (xi) at +({(\i)*\theta}:\r) [point, label = {{(\i)*\theta}:{$x_i$}}] {};
                \node (xip) at +({(\i+1)*\theta}:\r) [point, label = {{(\i+1)*\theta}:{$x_{i+1}$}}] {};
                \draw[blue] (0, -\r) arc (\rho:{90-\rho}:\rdash);
                \draw[blue] (\r, 0) arc ({90+\rho}:{180-\rho}:\rdash);
                \draw[purple] (0, -\r) to[bend right, looseness=0.4] (xi);
                \draw[purple] (0, -\r) to[bend right, looseness=0.5] (xip);
                
            \end{tikzpicture}
            \caption{This figure depicts the example that proves Gromov's distortion bound is asymptotically tight. By symmetry, we can conclude that Gromov's algorithm will always return the same $\ell_\infty$ error regardless of the choice of base point $w$. }
        \end{center}
    \end{figure}
    
    %

\subsection{Examples}
\subsubsection{Example for \textsc{HCCUltraFit}}
\label{app_sec:examples}

For the following distances $d$ on $X = \{a,b,c,d,e,f\}$, we will compute the ultrametric fit using HCC. We first sort all the pairs in increasing distance order. We merge $\{a\}$ and $\{b\}$ first; this edge corresponds to $e_1 = (a,b)$. Then we merge $\{d\}$ and $\{e\}$ for $e_2 = (d,e)$. Next, when we explore $(e,f)$ (regardless of whether $(a,d)$ comes first or not), we merge $\{d,e\}$ and $\{f\}$ with corresponding distance 6. Then we merge $\{a,b\}$ and $\{c\}$ when we explore $(b,c)$. Finally, we merge $\{a,b,c\}$ and $\{d,e,f\}$ when both clusters are \emph{highly connected}, and we check its corresponding fit is 8.

\begin{center}
\begin{tabular}{c|cccccc}
    $d$ & $a$ & $b$ & $c$ & $d$ & $e$ & $f$  \\ \hline
    $a$ & 0 & 3 & 5 & 6 & 8 & 8 \\ 
    $b$ & 3 & 0 & 7 & 8 & 7 & 10 \\
    $c$ & 5 & 7 & 0 & 9 & 5 & 7 \\
    $d$ & 6 & 8 & 9 & 0 & 4 & 5 \\
    $e$ & 8 & 7 & 5 & 4 & 0 & 6 \\
    $f$ & 8 & 10 & 7 & 5 & 6 & 0 
\end{tabular} $\Rightarrow$
\begin{tabular}{c|cccccc}
    $d_U$  & $a$ & $b$ & $c$ & $d$ & $e$ & $f$  \\ \hline
    $a$ & 0 & 3 & 7 & 8 & 8 & 8 \\ 
    $b$ & 3 & 0 & 7 & 8 & 8 & 8 \\
    $c$ & 7 & 7 & 0 & 8 & 8 & 8 \\
    $d$ & 8 & 8 & 8 & 0 & 4 & 6 \\
    $e$ & 8 & 8 & 8 & 4 & 0 & 6 \\
    $f$ & 8 & 8 & 8 & 6 & 6 & 0 
\end{tabular}
\begin{figure}[h]
\centering
\includegraphics[scale=0.35]{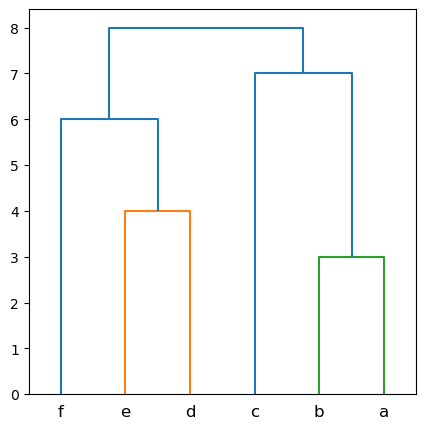}
\caption{This figure depicts the example output $d_U$ by drawing a dendrogram.}
\end{figure}
\end{center}

\subsubsection{Example for \textsc{HCCRootedTreeFit}}

Next, given $d$, we want to find a tree fitting $d_T$ which \emph{restricts} $r \in X$. We first compute $c_r$ and run \textsc{HCCUltraFit} on $d+c_r$, denote the output $d_U$. Then our tree fit $d_T$ is obtained by $d_T := d_U - c_r$, which is a tree metric. Furthermore, it restricts $r \in X$, so that $d_T (x,r) = d(x,r)$ for all $x \in X$.

\begin{center}
\begin{tabular}{c|ccccccc}
    $d$ & $r$ & $a$ & $b$ & $c$ & $d$ & $e$ & $f$  \\ \hline
    $r$ & 0 & 9 & 9 & 8 & 10 & 8 & 7 \\
    $a$ & 9 & 0 & 1 & 2 & 5 & 5 & 4 \\ 
    $b$ & 9 & 1 & 0 & 4 & 7 & 4 & 6 \\
    $c$ & 8 & 2 & 4 & 0 & 7 & 1 & 2 \\
    $d$ & 10 & 5 & 7 & 7 & 0 & 2 & 2 \\
    $e$ & 8 & 5 & 4 & 1 & 2 & 0 & 1 \\
    $f$ & 7 & 4 & 6 & 2 & 2 & 1 & 0 
\end{tabular} $\Rightarrow$
\begin{tabular}{c|ccccccc}
    $c_r$ & $r$ & $a$ & $b$ & $c$ & $d$ & $e$ & $f$  \\ \hline
    $r$ & 0 & 11 & 11 & 12 & 10 & 12 & 13 \\
    $a$ & 11 & 0 & 2 & 3 & 1 & 3 & 4 \\ 
    $b$ & 11 & 2 & 0 & 3 & 1 & 3 & 4 \\
    $c$ & 12 & 3 & 3 & 0 & 2 & 4 & 5 \\
    $d$ & 10 & 1 & 1 & 2 & 0 & 2 & 3 \\
    $e$ & 12 & 3 & 3 & 4 & 2 & 0 & 5 \\
    $f$ & 13 & 4 & 4 & 5 & 3 & 5 & 0 
\end{tabular} 

\begin{tabular}{c|ccccccc}
    $d+c_r$ & $r$ & $a$ & $b$ & $c$ & $d$ & $e$ & $f$  \\ \hline
    $r$ & 0 & 20 & 20 & 20 & 20 & 20 & 20 \\
    $a$ & 20 & 0 & 3 & 5 & 6 & 8 & 8 \\ 
    $b$ & 20 & 3 & 0 & 7 & 8 & 7 & 10 \\
    $c$ & 20 & 5 & 7 & 0 & 9 & 5 & 7 \\
    $d$ & 20 & 6 & 8 & 9 & 0 & 4 & 5 \\
    $e$ & 20 & 8 & 7 & 5 & 4 & 0 & 6 \\
    $f$ & 20 & 8 & 10 & 7 & 5 & 6 & 0
\end{tabular} $\Rightarrow$
\begin{tabular}{c|ccccccc}
    $d_U$ & $r$ & $a$ & $b$ & $c$ & $d$ & $e$ & $f$  \\ \hline
    $r$ & 0 & 20 & 20 & 20 & 20 & 20 & 20 \\
    $a$ & 20 & 0 & 3 & 7 & 8 & 8 & 8 \\ 
    $b$ & 20 & 3 & 0 & 7 & 8 & 8 & 8 \\
    $c$ & 20 & 7 & 7 & 0 & 8 & 8 & 8 \\
    $d$ & 20 & 8 & 8 & 8 & 0 & 4 & 6 \\
    $e$ & 20 & 8 & 8 & 8 & 4 & 0 & 6 \\
    $f$ & 20 & 8 & 8 & 8 & 6 & 6 & 0 
\end{tabular}

$d_T$ = \begin{tabular}{c|ccccccc}
    $d$ & $r$ & $a$ & $b$ & $c$ & $d$ & $e$ & $f$  \\ \hline
    $r$ & 0 & 9 & 9 & 8 & 10 & 8 & 7 \\
    $a$ & 9 & 0 & 1 & 4 & 7 & 5 & 4 \\ 
    $b$ & 9 & 1 & 0 & 4 & 7 & 5 & 4 \\
    $c$ & 8 & 4 & 4 & 0 & 6 & 4 & 3 \\
    $d$ & 10 & 7 & 7 & 6 & 0 & 2 & 3 \\
    $e$ & 8 & 5 & 5 & 4 & 2 & 0 & 1 \\
    $f$ & 7 & 4 & 4 & 3 & 3 & 1 & 0 
\end{tabular}
\begin{figure}[h]
\centering
\includegraphics[scale=0.5]{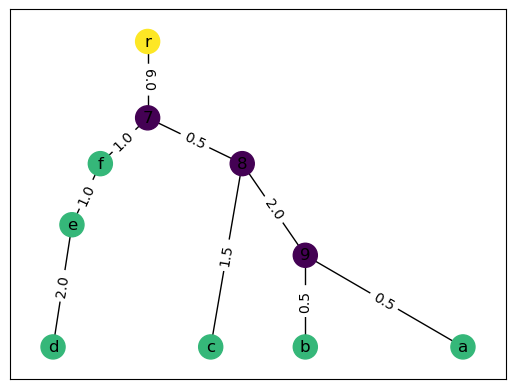}
\caption{This figure depicts how the output $d_T$ looks like. This tree structure in fact can easily be obtained by utilizing the structure of dendrogram we computed.}
\end{figure}
\end{center}

\subsection{Rooted tree structure}
We discuss how the tree structure of $d_T$, the output of \textsc{HCCRootedTreeFit}, is related to that of $d_U$. As fitting $d_U$ is an \emph{agglomerative} clustering procedure, one may consider its linkage matrix $Z$ with specified format in scipy library (See \href{https://docs.scipy.org/doc/scipy/reference/generated/scipy.cluster.hierarchy.linkage.html}{https://docs.scipy.org/doc/scipy/reference/generated/scipy.cluster.hierarchy.linkage.html}). At each step, we utilize the data to construct desired rooted tree by adding Steiner node. Detailed algorithm is as follows (For the simplicity, we will describe the procedure when $X = [n] = \{0,1,\cdots,n-1\}$).

\begin{algorithm}[h] 
    \caption{Construct rooted tree from linkage matrix $Z$}
    \begin{algorithmic}[1]
    \Procedure{\textsc{UltraLinkageFit}}{}
        \State \textbf{Input}: distance function $d$ 
        \State \textbf{Output}: linkage matrix $Z$ which depicts ultrametric fit $d_U$
    \EndProcedure

    \Procedure{\textsc{ConstructRootedTree}}{}
        \State \textbf{Input}: distance function $d$ on $\binom{[n]}{2}$ and base point $0 \leq r \leq n-1$
        \State \textbf{Output}: Rooted tree $(T,d_T)$ which fits $d$ and $d(r,x) = d_T(r,x)$ for all $0 \leq x \leq n-1$
        \State Define $m = \max_{0 \leq x \leq n-1} d(r,x)$, $c_r (x,y)= 2m - d(r,x) - d(r,y)$, and $\beta_x = 2(m - d(r,x)) (0 \leq x \leq n-1)$
        \State Define $d_r(x) \leftarrow d(r,x)$ for $0 \leq w \leq n-1$
        \State $Z = \textsc{UltraLinkageFit}(d+c_r)$
        \State For $t \in \{0,1,\cdots,n-2\}$:
        \State \quad $x, y, d \leftarrow Z[t,0], Z[t,1], Z[t,2]$
        \State \quad Add node $n+t$ in $T$ with $d_r(n+t) = m - d/2$
        \State \quad Add edge $(x,n+t)$ in $T$ with weight $d_r(x) - d_r(n+t)$
        \State \quad Add edge $(y,n+t)$ in $T$ with weight $d_r(y) - d_r(n+t)$
        \State (Collapse node $2n-2$ and the root node $r$)
        \State \textbf{return} $(T,d_T)$
    \EndProcedure
    \end{algorithmic}
\end{algorithm}
\subsection{Experiment details}
\label{app:exprdetails}

\subsubsection{\textsc{TreeRep}}

We adapted code from \textsc{TreeRep}, which is available at \href{https://github.com/rsonthal/TreeRep}{https://github.com/rsonthal/TreeRep}. We removed its dependency upon PyTorch and used the numpy library for two reasons. First, we have found that using the PyTorch library makes the implementation unnecessarily slower (especially for small input). Second, to achieve a fair comparison, we fixed the randomized seed using numpy. All edges with negative weight have been set to 0 after we have outputted the tree.

\subsubsection{\textsc{Gromov}}

We used the duality of Gromov's algorithm and SLHC (observed by \cite{chowdhury2016improved}) and simply used reduction method (from ultrametric fitting to rooted tree fitting). scipy library was used to run SLHC. This procedure only takes $O(n^2)$ time to compute.

\subsubsection{\textsc{NeighborJoin}}

For \textsc{NeighborJoin}, we implemented simple code. Note that it does not contain any heuristics on faster implementation. All edges with negative weight have been set to 0 after we have outputted the tree.

\subsubsection{Hardware setup}

All experiments have used shared-use computing resource. There are 4 CPU cores each with 5GiB memory. We executed all the code using a Jupyter notebook interface running Python 3.11.0 with numpy 1.24.2, scipy 1.10.0, and networkx 3.0. The operating system we used is Red Hat Enterprise Linux Server 7.9.

\subsubsection{Common data set}
For \textsc{C-elegan} and \textsc{CS phd}, we used pre-computed distance matrix from \href{https://github.com/rsonthal/TreeRep}{https://github.com/rsonthal/TreeRep}. For \textsc{CORA}, \textsc{Airport}, and \textsc{Disease}, we used \href{https://github.com/HazyResearch/hgcn}{https://github.com/HazyResearch/hgcn} and computed the shortest-path distance matrix of its largest connected component. The supplementary material includes the data input that we utilized.

\subsubsection{Synthetic data set}

To produce random synthetic tree-like hyperbolic metric from a given tree, we do the following.
\begin{enumerate}
    \item Pick two random vertices $v,w$ with $d_T(v,w) > 2$ (If not, then run 1 again.)
    \item Add edge $(v,w)$ with weight $d_T(v,w) - 2 \delta$ (for $\delta = 0.1$).
    \item We repeat until new $n_e = 500$ edges have been added.
    \item We compute shortest-path metric of the outputted sparse graph.
\end{enumerate}
We excluded pair with $d_T(v,w) \leq 2$ because: if it is 1, then the procedure simply shrinks the edge. if it is 2, then one can consider the procedure as just adding an Steiner node (of $v,w$ and their common neighbor), which does not pose \emph{hyperbolicity}.

\subsubsection{Comparison}
For the comparison, we have fixed the randomized seed using numpy library (from 0 to $n_{seed}$ - 1). For common data set experiments, we run $n_{seed} = 100$ times. For synthetic data set experiments, we run $n_{seed} = 50$ times.

\subsection{Detailed Comparisons}

\textbf{In comparison with \cite{cohen2022fitting}} As \cite{cohen2022fitting} presented an $O(1)$ approximation that minimizes the $\ell_1$ error, it can be deduced that the total error of its output is also bounded by $O(\operatorname{AvgHyp}(d) n^3)$, while the leading coefficient is not known. It would be interesting to analyze the performance of LP based algorithms including \cite{cohen2022fitting} if we have some \emph{tree-like} assumptions on input, such as $\operatorname{AvgHyp}(d)$ or the $\delta$-hyperbolicity.

\textbf{In comparison with \cite{chatterjee2021average}} We bounded the \emph{total error} of the tree metric in terms of the average hyperbolicity $\operatorname{AvgHyp}(d)$ and the size of our input space $X$. As the growth function we found is $O(n^3)$, it can be seen that the \emph{average error} bound would be $O(\operatorname{AvgHyp}(d) |X|)$, which is asymptotically tight (In other words, the dependency on $|X|$ is necessary).

While the setup \cite{chatterjee2021average} used is quite different, the result can be interpreted as follows: they bounded the \emph{expectation} of the distortion in terms of the average hyperbolicity $\operatorname{AvgHyp}$ \emph{and} the maximum bound on the Gromov product $b$ (or the diameter $D$). The specific growth function in terms of $\operatorname{AvgHyp}$ and $b$ is not known, or is hard to track. By slightly tweaking our asymptotic tightness example, it can also be seen that the dependency on $b$ should be necessary. It would also be interesting to find how tight the dependency in terms of $b$ is.

\textbf{In comparison with \textsc{QuadTree}} There are a number of tree fitting algorithms in many applications, with various kinds of inputs and outputs. One notable method is \textsc{QuadTree} (referred in, for example, \cite{samet1984quadtree}) which outputs a tree structure where each node reperesents a rectangular area of the domain. There are two major differences between \textsc{QuadTree} and ours: first, \textsc{QuadTree} needs to input data points (Euclidean), while ours only requires a distance matrix. Also, the main output of \textsc{QuadTree} is an utilized tree data structure, while ours (and other comparisons) focus on \emph{fitting} the metric.

We conducted a simple experiment on comparing \textsc{QuadTree} and others including ours. First, we uniformly sampled 500 points in $[0,1]^2 \subset \mathbb{R}^2$ and ran \textsc{QuadTree} algorithm as a baseline algorithm. While defining edge weights on the output of \textsc{QuadTree} is not clear, we used $2^{-d}$ for depth $d$ edges. Note that the input we sampled is $\emph{nowhere}$ hyperbolic so that it may not enjoy the advantages of fitting algorithms which use such geometric assumptions.

\begin{table}
\centering
\scalebox{1.0}{
    \begin{tabular}{lcc}
    \toprule
    \textbf{Error} & Average $\ell_1$ error & Max $\ell_\infty$ error  \\
    \midrule
    HCC & 0.260\tiny{$\pm0.058$} & 1.478\tiny{$\pm0.213$}\\ 
    Gromov & 0.255\tiny{$\pm0.041$} & \textbf{1.071\tiny{$\pm0.092$}} \\ 
    TR & 0.282\tiny{$\pm0.067$} & 1.648\tiny{$\pm0.316$} \\ 
    NJ & \textbf{0.224} & 1.430 \\ 
    QT & 1.123 & 1.943 \\
    \bottomrule
\end{tabular}}
\caption{Experiments on unit cube in $\mathbb{R}^2$. The data set \emph{does not} have some hyperbolicity feature so that the result kind may be different from the main experiments.}
\label{tab:euclideanexp}
\end{table}

There is a simple reason why \textsc{QuadTree} behaves worse when it comes to the metric fitting problem. \textsc{QuadTree} may distinguish two very close points across borders. Then its fitting distance between such points is very large; it in fact can be the diameter of the output tree, which is nearly 2 in this experiment. That kind of pairs pose a huge error. 

\end{document}